\newtheorem{theorem}{Theorem}
\newtheorem{lemma}{Lemma}
\newtheorem{definition}{Definition}
\newtheorem{example}{Example}
\def\BibTeX{{\rm B\kern-.05em{\sc i\kern-.025em b}\kern-.08em
    T\kern-.1667em\lower.7ex\hbox{E}\kern-.125emX}}
\begin{document}

\title{Sketches-based join size estimation under local differential privacy}

\author{
	\IEEEauthorblockN{
		Meifan Zhang,
		~Xin Liu,
		~Lihua Yin\IEEEauthorrefmark{1}\thanks{\IEEEauthorrefmark{1} Lihua Yin is the corresponding author.}}
    \IEEEauthorblockA{Cyberspace Institute of Advanced Technology, Guangzhou University, Guangzhou, 510006, China}
	\IEEEauthorblockA{zhangmf@gzhu.edu.cn, liuxin22@e.gzhu.edu.cn, yinlh@gzhu.edu.cn}
}

\maketitle

\begin{abstract}
  Join size estimation on sensitive data poses a risk of privacy leakage. Local differential privacy (LDP) is a solution to preserve privacy while collecting sensitive data, but
  it introduces significant noise when dealing with sensitive join attributes that have large domains. Employing probabilistic structures such as sketches is a way to handle large domains, but it leads to hash-collision errors. To achieve accurate estimations, it is necessary to reduce both the noise error and hash-collision error.
  To tackle the noise error caused by protecting sensitive join values with large domains, we introduce a novel algorithm called LDPJoinSketch for sketch-based join size estimation under LDP. Additionally, to address the inherent hash-collision errors in sketches under LDP, we propose an enhanced method called LDPJoinSketch+. It utilizes a frequency-aware perturbation mechanism that effectively separates high-frequency and low-frequency items without compromising privacy. The proposed methods satisfy LDP, and the estimation error is bounded.
Experimental results show that our method outperforms existing methods, effectively enhancing the accuracy of join size estimation under LDP.
\end{abstract}

\begin{IEEEkeywords}
Local differential privacy, Join query, Sketch
\end{IEEEkeywords}

\section{Introduction}\label{sec:introduction}
Join size estimation, also known as inner product estimation, represents a fundamental statistical problem with applications spanning various domains, including query optimization~\cite{DBLP:conf/sigmod/IzenovDRS21}, similarity computation~\cite{DBLP:conf/kdd/WangQZZWLG19}, correlation computation~\cite{DBLP:conf/sigmod/SantosBCMF21}, and dataset discovery~\cite{DBLP:conf/pods/BessaDFMMSZ23}.
Despite numerous research efforts dedicated to this issue, the data collection process for join size estimation poses inherent privacy risks.
Local differential privacy (LDP)~\cite{Kasiviswanathan2008WhatCW} is a solution to privacy-preserving data collection~\cite{Wang2017LocallyDP,DBLP:journals/isci/ZhangLY23, Duchi2016MinimaxOP}, gaining traction in real-world applications at companies including Apple~\cite{2017LearningWP}, Google~\cite{Erlingsson2014RAPPORRA, Fanti2015BuildingAR}, and Microsoft~\cite{Ding2017CollectingTD}. However, most previous works have primarily focused on basic statistics like frequency and mean estimation, offering limited solutions for more intricate statistical tasks such as join aggregations. 
In a specific study~\cite{Xu2020CollectingAA}, the authors explored join operations on two private data sources. However, it is important to highlight that their approach assumed non-private values for the join attributes.

Join size estimation under LDP holds significance in various scenarios.
  (1) Private similarity computation for data valuation and pricing.
  Estimating the similarity of two streams is a basic application of join size estimation. Private similarity computation is essential in data markets to assess the value of private data from different sources for analysis or learning tasks~\cite{DBLP:journals/tifs/ChristensenPP23}.
  (2) Private correlation computation for dataset search and discovery.
  Join size estimation plays a role in dataset search by identifying relevant factors through joinable columns and computing column correlation~\cite{DBLP:conf/pods/BessaDFMMSZ23}. Private join size estimation benefits relevant industries such as hospitals and genetics research companies in assessing the relevance of private data before collaborating.
  (3) Private approximate query processing.
  Approximate Query Processing (AQP) can enhance response times by returning approximate results. Applying AQP to private data presents a promising approach for efficient and private data analysis~\cite{DBLP:journals/pvldb/BaterP0WR20, DBLP:conf/bigdataconf/OckLK23}, given that exact query answers are unattainable under DP or LDP. Join size estimation, as a specific task of AQP, holds potential to be extended to handle more general AQP tasks, including other join aggregations and predicates, under LDP.

Join size estimation on private join attribute values under LDP is a complex task. \textbf{A common LDP setting} has two main kinds of participants including a large number of users on client-side and one untrusted aggregator on server-side. An LDP workflow can be broken down into two steps: Each user locally \textbf{perturbs} its sensitive value by LDP mechanisms and sends the perturbed value to the aggregator. The server-side then \textbf{aggregates} the perturbed values for subsequent specific analysis.
Join size estimation under LDP comes with several challenges.
\textbf{Challenge I, the join attribute values are sensitive and have a large domain.} It is difficult to perturb the private join attribute values from two data sources while preserving the join size.  Directly perturbing the join values according to LDP mechanisms, such as k-RR~\cite{Wang2017LocallyDP} and FLH~\cite{Cormode2021FrequencyEU}, can introduce significant noise. To tackle the challenge of large domain, previous approaches like Apple-HCMS~\cite{2017LearningWP} and RAPPOR~\cite{Erlingsson2014RAPPORRA} have turned to probabilistic data structures, such as sketches and bloom filters, for estimation. However, these structures reduce the domain while introducing hash-collision errors.
\textbf{Challenge II, reducing the hash-collision error of probabilistic structures while preserving privacy.} Without considering the privacy, the main idea to address hash-collision error is to compute the join size of high-frequency items and low-frequency items separately. This is because collisions involving high-frequency items tend to introduce more significant errors. However, separating high-frequency and low-frequency items while preserving LDP is also not simple. This is because the frequency property of each value is also private, making it a complex problem to tackle.


\textbf{Main idea.} To tackle the challenge I, we propose the LDPJoinSketch, whose main idea is to modify the fast-AGMS sketch to an LDP version, which constructs fast-AGMS sketch for join size estimation in the server-side based on the perturbed values collected from each individual users in the client-side. To tackle the challenge II, we propose LDPJoinSketch+ to reduce the hash-collision error by differently encoding the low-frequency and high-frequency items while satisfying LDP.

\begin{figure}
  \centering
  \includegraphics[width=0.48\textwidth]{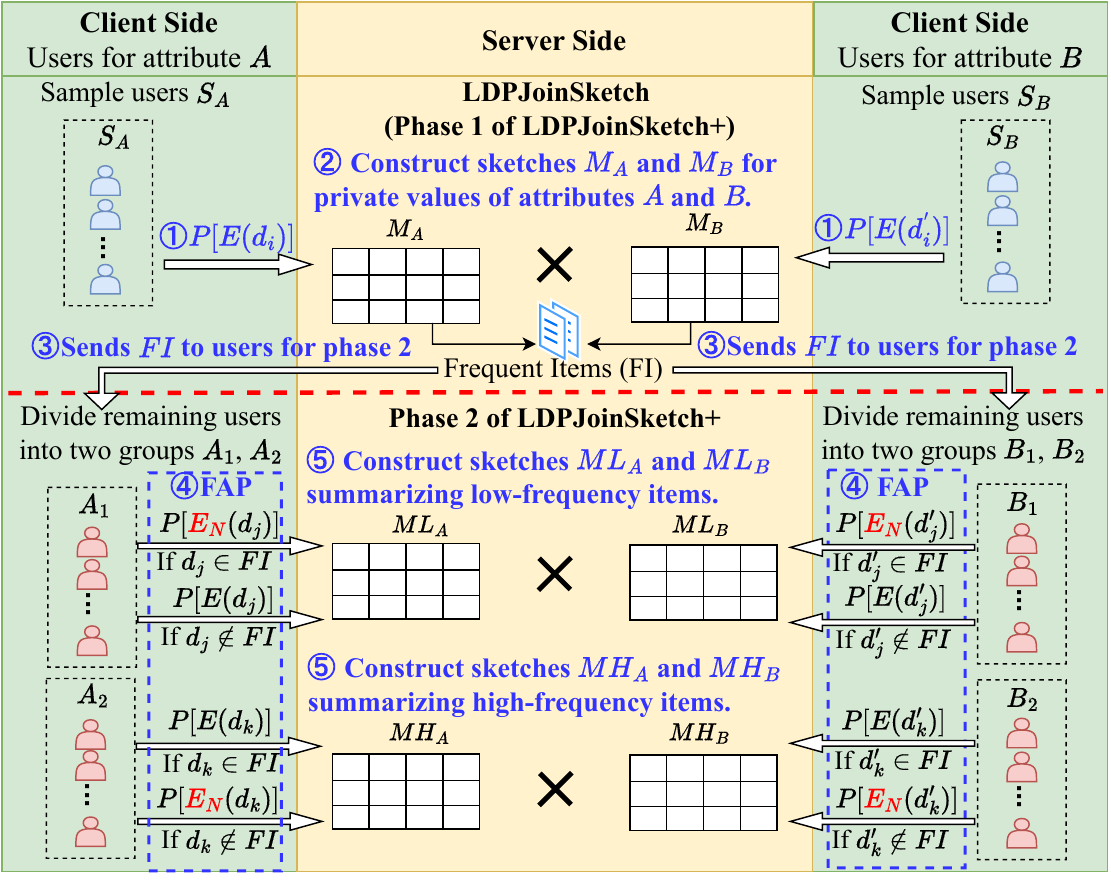}
  \caption{Sketches-based join size estimation under LDP.}
  \label{Fig:intro_framework}
\end{figure}

Fig.~\ref{Fig:intro_framework} shows the workflow of the two proposed methods.
\textbf{The workflow of LDPJoinSketch} adheres to the common LDP pattern. In step $\textcircled{1}$, each user on the client-side encodes and perturbs its sensitive join attribute value $d_i$ and  transmits the processed value $P(E(d_i))$ to the server, here $P(E(d_i))$ denotes the encoding and perturbation of value $d_i$. In step $\textcircled{2}$, the server receives all perturbed values and constructs sketches $M_A$ and $M_B$ for attributes $A$ and $B$, respectively. Ultimately, the join size can be estimated from the product $M_A\times M_B$.
\textbf{The workflow of the two-phase algorithm LDPJoinSketch+} can be summarized as follows.
In phase 1, the server constructs the LDPJoinSketches using steps $\textcircled{1}$ and $\textcircled{2}$ based on the values from some sampled users. In step $\textcircled{3}$, the server computes the frequent items ($FI$) based on the sketches and sends this set to the remaining users.
In phase 2, for each attribute, the remaining users are divided into two groups: one for constructing a sketch for high-frequency items in $FI$ and another for low-frequency items. The $\textcircled{4}$ step involves encoding and perturbing each private value from  the two groups based on a Frequency-Aware-Perturbation (FAP) mechanism we proposed.  This mechanism can encode high-frequency and low-frequency items differently while satisfying LDP.  Finally, in step $\textcircled{5}$ the server separately constructs sketches to summarize the high-frequency items and low-frequency items.
Taking group $A_1$ for constructing the sketch $ML_A$ aiming at summarizing low-frequency items of attribute $A$ as an example, here low-frequency items $d_j\notin FI$ are target values and high-frequency ones $d_j\in FI$ are non-target values. FAP encodes each target value in the same way as $\textcircled{1}$, and encodes non-target items differently using a method denoted as $E_N$ in the figure. FAP ensures that the contributions of all non-target values can be removed from sketch $ML_A$, thereby reducing hash collisions based on the sketches separately summarizing high-frequency and low-frequency items.

The main contribution of this work:
\begin{itemize}
  \item We present a sketch-based join size estimation method called \textbf{LDPJoinSketch} for data with sensitive join values that have large domains. We prove that it satisfies LDP, and the estimation error is limited.
  \item We develop a \textbf{Frequency-Aware Perturbation (FAP)} mechanism that distinguishes between high-frequency and low-frequency items while ensuring compliance with LDP. Building upon FAP, we propose an enhanced method named \textbf{LDPJoinSketch+}, which reduces hash-collision errors while upholding privacy preservation.
  \item We conduct extensive experiments on synthetic and real-world datasets. The experimental results show that our methods outperforms state-of-the-art LDP mechanisms.
  \end{itemize}

\section{Related work}\label{sec:related_work}
\textbf{Join size estimation.}
The join size of two data streams is a crucial statistic in data stream analysis, representing the inner product of their frequency vectors. We mainly introduce the sketch-based join size estimation most relevant to this article, while the methods based on sampling and histograms have been well elaborated in previous works~\cite{Ganguly1996BifocalSF,Estan2006EndbiasedSF,Ioannidis1993OptimalHF}.
Sketch-based methods such as AGMS sketch~\cite{Alon1999TrackingJA} and fast-AGMS sketch~\cite{Cormode2005SketchingST} adopt compact probabilistic data structures to summarize data, and approximately compute the statistics~\cite{Chen2023DeepLB} such as frequency estimation and inner-product. 
To mitigate the error caused by hash collisions, studies such as Skimmed sketch~\cite{Ganguly2004ProcessingDJ}, Red sketch~\cite{Ganguly2005PracticalAF} and JoinSketch~\cite{Wang2023JoinSketchAS} have focused on separating high-frequency and low-frequency items. 

\textbf{Local differential privacy} (LDP) is a promising privacy preserving framework widely used in the field of private data releasing~\cite{Aydre2021DifferentiallyPQ}, data mining~\cite{Wang2018LocallyDP}, machine learning~\cite{Triastcyn2019BayesianDP} and social network analysis~\cite{Jiang2023ApplicationsOD}. 

Since no method has been proposed for estimating join size or inner product under LDP, we will instead focus on discussing the relevant task of frequency estimation under LDP. A basic LDP mechanism k-ary Randomized Response (k-RR)~\cite{Wang2017LocallyDP} perturbs the original value to any other values within the domain with the same probability. To handle large domain and reduce communication cost, Apple-HCMS~\cite{2017LearningWP} uses sketches to encode values before perturbation with hadamard transform. FLH~\cite{Cormode2021FrequencyEU} reduces the domain size by local hashing, which sacrifices accuracy to achieve computational gains.
These methods can be employed for estimating join sizes by calculating the frequency of each candidate join value. However, they suffer from both cumulative errors and efficiency issues, particularly when the join attribute has a large domain. In contrast, our LDPJoinSketch, which is based on the fast-AGMS sketch, estimates the join size by computing the product of sketches, rather than by accumulating the join size of each join value.

\section{Preliminaries}\label{sec:Preliminaries}

\subsection{AGMS sketch and Fast-AGMS sketch}\label{sec:fast-AGMS}
{AGMS sketch}~\cite{Alon1999TrackingJA} uses a counter $M_A$ to summarize the values of a data stream $A$. For every item $d$ in $A$, it adds $\xi(d)$ to the counter $M_A$, thus $M_A=\sum_{d \in A} \xi(d)$, where $\xi$ is a 4-wise independent hash function mapping $d$ to \{+1,-1\}. The join size of two data streams $A$ and $B$ can be estimated based on sketches $M_A$ and $M_B$ constructed with the same hash function $\xi$. The estimation is $Est = M_A \cdot M_B$. The accuracy can be enhanced by computing the median of multiple independent estimators.

{Fast-AGMS sketch}~\cite{Cormode2005SketchingST} is proposed to improve the construction efficiency of AGMS sketch. It adds a hash function $h$ to determine the location for an update, which avoids each update touching every counter as the AGMS does. A fast-AGMS sketch $M(k,m)$ is an array of counters of $k$ lines and $m$ columns. For each $j \in [k]$, two hash functions, $h_j$ and $\xi_j$, are employed for the $j$-th line of the sketch. Here, $h_j$ is responsible for determining which counter to increment, while $\xi_j$ is a member of a family of four-wise independent binary random variables uniformly distributed in $\{-1, +1\}$.
For each value $d$, it increases the counter with indices $[j,h_j(d)]$ by $M[j,h_j(d)]+=\xi_j(d)$. The join size $A\Join B$ can be estimated according to the following equation.
\begin{equation}
  Est=M_A\cdot M_B= \mathop{median}\limits_{i\in [1,k]}\{\sum_{j=1}^{m} M_A[i,j]\times M_B[i,j]\},
\end{equation}
where $M_A$ and $M_B$ are fast-AGMS sketches with parameters $(m,k)$ for $A$ and $B$, respectively.
The estimation error is limited as $\Pr[|M_A\cdot M_B-|A\Join B||>\frac{1}{\sqrt{m}}\|A\|_1\|B\|_1]\le\delta$, where $k=log(1/\delta)$, and $\|A\|_1$, $\|B\|_1$ are the number of values of attributes $A$ and $B$.


\subsection{Local differential privacy}
Local differential privacy (LDP)~\cite{Kasiviswanathan2008WhatCW} extends the notion of differential privacy (DP)~\cite{Dwork2006CalibratingNT} to scenarios where the aggregator cannot be trusted and
each data owner locally perturbs its data before sharing it with the server.

\begin{definition}
($\epsilon$-local differential privacy). A local randomized privacy algorithm $\mathcal{R}$ is $\epsilon$-locally differentially private($\epsilon$-LDP), where $\epsilon \geqslant 0$, iff for all pairs of inputs $x,x'\in \mathcal{D}$, we have that
\begin{equation}\label{LDP}
  \forall y \in \mathcal{Y}: Pr[\mathcal{R}(x)=y] \leqslant e^{\epsilon}\cdot Pr[\mathcal{R}(x')=y]
\end{equation}
where $\mathcal{Y}$ denotes the set of all possible outputs, and the privacy budget $\epsilon$ measures the level of privacy protection.
\end{definition}

\subsection{Hadamard Mechanism}
Hadamard Mechanism~\cite{2017LearningWP} utilizes hadamard transform to improve efficiency. The Hadamard Transform of a vector is obtained via multiplying with the hadamard matrix $H_m$. Here $H_m$ denotes the hadamard matrix of order $m$, a special type of square matrix where each element has only two possible values: +1 or -1. $H_m$ is defined recursively as
  $H_{m} =
  \begin{bmatrix}
    H_{m/2} & H_{m/2} \\
    H_{m/2} & -H_{m/2}
  \end{bmatrix}$, and $H_1=[1]$.
Hadamard transform spreads information from a sparse vector (1 in a single location across multiple dimensions), so that when we sample a bit from this dense set we still have sufficiently strong signal about the original vector.

\section{LDPJoinSketch}\label{sec:LDPJoinSketch}
We propose LDPJoinSketch, a local differentially private sketch for join size estimation. 
The LDPJoinSketch protocol can be divided into client-side and server-side algorithms. The goal of these two algorithms is as follows. \textbf{Client-side}: To perturb the private value and transmit the perturbed value along with the indices that determine which counter of the sketch in the server the perturbed value should be added to.
\textbf{Server-side}: To construct the sketch by adding each perturbed value to the counter with the given indices, and to compute the join size based on the constructed sketches.

\subsection{Client-side of LDPJoinSketch}\label{Sec:LDPJoinSketch-client}
The goal of the client-side of LDPJoinSketch is to encode and perturb each private value, ensuring that the output satisfies LDP and is safe to be transmitted to the server.

\begin{algorithm}
  \caption{Client-Side of LDPJoinSketch}\label{Algorithm:Client-Side}
  \hspace*{0.02in}{\textbf{Input}: join value $d\in D$, privacy budget $\epsilon$, sketch parameters ($k$, $m$), hash function pairs $\{(h_0,\xi_0),...,(h_{k-1},\xi_{k-1})\}$}\\
  \hspace*{0.02in}{\textbf{Output}: perturbed value $y$, indices $(j,l)$}
  \begin{algorithmic}[1]
    \State Sample $j$ uniformly at random from [k], sample $l$ uniformly at random from [m].
    \State Initialize a vector $v\leftarrow \{0\}^{1\times m}$
    \State Encode: $v[h_j(d)]\leftarrow \xi_j(d)$
    \State $w\leftarrow  v\times H_m$ \Comment{$H_m$: hadamard matrix of order $m$}
    \State Sample $b\in\{-1,+1\}$, which is -1 with probability $\frac{1}{e^{\epsilon}+1}$.
    \State Perturb: $y\leftarrow b\cdot w[l]$.
    \State \textbf{return}: $y$, ($j$, $l$)
  \end{algorithmic}
\end{algorithm}
The pseudo-code of the client-side is shown in Algorithm~\ref{Algorithm:Client-Side}. Given a private join value $d$, the privacy budget $\epsilon$, the sketch parameters $(k, m)$ indicating the number of lines and columns of the sketch in the server, the algorithm first samples a line index $j$ uniformly from $k$ lines and samples an index $l$ uniformly from $m$ columns (line 1). It then initializes a vector of size ($1\times m$) with all zeros. It then encodes the vector $v$ to be $\xi_j(d)$ in position $h_j(d)$ (line 3). To reduce communication cost by sampling while maintaining sufficiently strong signal about the original vector, we adopt hadamard transform before sampling as Apple-HCMS~\cite{2017LearningWP} does. The algorithm generates $w=v\times H_m$ which transforms $v$ with only one non-zero member $\xi_j(d)$ to a vector $w\in \{-\xi_j(d), \xi_j(d)\}^{1\times m}$. It then samples a bit $w[l]$ from the vector $w$, where $l\sim [m]$, and perturbs $w[l]$ by multiplying (-1) with the probability of $\frac{1}{e^{\epsilon}+1}$ (line 5-6). The client-side of LDPJoinSketch is almost the same as the client-side of Apple-HCMS~\cite{2017LearningWP}, and the only difference is the encoding method in line 3. We encode each value based on the fast-AGMS sketch and set  $v[h_j(d)]\leftarrow \xi_j(d)$, while Apple-HCMS encodes each item $d$ based on the Count Mean Sketch and sets $v[h_j(d)]\leftarrow 1$.


\begin{theorem}
  LDPJoinSketch satisfies $\epsilon$-LDP.
\end{theorem}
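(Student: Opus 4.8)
\emph{Proof proposal.} The plan is to verify the LDP inequality directly from the definition: for every pair of inputs $d,d'\in D$ and every possible output $(y,j,l)$ of Algorithm~\ref{Algorithm:Client-Side}, I will show that $\Pr[\mathcal{R}(d)=(y,j,l)]\le e^{\epsilon}\Pr[\mathcal{R}(d')=(y,j,l)]$. The first step is to factor the output law as $\Pr[\mathcal{R}(d)=(y,j,l)]=\Pr[\text{indices}=(j,l)]\cdot\Pr[\,y\mid j,l,d\,]$. Since $j$ and $l$ are drawn uniformly and independently of the input in line~1, the factor $\Pr[\text{indices}=(j,l)]=\tfrac{1}{km}$ does not depend on $d$, so it cancels in the ratio, and it suffices to bound $\Pr[\,y\mid j,l,d\,]/\Pr[\,y\mid j,l,d'\,]$.

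Next I would compute $\Pr[\,y\mid j,l,d\,]$ explicitly. Conditioned on $j$, the vector $v$ built in lines 2--3 has exactly one nonzero coordinate, namely $v[h_j(d)]=\xi_j(d)\in\{-1,+1\}$. Hence the sampled Hadamard coefficient is $w[l]=(vH_m)[l]=\xi_j(d)\,H_m[h_j(d),l]$, a product of two $\pm1$ quantities, so $w[l]\in\{-1,+1\}$ and is a deterministic function of $(d,j,l)$. The only remaining randomness is the sign bit $b$ from line~5, with $b=-1$ having probability $\tfrac{1}{e^{\epsilon}+1}$ and $b=+1$ having probability $\tfrac{e^{\epsilon}}{e^{\epsilon}+1}$. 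Therefore $y=b\cdot w[l]\in\{-1,+1\}$, and for any target $y\in\{-1,+1\}$ we get $\Pr[\,y\mid j,l,d\,]=\tfrac{e^{\epsilon}}{e^{\epsilon}+1}$ when $y=w[l]$ and $\Pr[\,y\mid j,l,d\,]=\tfrac{1}{e^{\epsilon}+1}$ otherwise. The key consequence is that $\Pr[\,y\mid j,l,d\,]$ always equals one of the two values $\tfrac{1}{e^{\epsilon}+1}$ or $\tfrac{e^{\epsilon}}{e^{\epsilon}+1}$, irrespective of which input $d$ produced it.

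Finally, since both $\Pr[\,y\mid j,l,d\,]$ and $\Pr[\,y\mid j,l,d'\,]$ lie in $\{\tfrac{1}{e^{\epsilon}+1},\tfrac{e^{\epsilon}}{e^{\epsilon}+1}\}$ and are strictly positive, their ratio is at most $\big(\tfrac{e^{\epsilon}}{e^{\epsilon}+1}\big)\big/\big(\tfrac{1}{e^{\epsilon}+1}\big)=e^{\epsilon}$. Reinserting the common index factor $\tfrac{1}{km}$ yields $\Pr[\mathcal{R}(d)=(y,j,l)]\le e^{\epsilon}\Pr[\mathcal{R}(d')=(y,j,l)]$ for every output, which is exactly inequality~(\ref{LDP}). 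The step I expect to be the crux — and the one I would state most carefully — is the claim that $w[l]$ is always exactly $\pm1$: this hinges on $v$ having a single nonzero coordinate of magnitude one (from the fast-AGMS encoding $v[h_j(d)]\leftarrow\xi_j(d)$) together with every entry of $H_m$ being $\pm1$, and it is precisely what collapses the per-coordinate perturbation to the standard binary randomized-response mechanism with the tight multiplicative gap $e^{\epsilon}$, rather than an input-dependent one. I would also remark that the argument is insensitive to whether one encodes $\xi_j(d)$ (as here) or $1$ (as in Apple-HCMS), so the privacy guarantee carries over unchanged from that mechanism.
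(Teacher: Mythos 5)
Your proposal is correct and follows essentially the same route as the paper's own proof: cancel the uniform index probabilities $\Pr[J=j]\Pr[L=l]$, observe that the sampled Hadamard coefficient $w[l]=\xi_j(d)H_m[h_j(d),l]$ is deterministically $\pm1$, and conclude that the conditional law of $y$ is the binary randomized response with probabilities $\tfrac{e^{\epsilon}}{e^{\epsilon}+1}$ and $\tfrac{1}{e^{\epsilon}+1}$, giving the ratio bound $e^{\pm\epsilon}$. Your writeup is in fact slightly more explicit than the paper's at the key step (making $w[l]\in\{-1,+1\}$ and the two-valued conditional probability fully precise), but there is no substantive difference in approach.
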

\begin{proof}
  Let $v$ and $v'$ be the encodings of $d$ and $d'$, respectively. According to Algorithm~\ref{Algorithm:Client-Side}, the differences between $v$ and $v'$ are on two bits, i.e., $v[h_j(d)]=\xi_j(d)$ and $v'[h_j(d')]=\xi_j(d')$. Let $J$ be the random variable selected uniformly from $[k]$, and $L$ be the random variable selected uniformly from $[m]$. Let $B$ be the random variable for the random bit $b$, where $\Pr[B=1]=\frac{e^\epsilon}{1+e^\epsilon}$ and $\Pr[B=-1]=\frac{1}{1+e^\epsilon}$. Let $w=v\times H_m$ and $w'=v'\times H_m$.
  We denote Algorithm~\ref{Algorithm:Client-Side} by $\mathcal{A}$ in the followings.
    \begin{align}\label{Equation:LDP-Proof1}
      &\frac{\Pr[\mathcal{A}(d)=(y,j,l)]}{\Pr[\mathcal{A}(d')=(y,j,l)]}\nonumber\\
      &=\frac{\Pr[B\cdot w[l]=y|J=j,L=l]\Pr[J=j]\Pr[L=l]}{\Pr[B\cdot w'[l]=y|J=j,L=l]\Pr[J=j]\Pr[L=l]}\nonumber\\
      &=\frac{\Pr[B\cdot H_m[h_j(d),l]\cdot \xi_j(d)=y|J=j]}{\Pr[B\cdot H_m[h_j(d'),l]\cdot \xi_j(d')=y|J=j]}
    \end{align}
  Since $\xi_j(d)$, $\xi_j(d')\in \{-1,1\}$, the samples from hadamard transform $(H_m[h_j(d),l]\cdot \xi_j(d))$, $(H_m[h_j(d'),l]\cdot \xi_j(d'))\in \{-1,1\}$. In addition, as $\Pr[B=1]=\frac{e^\epsilon}{1+e^\epsilon}$ and $\Pr[B=-1]=\frac{1}{1+e^\epsilon}$, the probability of obtaining the same output with different inputs $d$ and $d'$ is similar:
  \begin{equation}\label{Equantion:LDP-Proof2}
    e^{-\epsilon}\le\frac{\Pr[\mathcal{A}(d)=(y,j,l)]}{\Pr[\mathcal{A}(d')=(y,j,l)]}\le e^{\epsilon}
  \end{equation}
  Thus, LDPJoinSketch satisfies $\epsilon$-LDP.
\end{proof}

\subsection{Server-side of LDPJoinSketch}\label{sec:LDPJoinSketch-estimation}
After receiving the perturbed values, the server has two tasks: (1) to construct a sketch for each join attribute, and (2) to compute the join size based on the sketches.
\begin{algorithm}
  \caption{LDPJoinSketch-construction (PriSK)}\label{Algorithm:Server-Side}
  \hspace*{0.00in}{\textbf{Input}: $\{(y^{(1)},j^{(1)},l^{(1)})$, ...,$(y^{(n)},j^{(n)},l^{(n)})\}$, $\epsilon$, ($k$, $m$)}\\
  \hspace*{0.00in}{\textbf{Output}: private sketch $M$}
  \begin{algorithmic}[1]
    \State Initialize $M\in\{0\}^{k\times m}$.
    \State Set $c_\epsilon= \frac{e^\epsilon+1}{e^\epsilon-1}$
    \For {each $i\in [n]$}
    \State $M[j^{(i)},l^{(i)}]\leftarrow M[j^{(i)},l^{(i)}]+ k\cdot c_\epsilon \cdot y^{(i)}$
    \EndFor
    \State $M\leftarrow M\times H_m^T$
    \State \textbf{return: $M$}
  \end{algorithmic}
\end{algorithm}

\noindent\textbf{Construction of LDPJoinSketch.}
The pseudo-code of the construction algorithm is shown in Algorithm~\ref{Algorithm:Server-Side}. For each input $(y^{(i)},j^{(i)},l^{(i)})$ from the client-side, it multiplies $y^{(i)}$ with $k\cdot c_\epsilon$, and adds it to the counter at indices $[j^{(i)},l^{(i)}]$ (line 4). The scaling factor $k\cdot c_\epsilon$ (line 3) is used to debias the sketch, as both sampling an index $j$ and the perturbation in Algorithm 1 introduce bias: $\mathbb{E}[B]=\frac{1}{c_\epsilon}$ and $\mathbb{E}[\mathbbm{1}\{J=j\}]=1/k$. Finally, it multiplies the sketch $M$ with the hadamard matrix to transform the sketch back (line 6).

\begin{figure}[htbp]
  \centering
  \includegraphics[scale=0.55]{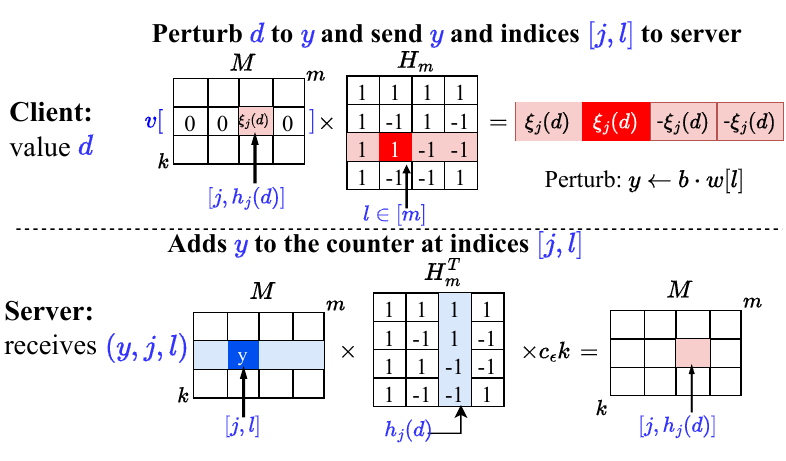}
  \caption{Example of LDPJoinSketch\vspace{-0.5cm}}
  \label{Fig:Example_of_LDPJoinSketch}
\end{figure}

\begin{example}\label{example1}
  We use an example in Fig.~\ref{Fig:Example_of_LDPJoinSketch} to demonstrate the working  of client-side and server-side in LDPJoinSketch. Suppose $k=3$, $m=4$ and  the private data in the client is $d$.

  \noindent\textbf{Client-side}: Let $h_j(d)=2$, where $j\in[k]$, the client encodes $d$ as $v=[0,0,\xi_j(d),0]$, where $v[h_j(d)]=\xi_j(d)$. It adopts hadamard transform to convert the only non-zero signal $\xi_j(d)$ to a vector  $v\times  H_m=[\xi_j(d),\xi_j(d),-\xi_j(d),-\xi_j(d)]$, and perturbs a random bit $w[l]$ of $w$ as $y=b\cdot w[l]$, where $l\in [m]$. Finally, the client sends $(y,j,l)$ to the server.

  \noindent\textbf{Server-side}: Construct the sketch by adding $y$ to the counter at indices $[j,l]$. It then multiplies the sketch with $H_m^T$ and a scale $c_\epsilon k$ to debias the sketch. We prove that the expectation of the contribution of $d$ to $M[j,h_j(d)]$ is still $\xi_j(d)$.
\end{example}

\noindent\textbf{Join size estimation.}
Based on the LDPJoinSketches $M_A$ and $M_B$ for attributes $A$ and $B$, the server estimates the join size:
\begin{equation}
  Est(|A\!\!\Join \!\!B|)\!=\!\!M_A M_B\!\!=\!\mathop{median}\limits_{j\in [1,k]}\{\sum_{x=1}^{m}\!\! M_A[j,x]M_B[j,x]\}
\end{equation}


\subsection{Analysis of estimation error}\label{sec:LDPJoinSketch-error}
We follow three steps to compute the error of join size estimation based on LDPJoinSketch.
Step 1, we analyze the contribution of each value $d$ to the sketch.
Step 2, we prove that each estimator  $M_A[j]\cdot M_B[j]$ ($j\in [k]$)  is an unbiased estimation of $|A\Join B|$. Additionally, we compute the variance of each estimator.
Step 3, we take the median of $k$ estimators as the final estimation and calculate the error bound for it.
For the convenience of the reader, we provide a notation table (Table~\ref{Tab:notation}) containing symbols in the following proofs.

\begin{table}[hptb]
  \centering
  \caption{The notations of symbols.}
  \label{Tab:notation}
  \begin{tabular}{ll}
  \hline
  Notations                                                       & Description \\ \hline
  $d$                                                             & private join attribute value\\
  $M$                                                             & the sketch \\
  $(k,m)$                                                             & the number of lines and columns of a sketch\\
  $h_{j}$                                                         & hash function for $j$th line, $h_j(x)\rightarrow [0,m-1]$\\
  $\xi_{j}$                                                       & 4-wise
  independent hash function $\xi_j(x)\rightarrow \{-1,+1\}$\\
  $f_A(d)$                                                      & true frequency of value $d$ of attribute $A$\\
  $M(j,h_j(d))^{(i)}$                                             & contribution of data  $d^{(i)}$ to the counter at $M[j,h_j(d)]$\\                                                                  
  $F_1(X)$                                                       & number of values of attribute $X$,i.e., $\sum_{d\in X}f_X(d)$\\
  $F_2(X)$                                                       & second frequency moment $\sum_{d\in X}(f_X(d))^2$ \\
  $L,R$                                                          & the random variables that select uniformly from $[m]$ \\
\hline
  \end{tabular}
  \end{table}

\noindent\textbf{Step 1. The contribution of each private value to the sketch.}
\begin{definition}
  The contribution of the $i$th private data $d^{(i)}$ to the counter at indices $(j,h_j(d))$ of the LDPJoinSketch $M$ sized $(k,m)$ can be written as $M(j,h_j(d))^{(i)}$.
  $M(j,h_j(d))^{(i)}=c_\epsilon k B\xi_j(d^{(i)}) H_m[h_j(d^{(i)}),L] H_m[L,h_j(d)]\mathbbm{1}\{J=j\}$, where $c_\epsilon=\frac{e^{\epsilon}+1}{e^{\epsilon}-1}$, and $\mathbbm{1}\{J=j\}$ is 0 when $J\neq j$. Here, $J\sim U[k]$, $L\sim U[m]$\footnote{$X\sim U[n]$ denotes $X$ is a variable chosen uniformly at random from [n]}, and $B\in\{-1,+1\}$ is 1 with probability $\frac{e^{\epsilon}}{1+e^{\epsilon}}$.
\end{definition}


In Theorem~\ref{Lemma:expectation_of_one_entry}, we give the expectation of $M(j,h_j(d))^{(i)}$.
\begin{theorem}\label{Lemma:expectation_of_one_entry}
  The contribution of a value $d^{(i)}$ to $M[j,h_j(d)]$:
  $\mathbb{E}[M(j,h_j(d))^{(i)}]\!=\!\xi_j(d)\cdot \mathbbm{1}\{d^{(i)}=d\}\! + \!\xi_j(d^{(i)})\frac{1}{m}\cdot \mathbbm{1}\{d^{(i)}\!\neq \!d\}$.
\end{theorem}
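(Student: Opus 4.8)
The plan is to compute the expectation of
$M(j,h_j(d))^{(i)} = c_\epsilon k\, B\, \xi_j(d^{(i)})\, H_m[h_j(d^{(i)}),L]\, H_m[L,h_j(d)]\, \mathbbm{1}\{J=j\}$
by exploiting the independence of the random sources $B$, $J$, and $L$. First I would pull out the factors that are deterministic given the input, namely $c_\epsilon k$ and $\xi_j(d^{(i)})$, and use independence to factor the expectation as
$c_\epsilon k\, \xi_j(d^{(i)})\, \mathbb{E}[B]\, \mathbb{E}[\mathbbm{1}\{J=j\}]\, \mathbb{E}_L\!\left[H_m[h_j(d^{(i)}),L]\, H_m[L,h_j(d)]\right]$.
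By the bias of the perturbation bit, $\mathbb{E}[B] = \frac{e^\epsilon-1}{e^\epsilon+1} = 1/c_\epsilon$, and since $J\sim U[k]$ we have $\mathbb{E}[\mathbbm{1}\{J=j\}] = 1/k$; these two factors cancel the scaling constant $c_\epsilon k$ exactly, which is precisely why that debiasing factor was chosen in Algorithm~\ref{Algorithm:Server-Side}.

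The remaining work is to evaluate $\mathbb{E}_L\!\left[H_m[h_j(d^{(i)}),L]\, H_m[L,h_j(d)]\right]$, where $L$ is uniform on $[m]$. Writing this as $\frac{1}{m}\sum_{L=0}^{m-1} H_m[h_j(d^{(i)}),L]\, H_m[L,h_j(d)]$ and using the fact that $H_m$ is symmetric, this sum is exactly the $(h_j(d^{(i)}), h_j(d))$ entry of $H_m H_m^T = H_m^2$. Since $H_m$ is a (scaled) Hadamard matrix with $H_m H_m^T = m I$, the sum equals $m$ when $h_j(d^{(i)}) = h_j(d)$ and $0$ otherwise, so $\mathbb{E}_L[\,\cdot\,] = \mathbbm{1}\{h_j(d^{(i)}) = h_j(d)\}$. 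Substituting back gives $\mathbb{E}[M(j,h_j(d))^{(i)}] = \xi_j(d^{(i)})\, \mathbbm{1}\{h_j(d^{(i)}) = h_j(d)\}$.

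Finally I would split the event $\{h_j(d^{(i)}) = h_j(d)\}$ into the case $d^{(i)} = d$ and the case $d^{(i)} \neq d$. When $d^{(i)} = d$ the indicator is always $1$ and $\xi_j(d^{(i)}) = \xi_j(d)$, giving the term $\xi_j(d)\,\mathbbm{1}\{d^{(i)}=d\}$. When $d^{(i)} \neq d$, a hash collision $h_j(d^{(i)}) = h_j(d)$ occurs; here I would invoke the (implicitly assumed) uniformity of $h_j$ over $[m]$, so that $\Pr[h_j(d^{(i)}) = h_j(d)] = 1/m$, contributing $\xi_j(d^{(i)})\frac{1}{m}\,\mathbbm{1}\{d^{(i)}\neq d\}$ after taking the expectation over the hash choice as well. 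Adding the two cases yields the claimed formula. The one subtlety worth flagging — the main obstacle, such as it is — is being careful about which randomness the expectation is taken over: the indices $(j, l)$ in the statement refer to the fixed line/column under consideration on the server side, while $J, L, B$ are the client's internal coins; and the $1/m$ collision term additionally requires averaging over the random choice of the hash function $h_j$, which should be made explicit so the two "expectations" are not conflated.
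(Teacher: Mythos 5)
Your proposal is correct and follows essentially the same route as the paper's proof: cancel the debiasing factor $c_\epsilon k$ against $\mathbb{E}[B]=1/c_\epsilon$ and $\mathbb{E}[\mathbbm{1}\{J=j\}]=1/k$, use Hadamard-row orthogonality to reduce the $L$-average to $\mathbbm{1}\{h_j(d^{(i)})=h_j(d)\}$, and convert that indicator to $1/m$ via hash uniformity when $d^{(i)}\neq d$. The only cosmetic difference is that you carry out a single unified computation and split into cases at the end, while the paper splits into the two cases first; your explicit remark that the $1/m$ comes from averaging over the hash function is a welcome clarification of a step the paper leaves implicit.
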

\begin{proof}
  We analyze $\mathbb{E}[ M(j,h_j(d))^{(i)}]$ under two situations.\\
  (1) If $d^{(i)}=d$,
  \begin{align}
     & \mathbb{E}[ M(j,h_j(d))^{(i)}]
      = \frac{1}{k}\mathbb{E}[ M(j,h_j(d))^{(i)}|J=j]                                       \\
     & = c_\epsilon\cdot \mathbb{E}[H_m[h_j(d),L]\xi_j(d^{(i)}) B\cdot H_m[L,h_j(d^{(i)})] ] \\
     & = c_\epsilon\cdot \mathbb{E}[\xi_j(d)B]\cdot \mathbb{E}[ H_m[L,h_j(d)]^2 ]
      = \xi_j(d)
  \end{align}
  (2) If $d^{(i)}\neq d$,
  \begin{align}\label{equation:expectation}
     & \mathbb{E}[ M(j,h_j(d))^{(i)}]
      =\frac{1}{k}\mathbb{E}[ M(j,h_j(d))^{(i)}|J=j]                                                   \\
     & =\mathbb{E}[\xi_j(d^{(i)})c_\epsilon B]\mathbb{E}[H_m[h_j(d^{(i)}),L] H_m[L,h_j(d)]]  \\
     & =\xi_j(d^{(i)})\mathbb{E}[\mathbbm{1}\{h_j(d)=h_j(d^{(i)})\}]=\xi_j(d^{(i)})\frac{1}{m}
  \end{align}
  Merging these two situations above, we can get: 
  \begin{equation}
  \mathbb{E}[M(j,h_j(d))^{(i)}]=\xi_j(d)\cdot \mathbbm{1}\{d^{(i)}=d\} + \frac{\xi_j(d^{(i)})}{m}\cdot \mathbbm{1}\{d^{(i)}\neq d\}.\nonumber
  \end{equation}
  The expectation is the same as that of fast-AGMS sketch.
\end{proof}
\noindent\textbf{Step 2: The expectation and variance of one estimator.}
Taking the inner-product of the corresponding lines of two sketches $M_A[j]\cdot M_B[j]$ as one estimator of $|A\Join B|$. We prove that $\mathbb{E}[M_A[j]M_B[j]]=|A\Join B|$ and  $\mathrm{Var}[M_A[j]M_B[j]]$ is limited.
Before computing $\mathbb{E}[M_A[j]M_B[j]]$, we first give a lemma for the product of two entries  from two datasets.

\begin{lemma}\label{Lemma:product_two_entries_from_two_sketches}
  Let $d^{(i_A)}$ and  $d^{(i_B)}$ be two values of attributes $A$ and $B$. Given $h_j(d^{(i_A)})=h_j( d^{(i_B)})=x$, we have
\begin{equation}
  \mathbb{E}[M_A(j,x)^{(i_A)}M_B(j,x)^{(i_B)}]=
  \begin{cases}
    1,&d^{(i_A)} = d^{(i_B)}\\
    0,&d^{(i_A)}\neq d^{(i_B)}
  \end{cases}
\end{equation}
\end{lemma}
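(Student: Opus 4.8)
The plan is to expand the product $M_A(j,x)^{(i_A)}M_B(j,x)^{(i_B)}$ using the explicit formula from the contribution definition and then take expectations, exploiting the independence of the random choices made by the two clients. Writing out both contributions,
\begin{align}
&M_A(j,x)^{(i_A)}M_B(j,x)^{(i_B)}\nonumber\\
&=c_\epsilon^2 k^2 B_A B_B\,\xi_j(d^{(i_A)})\xi_j(d^{(i_B)})\nonumber\\
&\quad\times H_m[h_j(d^{(i_A)}),L_A]H_m[L_A,x]H_m[h_j(d^{(i_B)}),L_B]H_m[L_B,x]\nonumber\\
&\quad\times\mathbbm{1}\{J_A=j\}\mathbbm{1}\{J_B=j\},\nonumber
\end{align}
where $B_A,L_A,J_A$ are the random variables of client $i_A$ and $B_B,L_B,J_B$ those of client $i_B$, all mutually independent (and independent across the two clients). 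First I would factor the expectation into a product of per-client expectations: $\mathbb{E}[B_A]=\mathbb{E}[B_B]=1/c_\epsilon$ cancels the $c_\epsilon^2$, and $\mathbb{E}[\mathbbm{1}\{J_A=j\}]=\mathbb{E}[\mathbbm{1}\{J_B=j\}]=1/k$ cancels the $k^2$, so the conditioning on $J_A=J_B=j$ is handled exactly as in Theorem~\ref{Lemma:expectation_of_one_entry}.

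Next I would evaluate the Hadamard factors. Conditioned on $h_j(d^{(i_A)})=h_j(d^{(i_B)})=x$, the factor $H_m[h_j(d^{(i_A)}),L_A]H_m[L_A,x]=H_m[x,L_A]H_m[L_A,x]=H_m[L_A,x]^2=1$ pointwise since $H_m$ is symmetric with $\pm1$ entries; likewise the $L_B$ factor is $1$. So after taking expectations everything collapses to $\xi_j(d^{(i_A)})\xi_j(d^{(i_B)})$. When $d^{(i_A)}=d^{(i_B)}$ this is $\xi_j(d^{(i_A)})^2=1$; when $d^{(i_A)}\neq d^{(i_B)}$, the four-wise (in particular pairwise) independence of the family $\xi_j$ gives $\mathbb{E}[\xi_j(d^{(i_A)})\xi_j(d^{(i_B)})]=\mathbb{E}[\xi_j(d^{(i_A)})]\mathbb{E}[\xi_j(d^{(i_B)})]=0$. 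Note that the hash values $h_j(\cdot)$ being equal to $x$ is the conditioning hypothesis, so no further randomness in $h_j$ needs to be accounted for here, unlike in the proof of Theorem~\ref{Lemma:expectation_of_one_entry} where the collision event itself contributed the $1/m$.

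The main subtlety — and the step I would be most careful about — is bookkeeping the independence structure: the two clients draw their perturbation bits, column samples, and line samples independently, so the joint expectation genuinely factors, but one must also be sure that the $\xi_j$ expectation is taken over the same shared hash family used by both sketches (the sketches for $A$ and $B$ use identical $(h_j,\xi_j)$, which is what makes the join-size estimator work). Once the factorization is justified, the remaining computation is the routine cancellation above. I would present it as a short two-case calculation mirroring Theorem~\ref{Lemma:expectation_of_one_entry}, emphasizing that the cross term vanishes by pairwise independence of $\xi_j$ and the diagonal term equals $1$ because $\xi_j$ is $\pm1$-valued and $H_m$ is a $\pm1$ symmetric matrix.
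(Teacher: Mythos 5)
Your proposal is correct and follows essentially the same route as the paper's proof: reduce the product to $\mathbb{E}[\xi_j(d^{(i_A)})\xi_j(d^{(i_B)})]$ and split into the two cases, using $\xi_j^2=1$ for the diagonal and pairwise independence of $\xi_j$ for the cross term. You simply make explicit the cancellations (of $c_\epsilon^2 k^2$ against $\mathbb{E}[B_A]\mathbb{E}[B_B]$ and $\mathbb{E}[\mathbbm{1}\{J_A=j\}]\mathbb{E}[\mathbbm{1}\{J_B=j\}]$, and of the Hadamard factors via symmetry and $\pm1$ entries) that the paper elides.
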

\begin{proof}
  (1) If $d^{(i_A)} = d^{(i_B)}$, $\mathbb{E}[M_A(j,x)^{(i_A)}M_B(j,x)^{(i_B)}] =\mathbb{E}[\xi_j(d^{(i_A)})\xi_j(d^{(i_A)})]=1$.
  (2) If $d^{(i_A)}\neq d^{(i_B)}$, then $\mathbb{E}[M_A(j,x)^{(i_A)}\!M_B(j,x)^{(i_B)}]\!=\!\mathbb{E}[\xi_j(d^{(i_A)})\xi_j(d^{(i_B)})]\!=\!0$
\end{proof}
\begin{theorem}\label{Theorem:expectation}
    $\mathbb{E}[M_A[j]\cdot M_B[j]]=|A\Join B|$.
\end{theorem}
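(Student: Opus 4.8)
The plan is to expand the inner product $M_A[j]\cdot M_B[j]=\sum_{x=1}^{m}M_A[j,x]M_B[j,x]$ in terms of the per-item contributions defined in Step~1, and then take expectations term by term, relying on Lemma~\ref{Lemma:product_two_entries_from_two_sketches} and the independence structure of the hash functions. First I would write each counter entry as a sum over the users whose sampled line index equals $j$ and whose sampled column lands on $x$ after the Hadamard transform, i.e. $M_A[j,x]=\sum_{i_A}M_A(j,h_j(d^{(i_A)}))^{(i_A)}\mathbbm{1}\{h_j(d^{(i_A)})=x\}$, and similarly for $B$. Substituting this into $\sum_x M_A[j,x]M_B[j,x]$ produces a double sum over pairs $(i_A,i_B)$ of users from the two streams, where each pair contributes only when $h_j(d^{(i_A)})=h_j(d^{(i_B)})=x$ for some common column $x$.

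Next I would split the double sum according to whether $d^{(i_A)}=d^{(i_B)}$ or not. For pairs with matching values, $h_j$ automatically sends them to the same column, so the indicator is satisfied, and by case (1) of Lemma~\ref{Lemma:product_two_entries_from_two_sketches} each such pair contributes expectation $1$; the number of such pairs is exactly $\sum_{d} f_A(d) f_B(d) = |A\Join B|$. For pairs with distinct values, I would argue the expected contribution is $0$: either they hash to different columns (contributing nothing), or they collide in the same column, in which case case (2) of Lemma~\ref{Lemma:product_two_entries_from_two_sketches} gives expectation $0$ because $\xi_j$ is (at least) $2$-wise independent, so $\mathbb{E}[\xi_j(d^{(i_A)})\xi_j(d^{(i_B)})]=0$ for $d^{(i_A)}\neq d^{(i_B)}$. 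One subtlety I need to handle carefully is that the randomness in $M_A(j,x)^{(i_A)}$ and $M_B(j,x)^{(i_B)}$ comes from independent sources (the two users' private coins $B$, their sampled indices $J,L$) plus the shared hash functions $h_j,\xi_j$; Lemma~\ref{Lemma:product_two_entries_from_two_sketches} is stated conditionally on the collision event $h_j(d^{(i_A)})=h_j(d^{(i_B)})=x$, so I must make sure the conditioning is applied consistently and that summing the conditional expectation over $x$ weighted by $\Pr[\text{collision at }x]$ reproduces the unconditional result.

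The last step is bookkeeping: collect the matching-pair terms to get $|A\Join B|$, confirm the cross terms vanish, and conclude $\mathbb{E}[M_A[j]\cdot M_B[j]]=|A\Join B|$ for every $j\in[k]$, which also shows the final median estimator is (marginally) built from unbiased estimators. I expect the main obstacle to be the careful justification that the only surviving contribution comes from equal-valued pairs — in particular, arguing cleanly that self-pairs ($i_A$ matched with $i_B$ representing the same underlying value across the two streams) are counted with the right multiplicity $f_A(d)f_B(d)$, and that the expectation factorizes appropriately across the independent client-side randomness so that Theorem~\ref{Lemma:expectation_of_one_entry} and Lemma~\ref{Lemma:product_two_entries_from_two_sketches} can be invoked without hidden dependencies. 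Everything else reduces to linearity of expectation and the $4$-wise independence of $\xi_j$, which is more than enough here since only pairwise products of $\xi_j$ values appear.
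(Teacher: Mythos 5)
Your proposal is correct and follows essentially the same route as the paper: expand $\sum_x M_A[j,x]M_B[j,x]$ into a double sum over pairs of contributions, split by whether $d^{(i_A)}=d^{(i_B)}$, and invoke Lemma~\ref{Lemma:product_two_entries_from_two_sketches} to get $1$ for matching pairs (yielding $\sum_d f_A(d)f_B(d)=|A\Join B|$) and $0$ for the rest. The extra care you flag about conditioning on the collision event and the independence of the two clients' private coins is a reasonable refinement of what the paper leaves implicit, not a different argument.
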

\begin{proof}
  $\mathbb{E}[M_A[j]\cdot M_B[j]]=\mathbb{E}[\sum_{x=1}^{m} M_A[j,x]\times M_B[j,x]]$.
  \begin{flalign}\label{Equation:UnbiasedResult}
     =&\sum_{x=1}^{m}\mathbb{E}[\sum_{d^{(i_A)}=d^{(i_B)}}M_A(j,x)^{(i_A)}M_B(j,x)^{(i_B)}\nonumber \\
       &+\sum_{d^{(i_A)}\neq d^{(i_B)}}M_A(j,x)^{(i_A)}M_B(j,x)^{(i_B)}] \nonumber \\
     =&\sum_{x=1}^{m}(\sum_{h_j(d)=x}f_A(d)f_B(d)+ 0)
     = |A\Join B|,
  \end{flalign}
 Here, $f_A(d)$ denotes the frequency of $d$ in attribute $A$.
\end{proof}
Before computing $\mathrm{Var}[M_A[j]M_B[j]]$, we provide Lemma~\ref{lemma:product_in_one_sketch} computing $\mathbb{E}[M(j,x)^{(i_1)}M(j,x)^{(i2)}]$ and  Lemma~\ref{lemma:second_moment_of_one_entry_one_sketch} computing  $\mathbb{E}[(M[j,l])^2]$ based on Lemma~\ref{lemma:product_in_one_sketch} as follows.

\begin{lemma}\label{lemma:product_in_one_sketch}
  Given $h_j(d^{(i_1)})= h_j(d^{(i_2)})=x$, we have 
  \begin{equation}
    \!\!\!\!\mathbb{E}[M(j,x)^{(i_1)}M(j,x)^{(i_2)}]=
    \begin{cases}
      c_{\epsilon}^2k, &i_1=i_2\\
      1,&\!\!\!\!\!\!\!\!i_1\ne i_2, d^{(i_1)}=d^{(i_2)}\\
      0,&\!\!\!\!\!\!\!\!i_1\ne i_2, d^{(i_1)}\ne d^{(i_2)}
    \end{cases}
  \end{equation}
\end{lemma}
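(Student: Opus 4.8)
\textbf{Proof proposal for Lemma~\ref{lemma:product_in_one_sketch}.}

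The plan is to expand each factor $M(j,x)^{(i)}$ using its definition
$M(j,x)^{(i)} = c_\epsilon k\, B^{(i)}\, \xi_j(d^{(i)})\, H_m[h_j(d^{(i)}),L^{(i)}]\, H_m[L^{(i)},x]\,\mathbbm{1}\{J^{(i)}=j\}$,
keeping careful track of which randomness is \emph{shared} between the two factors and which is \emph{independent}. The subtle point is that the two contributions come from two different users (when $i_1 \neq i_2$), so their perturbation bits $B^{(i_1)}, B^{(i_2)}$, their sampled column indices $L^{(i_1)}, L^{(i_2)}$, and their sampled line indices $J^{(i_1)}, J^{(i_2)}$ are all mutually independent; only the hash functions $h_j, \xi_j$ are common. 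When $i_1 = i_2$ everything is shared, which is what produces the different constant.

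I would organize the computation into the three cases of the statement. Case $i_1 = i_2$: here $B, L, J$ are the same, so the product contains $B^2 = 1$, $\mathbbm{1}\{J=j\}^2 = \mathbbm{1}\{J=j\}$ (expectation $1/k$), $\xi_j(d^{(i_1)})^2 = 1$, and $H_m[h_j(d^{(i_1)}),L]^2\, H_m[L,x]^2 = 1$ since Hadamard entries are $\pm 1$; multiplying by the prefactor $(c_\epsilon k)^2$ and the $1/k$ from $\mathbbm{1}\{J=j\}$ gives $c_\epsilon^2 k$. Case $i_1 \neq i_2$, $d^{(i_1)} = d^{(i_2)}$: independence lets me factor the expectation as $(c_\epsilon k)^2\,\mathbb{E}[B^{(i_1)}]\mathbb{E}[B^{(i_2)}]\,\mathbb{E}[\mathbbm{1}\{J^{(i_1)}=j\}]\mathbb{E}[\mathbbm{1}\{J^{(i_2)}=j\}]$ times $\xi_j(d^{(i_1)})\xi_j(d^{(i_2)}) = 1$ times the Hadamard factor; using $\mathbb{E}[B] = 1/c_\epsilon$ and $\mathbb{E}[\mathbbm{1}\{J=j\}] = 1/k$ the prefactors collapse to $1$, and the remaining Hadamard expectation $\mathbb{E}_{L^{(i_1)},L^{(i_2)}}[H_m[h_j(d^{(i_1)}),L^{(i_1)}]H_m[L^{(i_1)},x]\,H_m[h_j(d^{(i_2)}),L^{(i_2)}]H_m[L^{(i_2)},x]]$ factorizes over the two independent $L$'s; each factor is $\mathbb{E}_L[H_m[h_j(d),L]H_m[L,x]] = \mathbbm{1}\{h_j(d)=x\}$ by orthogonality of Hadamard rows (this mirrors the computation already used in Theorem~\ref{Lemma:expectation_of_one_entry}), and since $h_j(d^{(i_1)}) = h_j(d^{(i_2)}) = x$ by hypothesis, both factors equal $1$, yielding $1$. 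Case $i_1 \neq i_2$, $d^{(i_1)} \neq d^{(i_2)}$: the same factorization applies, but now $\mathbb{E}[\xi_j(d^{(i_1)})\xi_j(d^{(i_2)})] = 0$ because $\xi_j$ is (at least) $2$-wise independent and uniform on $\{-1,+1\}$, so $\xi_j$ on distinct arguments are uncorrelated; hence the whole expectation is $0$.

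The main obstacle — really the only place to be careful — is the bookkeeping of independence: making sure that when $i_1 \neq i_2$ the two users' private coins ($B$, $L$, $J$) are treated as independent while the hash functions are not, and conversely that when $i_1 = i_2$ one does \emph{not} accidentally factor $\mathbb{E}[B^2]$ as $\mathbb{E}[B]^2$. A secondary point is to state explicitly which independence property of $\xi_j$ is used where: $2$-wise independence suffices for this lemma (the $4$-wise independence mentioned in the preliminaries is needed later, for the variance bound), and this should be flagged so the reader knows the hypothesis is not being overused. Everything else is the elementary identities $B^2=1$, $H_m[a,b]^2=1$, $\mathbb{E}[B]=1/c_\epsilon$, $\mathbb{E}[\mathbbm{1}\{J=j\}]=1/k$, and Hadamard row-orthogonality, all of which are already in play in the preceding proofs.
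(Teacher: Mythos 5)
Your proposal is correct and follows essentially the same route as the paper's proof: expand both factors via the definition, split into the three cases, and exploit independence of the per-user coins $B, J, L$ when $i_1 \neq i_2$ versus the identities $B^2=1$, $\mathbbm{1}\{J=j\}^2=\mathbbm{1}\{J=j\}$ when $i_1=i_2$. If anything you are more careful than the paper, which silently drops the Hadamard factors from its factorization, whereas you explicitly reduce them to $\mathbbm{1}\{h_j(d)=x\}=1$ via row orthogonality and correctly flag that only pairwise independence of $\xi_j$ is needed here.
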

\begin{proof}
  According to the definition of $M(j,x)^{(i)}$, given two data $d^{(i_1)}$ and $d^{(i_2)}$ of the same attribute, we have
  $\mathbb{E}[M(j,x)^{(i_1)}M(j,x)^{(i_2)}]=c_{\epsilon}^2k^2\cdot \mathbb{E}[J^{(i_1)}\!\!\!\!=\!\!\!\!J^{(i_2)}\!\!=\!\!\!\!j]\cdot\mathbb{E}[\xi_j(d^{(i_1)})\xi_j(d^{(i_2)})B^{(i_1)}B^{(i_2)}]$,
   where $J^{(i_1)},J^{(i_2)}\sim U[k]$ represent random variables chosen uniformly from $[k]$ for $d^{(i_1)}$ and $d^{(i_2)}$. Similarly, $B^{(i_1)}$ and $B^{(i_2)}$ represent the random variables for perturbation.\\
  Case 1. If $i_1=i_2=i$, then $d^{(i_1)}$ and $d^{(i_2)}$ represent the same data entry $d^{(i)}$, so $\mathbb{E}[J^{(i_1)}=J^{(i_2)}=j]=\frac{1}{k}$ and $B^{(i_1)}B^{(i_2)}=(B^{(i)})^2=1$. Therefore, $\mathbb{E}[M(j,x)^{(i_1)}M(j,x)^{(i_2)}]=c_{\epsilon}^2k$.\\
  Case 2. If $i_1\ne i_2$, then $J^{(i_1)}$ and $J^{(i_2)}$ are independent variables, $\mathbb{E}[J^{(i_1)}\!\!=\!\!J^{(i_2)}\!\!\!=\!\!\!j]\!\!=\!\!\mathbb{E}[J^{(i_1)}\!\!\!\!=\!\!j]\mathbb{E}[J^{(i_2)}\!\!\!\!=\!\!j]\!\!=\!\!1/{k^2}$. Similarly $\mathbb{E}[B^{(i_1)}B^{(i_2)}]=\mathbb{E}[B^{(i_1)}]^2=\frac{1}{c_{\epsilon}^2}$. And $\xi_j(d^{(i_1)})\xi_j(d^{(i_2)})=1$, since $d^{(i_1)}=d^{(i_2)}$. Therefore, $\mathbb{E}[M(j,x)^{(i_1)}M(j,x)^{(i_2)}]=1$.\\
  Case 3. If $d^{(i_1)}\ne d^{(i_2)}$, similar to (2), we have $\mathbb{E}[J^{(i_1)}=J^{(i_2)}=j]=\frac{1}{k^2}$ and  $\mathbb{E}[B^{(i_1)}B^{(i_2)}]=\frac{1}{c_{\epsilon}^2}$. The difference from Case 2 is that $E[\xi_j(d^{(i_1)})\xi_j(d^{(i_2)})]=0$ for  $d^{(i_1)}\ne d^{(i_2)}$.
\end{proof}


\begin{lemma}\label{lemma:second_moment_of_one_entry_one_sketch}
  The expectation of  $(M[j,x])^2$:
\begin{equation}
  \mathbb{E}[(M[j,x])^2]
   =\!\!\!\!\!\!\sum_{h_j(d)=x}\!\!\!\! (kc_{\epsilon}^2-1)f(d)+f(d)^2
   +\!\!\!\!\!\!\!\!\!\!\sum_{\substack{d\neq d'\\h_j(d)=h_j(d')=x}}\!\!\!\!\!\!\!\!\!\!\!\xi_j(d)\xi_j(d')\nonumber
\end{equation}
\end{lemma}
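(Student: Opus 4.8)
The plan is to expand $(M[j,x])^2$ directly in terms of per-item contributions and then apply Lemma~\ref{lemma:product_in_one_sketch} to each resulting product. Recall that $M[j,x]=\sum_{i} M(j,x)^{(i)}$, where the sum runs over all private data items $d^{(i)}$ (of the relevant attribute) that land in column $x$ of line $j$; more precisely, only items with $h_j(d^{(i)})=x$ contribute a nonzero term to $M[j,x]$ after the Hadamard transform back. Squaring gives $(M[j,x])^2=\sum_{i_1}\sum_{i_2} M(j,x)^{(i_1)}M(j,x)^{(i_2)}$, and by linearity of expectation $\mathbb{E}[(M[j,x])^2]=\sum_{i_1,i_2}\mathbb{E}[M(j,x)^{(i_1)}M(j,x)^{(i_2)}]$.

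Next I would split the double sum into the three cases of Lemma~\ref{lemma:product_in_one_sketch}. The diagonal terms $i_1=i_2$ contribute $c_\epsilon^2 k$ each; there are $f(d)$ items equal to each value $d$ with $h_j(d)=x$, so the diagonal contributes $\sum_{h_j(d)=x} c_\epsilon^2 k\, f(d)$. The off-diagonal terms with $d^{(i_1)}=d^{(i_2)}$ but $i_1\neq i_2$ contribute $1$ each; for a fixed value $d$ there are $f(d)^2-f(d)$ such ordered pairs, giving $\sum_{h_j(d)=x}(f(d)^2-f(d))$. The off-diagonal terms with $d^{(i_1)}\neq d^{(i_2)}$ contribute $0$ in expectation by the four-wise independence of $\xi_j$ — but note that Lemma~\ref{lemma:product_in_one_sketch} as stated gives $\mathbb{E}[\xi_j(d^{(i_1)})\xi_j(d^{(i_2)})]=0$, whereas the claimed identity in Lemma~\ref{lemma:second_moment_of_one_entry_one_sketch} carries a residual term $\sum_{d\neq d', h_j(d)=h_j(d')=x}\xi_j(d)\xi_j(d')$. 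I would reconcile this by keeping the cross term unevaluated: writing $\mathbb{E}[M(j,x)^{(i_1)}M(j,x)^{(i_2)}]=\xi_j(d)\xi_j(d')$ (times the appropriate frequency count $f(d)f(d')$, which here is implicit in how the sum is indexed over value-pairs rather than item-pairs) for distinct values $d=d^{(i_1)}$, $d'=d^{(i_2)}$ in column $x$, since $\mathbb{E}[c_\epsilon^2 k^2\,\mathbbm{1}\{J^{(i_1)}=J^{(i_2)}=j\}B^{(i_1)}B^{(i_2)}]=1$ and only the $\xi_j$ factor remains stochastic at the level of the statement.

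Finally I would collect the three pieces: $\sum_{h_j(d)=x} k c_\epsilon^2 f(d) + \sum_{h_j(d)=x}(f(d)^2-f(d)) + \sum_{d\neq d',\,h_j(d)=h_j(d')=x}\xi_j(d)\xi_j(d')$, and combine the first two sums over $h_j(d)=x$ into $\sum_{h_j(d)=x}\big((kc_\epsilon^2-1)f(d)+f(d)^2\big)$, which is exactly the claimed expression.

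The main obstacle I anticipate is bookkeeping consistency between the item-level indexing used in Lemma~\ref{lemma:product_in_one_sketch} (sums over data entries $d^{(i)}$, with repetitions) and the value-level indexing used in the statement of Lemma~\ref{lemma:second_moment_of_one_entry_one_sketch} (sums over distinct values $d$ weighted by frequencies $f(d)$); I would need to be careful that the $f(d)^2$ versus $f(d)^2-f(d)$ distinction is handled correctly when merging the diagonal and same-value off-diagonal contributions, and that the cross term is presented at the granularity the lemma intends (pairs of distinct values rather than pairs of distinct items, the frequency weights having been absorbed into the counting). A secondary subtlety is justifying that only items hashing to column $x$ appear in $M[j,x]$ after the inverse Hadamard transform — this follows from Theorem~\ref{Lemma:expectation_of_one_entry} and the structure of the debiasing in Algorithm~\ref{Algorithm:Server-Side}, so I would cite that rather than re-derive it.
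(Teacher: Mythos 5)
Your proposal follows essentially the same route as the paper's proof: expand $(M[j,x])^2$ as a double sum over item pairs, apply Lemma~\ref{lemma:product_in_one_sketch} case by case, and convert the item-level counts into value-level sums weighted by $f(d)$, with the diagonal giving $kc_\epsilon^2 f(d)$ and the same-value off-diagonal giving $f(d)^2-f(d)$. Your observation that the cross term must be kept unevaluated as $\xi_j(d)\xi_j(d')$ (conditioning on the hash functions rather than invoking the zero expectation of Case~3) is exactly how the paper's own derivation proceeds, so the proposal is correct and matches the paper.
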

\begin{proof}
  Suppose $h_j(d^{(i_1)})= h_j(d^{(i_2)})=x$, where $d^{(i_1)}$ and $d^{(i_2)}$ are two data entries for sketch $M$. Since $M[j,x]=\sum_{d^{(i)}\in D}M(j,h_j(d))^{(i)}$, based on lemma~\ref{lemma:product_in_one_sketch}, we have
  \begin{align}
    &\mathbb{E}[(M[j,x])^2]=\!\!\!\!\!\!\!\!\!\sum_{d^{(i_1)},d^{(i_2)}\in D} \!\!\!\!\!\!\!\!\!\mathbb{E}[(M(j,x)^{(i_1)})(M(j,x)^{(i_2)})] \nonumber  \\
  = &\!\!\!\!\!\!\!\!\!\sum_{d^{(i_1)},d^{(i_2)}\in D}\!\!\!\!\!\!\!\!\![c_{\epsilon}^2k \mathbbm{1}\{i_1=i_2\}  + \mathbbm{1}\{\substack{i_1\ne i_2\\d^{(i_1)}=d^{(i_2)}}\} \nonumber  \\
  &+\xi_j(d^{(i_1)})\xi_j(d^{(i_2)})\mathbbm{1}\{\substack{i_1\ne i_2\\d^{(i_1)}\ne d^{(i_2)}}\}]\\
  =& kc_{\epsilon}^2\!\!\!\!\sum_{h_j(d)=x}\!\!\!\!f(d)+\!\!\!\!\sum_{h_j(d)=x}\!\!\!\!f(d)^2\!\!-\!\!f(d)
   +\!\!\!\!\!\!\!\!\!\!\!\!\sum_{\substack{d\neq d'\\h_j(d)=h_j(d')=x}}\!\!\!\!\!\!\!\!\xi_j(d)\xi_j(d')
   \end{align}
   \noindent Based on this, we can  compute  $\mathrm{Var}[M_A[j,x]M_B[j,x]]$.
\end{proof}

We present herein a definition for certain notations that will be utilized in the subsequent theorems.

\begin{definition}\label{Def:frequency moment}
  Let $d$ be a value of attribute $X$ and $f_X(d)$ be the frequency (number of occurrences) of $d$ in the sequence of values of attribute $X$. We define (1) the total frequency of all the values of $X$ as  $F_1(X) = \sum_{d \in X} f_X(d)$, (2) the second frequency moment $F_2(X) = \sum_{d \in X} (f_X(d))^2$.
\end{definition}

\begin{lemma}\label{Lemma:variance_of_one_product}
  $\mathrm{Var}[M_A[j,x]M_B[j,x]]
 \le\frac{2}{m^2}(F_1(A)+\frac{kc_{\epsilon}^2-1}{2})^2\times (F_1(B)+\frac{kc_{\epsilon}^2-1}{2})^2$.
\end{lemma}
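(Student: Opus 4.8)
The plan is to bound $\mathrm{Var}[M_A[j,x]M_B[j,x]] \le \mathbb{E}[(M_A[j,x]M_B[j,x])^2] = \mathbb{E}[(M_A[j,x])^2]\cdot\mathbb{E}[(M_B[j,x])^2]$, where the factorization holds because the two sketches $M_A$ and $M_B$ are built from independent user data (the randomness in the two sketches is independent once we fix the hash functions $h_j,\xi_j$, which are shared). So the work reduces to controlling a single factor $\mathbb{E}[(M_X[j,x])^2]$ for $X\in\{A,B\}$ and then multiplying the two bounds. Note this is why the target bound is a product of two squared terms with the symmetric shape $(F_1(X)+\tfrac{kc_\epsilon^2-1}{2})^2$.

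First I would invoke Lemma~\ref{lemma:second_moment_of_one_entry_one_sketch}, which gives
\[
\mathbb{E}[(M_X[j,x])^2]=\sum_{h_j(d)=x}\bigl((kc_\epsilon^2-1)f_X(d)+f_X(d)^2\bigr)+\!\!\!\sum_{\substack{d\ne d'\\ h_j(d)=h_j(d')=x}}\!\!\!\xi_j(d)\xi_j(d').
\]
The cross term $\sum_{d\ne d'}\xi_j(d)\xi_j(d')$ is a sum of $\pm 1$ signs; the key algebraic identity is that, restricted to the bucket $\{d:h_j(d)=x\}$,
\[
\sum_{d,d':h_j=x}\xi_j(d)f_X(d)\,\xi_j(d')f_X(d')=\Bigl(\sum_{h_j(d)=x}\xi_j(d)f_X(d)\Bigr)^2\ge 0,
\]
so after completing the square one sees that $\sum_{h_j(d)=x}f_X(d)^2+\sum_{d\ne d'}\xi_j(d)\xi_j(d')f_X(d)f_X(d')$ equals a perfect square and hence is nonnegative; here, however, the cross term in the lemma has \emph{no} frequency weights, which is a minor discrepancy — I would either absorb it by noting each $|\xi_j(d)\xi_j(d')|=1$ and bounding the double sum by $(\sum_{h_j(d)=x}1)^2\le(\sum_{h_j(d)=x}f_X(d))^2$ since frequencies are $\ge 1$, or, more cleanly, bound $\mathbb{E}[(M_X[j,x])^2]\le (kc_\epsilon^2-1)\sum_{h_j(d)=x}f_X(d)+\bigl(\sum_{h_j(d)=x}f_X(d)\bigr)^2$ directly. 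Writing $S_x:=\sum_{h_j(d)=x}f_X(d)$, this reads $\mathbb{E}[(M_X[j,x])^2]\le (kc_\epsilon^2-1)S_x+S_x^2$.

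Next I would complete the square in $S_x$: $(kc_\epsilon^2-1)S_x+S_x^2 = \bigl(S_x+\tfrac{kc_\epsilon^2-1}{2}\bigr)^2-\bigl(\tfrac{kc_\epsilon^2-1}{2}\bigr)^2\le\bigl(S_x+\tfrac{kc_\epsilon^2-1}{2}\bigr)^2$. Then, since $S_x\le F_1(X)$ (the bucket sum is at most the total frequency) and the function $t\mapsto(t+\tfrac{kc_\epsilon^2-1}{2})^2$ is increasing for $t\ge 0$, we get $\mathbb{E}[(M_X[j,x])^2]\le\bigl(F_1(X)+\tfrac{kc_\epsilon^2-1}{2}\bigr)^2$. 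Multiplying the $A$ and $B$ bounds gives $\mathbb{E}[(M_A[j,x])^2]\mathbb{E}[(M_B[j,x])^2]\le\bigl(F_1(A)+\tfrac{kc_\epsilon^2-1}{2}\bigr)^2\bigl(F_1(B)+\tfrac{kc_\epsilon^2-1}{2}\bigr)^2$, and I would pick up the factor $\tfrac{2}{m^2}$ claimed in the statement from a more careful accounting — the $1/m^2$ coming from the fact that conditioning on $h_j(d^{(i_A)})=h_j(d^{(i_B)})=x$ (as in Lemma~\ref{Lemma:product_two_entries_from_two_sketches}) contributes collision probability $1/m$ per sketch, and the $2$ being slack from using $\mathrm{Var}\le\mathbb{E}[(\cdot)^2]$ versus keeping the $-\mathbb{E}[\cdot]^2$ term and symmetrizing. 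The main obstacle I anticipate is precisely bookkeeping where the $\tfrac{2}{m^2}$ enters: Lemma~\ref{lemma:second_moment_of_one_entry_one_sketch} as stated sums over all data without the $1/m$ collision factors visible, so I would need to re-derive $\mathbb{E}[(M_X[j,x])^2]$ tracking the $h_j$-collision indicators explicitly (each pair $d\ne d'$ landing in bucket $x$ carries an $\mathbbm 1\{h_j(d)=h_j(d')=x\}$ whose expectation over the hash choice is $1/m$ or $1/m^2$ depending on what is conditioned), to make the per-bucket second moment scale like $\tfrac{1}{m}(F_1(X)+\cdots)^2$; squaring (one factor per sketch) then yields the $\tfrac{1}{m^2}$, and the constant $2$ absorbs the completed-square slack. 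This indexing/conditioning care is the only delicate point; the rest is completing the square and monotonicity.
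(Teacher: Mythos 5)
There is a genuine gap, and it sits exactly at the first step. The factorization $\mathbb{E}[(M_A[j,x]M_B[j,x])^2]=\mathbb{E}[(M_A[j,x])^2]\cdot\mathbb{E}[(M_B[j,x])^2]$ is false: the two sketches share the hash functions $h_j$ and $\xi_j$, so while $M_A$ and $M_B$ are conditionally independent given $(h_j,\xi_j)$, their conditional second moments both contain the $(h_j,\xi_j)$-dependent cross terms $\sum_{d\ne d'}f(d)f(d')\xi_j(d)\xi_j(d')\mathbbm{1}\{h_j(d)=h_j(d')=x\}$, and these are positively correlated across the two sketches (each such pair contributes $\mathbb{E}[(\xi_j(d)\xi_j(d'))^2]=1$ to $\mathbb{E}[M_A^2M_B^2]$ but $\mathbb{E}[\xi_j(d)\xi_j(d')]^2=0$ to the product of expectations). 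Concretely, the paper's computation gives
\begin{equation*}
\mathbb{E}[M_A[j,x]^2M_B[j,x]^2]=\mathbb{E}[M_A[j,x]^2]\,\mathbb{E}[M_B[j,x]^2]+\frac{4}{m^2}\sum_{d<d'}f_A(d)f_A(d')f_B(d)f_B(d'),
\end{equation*}
so your ``equality'' strictly undercounts $\mathbb{E}[X^2]$ whenever at least two distinct values appear in both attributes. This is not harmless slack in the right direction: you would be upper-bounding the variance by a quantity that is itself a lower bound on part of the true second moment, so the derivation does not establish the claimed inequality even if the final numbers happen to work out.

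The missing term is also precisely what explains the constant you could not place. In the paper's proof, after subtracting $\mathbb{E}[X]^2=\frac{1}{m^2}(\sum_d f_A(d)f_B(d))^2$, the residual of the cross term is bounded by $\frac{1}{m^2}F_2(A)F_2(B)\le\frac{1}{m^2}F_1(A)^2F_1(B)^2$, and the factorized term is bounded by $\frac{1}{m^2}(F_1(A)+\frac{kc_\epsilon^2-1}{2})^2(F_1(B)+\frac{kc_\epsilon^2-1}{2})^2$ via $F_2\le F_1^2$ and completing the square exactly as you describe; the factor $2$ is the sum of these two contributions, not slack from $\mathrm{Var}\le\mathbb{E}[X^2]$. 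Your completing-the-square and monotonicity steps for a single sketch are fine, and your instinct that the $1/m$ per sketch comes from averaging the collision indicators over $h_j$ is correct (that is how the paper's Eq.~for $M_A[j,x]^2$ acquires its $\frac{1}{m}$ prefactor). To repair the argument you must keep the joint expectation $\mathbb{E}[M_A^2M_B^2]$, carry the cross-sketch correlation term explicitly, and cancel it against $-\mathbb{E}[X]^2$ before bounding; the route through a product of marginal second moments cannot be patched without reintroducing that term.
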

\begin{proof}
  Let $X=M_A[j,x]M_B[j,x]$, $\mathrm{Var}[X]=\mathbb{E}[X^2]-\mathbb{E}[X]^2$. 
  (i) First, we compute $\mathbb{E}[X^2]$.
  \begin{align}
    M_A[j,x]^2=&\frac{1}{m}[(kc_{\epsilon}^2-1)F_1(A)+F_2(A)\nonumber\\
    &+\!\!\!\!\!\!\sum_{\substack{d_A\ne d'_A\\h_j(d_A)=h_j(d'_A)}}\!\!\!\!\!\!f_A(d_A)f_A(d'_A)\xi_j(d_A)\xi_j(d'_A)]\label{equation:sqrt_one_cell}
  \end{align}
  Based on Eq.~(\ref{equation:sqrt_one_cell}), we have $\mathbb{E}[X^2]= \mathbb{E}[M_A[j,l]^2M_B[j,l]^2]$
  \begin{align}
    &=\frac{1}{m^2}[(kc_{\epsilon}^2-1)F_1(A)+F_2(A)][(kc_{\epsilon}^2-1)F_1(B)+F_2(B)]\nonumber\\
    &+\frac{4}{m^2}\sum_{d < d'}f_A(d)f_A(d')f_B(d)f_B(d')
   \end{align}
  (ii) Second, $\mathbb{E}[X]=\frac{1}{m}\sum_{d}f_A(d)f_B(d)$.\\
  Based on (i) and (ii), we have $\mathrm{Var}[M_A[j,x]M_B[j,x]]$
  \begin{align}\label{Equation:variance}
   \le&\frac{1}{m^2}[(kc_{\epsilon}^2-1)F_1(A)+F_2(A)][(kc_{\epsilon}^2-1)F_1(B)+F_2(B)]\nonumber\\
   &+\frac{1}{m^2}F_2(A)F_2(B)\\
  \le &\frac{2}{m^2}(F_1(A)+\frac{kc_{\epsilon}^2-1}{2})^2(F_1(B)+\frac{kc_{\epsilon}^2-1}{2})^2,
 \end{align}
 where $F_2(A)$ and $F_2(B)$ are defined in Def~\ref{Def:frequency moment}.
\end{proof}
With Lemma~\ref{Lemma:variance_of_one_product}, we compute $\mathrm{Var}[M_A[j]M_B[j]]$ as follows.
\begin{theorem}
  The variance of $M_A[j]M_B[j]$ is limited:\\
 $\mathrm{Var}[M_A[j]M_B[j]]
 \le \frac{2}{m}(F_1(A)+\frac{kc_{\epsilon}^2-1}{2})^2\times (F_1(B)+\frac{kc_{\epsilon}^2-1}{2})^2$.
\end{theorem}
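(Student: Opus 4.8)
The plan is to write the row estimator as the sum of its $m$ per-cell products and control the variance of the sum by the sum of the per-cell variances, which Lemma~\ref{Lemma:variance_of_one_product} has already bounded. Put $Y_x:=M_A[j,x]M_B[j,x]$, so that $M_A[j]\cdot M_B[j]=\sum_{x=1}^{m}Y_x$, and decompose
\begin{equation}
\mathrm{Var}\Big[\sum_{x=1}^{m}Y_x\Big]=\sum_{x=1}^{m}\mathrm{Var}[Y_x]+\sum_{x\ne x'}\mathrm{Cov}(Y_x,Y_{x'}).
\end{equation}
By Lemma~\ref{Lemma:variance_of_one_product}, $\sum_{x=1}^{m}\mathrm{Var}[Y_x]\le m\cdot\frac{2}{m^2}\big(F_1(A)+\frac{kc_\epsilon^2-1}{2}\big)^2\big(F_1(B)+\frac{kc_\epsilon^2-1}{2}\big)^2=\frac{2}{m}\big(F_1(A)+\frac{kc_\epsilon^2-1}{2}\big)^2\big(F_1(B)+\frac{kc_\epsilon^2-1}{2}\big)^2$, which is exactly the asserted bound. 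Hence it suffices to prove that the cross-covariance sum is non-positive.

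The second step is therefore to show $\mathrm{Cov}(Y_x,Y_{x'})\le 0$ for $x\ne x'$. Condition on the hash function $h_j$. Two distinct columns collect disjoint value sets $S_x=\{d:h_j(d)=x\}$ and $S_{x'}=\{d:h_j(d)=x'\}$, so the ``signal'' parts of $Y_x$ and $Y_{x'}$ involve disjoint families of signs $\xi_j(\cdot)$ and are independent once $h_j$ is fixed. The Hadamard-noise parts do cause a fixed user to touch every column of row $j$, but the contribution of any one user to column $x$ versus column $x'$ carries the factor $H_m[L,x]H_m[L,x']$, whose average over the uniformly sampled index $L$ is $\frac1m\sum_\ell H_m[\ell,x]H_m[\ell,x']=0$ by orthogonality of distinct Hadamard columns; thus the noise creates no cross-column correlation. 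Together these give $\mathbb{E}[Y_xY_{x'}\mid h_j]=\mathbb{E}[Y_x\mid h_j]\,\mathbb{E}[Y_{x'}\mid h_j]$, and by the cell-level unbiasedness behind Theorem~\ref{Theorem:expectation}, $\mathbb{E}[Y_x\mid h_j]=\sum_{d\in S_x}f_A(d)f_B(d)$. Averaging over $h_j$ (which places distinct values independently and uniformly) and comparing with $\mathbb{E}[Y_x]\mathbb{E}[Y_{x'}]$, all off-diagonal pairs $d\ne d'$ match, while the diagonal terms $f_A(d)^2f_B(d)^2$ appear in $\mathbb{E}[Y_x]\mathbb{E}[Y_{x'}]$ but not in $\mathbb{E}[Y_xY_{x'}]$ (since $x\ne x'$ forces $d\ne d'$); hence $\mathrm{Cov}(Y_x,Y_{x'})=-\frac{1}{m^2}\sum_{d}f_A(d)^2f_B(d)^2\le 0$. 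Summing over $x\ne x'$ completes the argument.

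The main obstacle is making this covariance step fully rigorous in the presence of the Hadamard transform and the local perturbations: the cleanest route is to first isolate the zero-mean ``error'' part of each $Y_x$ exactly as in the proofs of Lemma~\ref{lemma:second_moment_of_one_entry_one_sketch} and Lemma~\ref{Lemma:variance_of_one_product}, note that distinct users' perturbations are independent and mean-zero, and then check that the only cross-column terms surviving the $\xi_j$-, $B$- and $L$-expectations are the non-positive diagonal terms above. If that bookkeeping is unwieldy, an equivalent alternative is to bound $\mathrm{Var}[\sum_x Y_x]$ directly by expanding $\big(\sum_x Y_x-|A\Join B|\big)^2$ at the row level and reusing the moment estimates of Lemma~\ref{lemma:product_in_one_sketch} and Lemma~\ref{lemma:second_moment_of_one_entry_one_sketch}; this reproduces the intermediate bound $\frac{1}{m}\big[(kc_\epsilon^2-1)F_1(A)+F_2(A)\big]\big[(kc_\epsilon^2-1)F_1(B)+F_2(B)\big]+\frac{1}{m}F_2(A)F_2(B)$ and then the same $\big(F_1(\cdot)+\frac{kc_\epsilon^2-1}{2}\big)^2$ relaxation used at the end of Lemma~\ref{Lemma:variance_of_one_product}.
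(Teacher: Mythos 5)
Your proposal is correct, and at the top level it takes the same route as the paper: bound $\mathrm{Var}[\sum_{x=1}^m M_A[j,x]M_B[j,x]]$ by $m$ times the per-cell bound of Lemma~\ref{Lemma:variance_of_one_product}, which immediately yields the stated $\frac{2}{m}(F_1(A)+\frac{kc_\epsilon^2-1}{2})^2(F_1(B)+\frac{kc_\epsilon^2-1}{2})^2$. The difference is that the paper's own proof is a one-line assertion that silently treats the variance of the sum as (at most) the sum of the per-cell variances, with no mention of the cross-column covariances; your proof supplies exactly the missing justification. Your covariance argument checks out: for $x\ne x'$ and a single user $i$, the same-user cross term carries $\mathbb{E}_L[H_m[L,x]H_m[L,x']]=\frac1m\sum_\ell H_m[\ell,x]H_m[\ell,x']=0$ by orthogonality of distinct Hadamard columns, distinct users (and the two attributes) are independent, and after averaging over $\xi_j$ the four-wise independence kills every pairing except $d_1=d_2\in S_x$, $d_3=d_4\in S_{x'}$; since $S_x\cap S_{x'}=\emptyset$ the diagonal terms $f_A(d)^2f_B(d)^2$ survive only in $\mathbb{E}[Y_x]\mathbb{E}[Y_{x'}]$, giving $\mathrm{Cov}(Y_x,Y_{x'})=-\frac{1}{m^2}\sum_d f_A(d)^2f_B(d)^2\le 0$ (under pairwise independence of $h_j$). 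So your write-up is strictly more rigorous than the paper's; the alternative row-level expansion you sketch at the end is equally valid and reproduces the same intermediate bound as Lemma~\ref{Lemma:variance_of_one_product}. The only minor caveat is that the exact value of the covariance depends on the assumed independence properties of $h_j$, which the paper never states precisely; but non-positivity, which is all you need, is robust.
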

\begin{proof}
  The variance of each estimator $M_A[j]M_B[j]$:
  \begin{align}
      &\mathrm{Var}[M_A[j]M_B[j]] = \mathrm{Var}[\sum_{x=1}^{m} M_A[j,x] M_B[j,x]]     \\
     &\le\frac{2}{m}(F_1(A)+\frac{kc_{\epsilon}^2-1}{2})^2(F_1(B)+\frac{kc_{\epsilon}^2-1}{2})^2
  \end{align}
  Thus, variance of each estimator is bounded.
\end{proof}

\noindent\textbf{Step 3. Error bound of LDPJoinSketch.}
The final estimation is computed as the median of $k$ estimators like $Est_j$, i.e., $Est=\mathop{median}\limits_{j\in[k]}(Est_j)$. We use the following theorem to prove that the error of $Est$ is limited.
\begin{theorem}\label{theorem:error_bound}
  Let $k=4\log\frac{1}{\delta}$, $Est=\mathop{median}\limits_{j\in[k]}Est_j$, and the join size  estimation error be $Er=Est-|A\Join B|$,
  \begin{equation}
  \Pr[|Er|\ge\frac{4}{\sqrt{m}}|F_1(A)+\frac{kc_{\epsilon}^2-1}{2}||F_1(B)+\frac{kc_{\epsilon}^2-1}{2}|]\le \delta\nonumber.
  \end{equation}
\end{theorem}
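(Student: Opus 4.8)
The plan is to combine a second-moment (Chebyshev) bound on each individual estimator with a Chernoff-type concentration bound on how many of the $k$ estimators can be far from the truth — i.e., the classical ``median trick''.

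First I would fix a line index $j\in[k]$ and recall the two facts already established: $Est_j = M_A[j]\cdot M_B[j]$ is unbiased, $\mathbb{E}[Est_j]=|A\Join B|$ (Theorem~\ref{Theorem:expectation}), and its variance satisfies $\mathrm{Var}[Est_j]\le \frac{2}{m}\bigl(F_1(A)+\frac{kc_\epsilon^2-1}{2}\bigr)^2\bigl(F_1(B)+\frac{kc_\epsilon^2-1}{2}\bigr)^2=:\sigma^2$ (the preceding variance theorem). Setting $t:=\frac{4}{\sqrt m}\bigl|F_1(A)+\frac{kc_\epsilon^2-1}{2}\bigr|\bigl|F_1(B)+\frac{kc_\epsilon^2-1}{2}\bigr|$, we get $t^2=8\sigma^2$, so Chebyshev's inequality gives $\Pr[\,|Est_j-|A\Join B||\ge t\,]\le \sigma^2/t^2 = 1/8$ for every $j$. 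Next I would call estimator $j$ ``bad'' when $|Est_j-|A\Join B||\ge t$, and note that the median $Est$ can deviate from $|A\Join B|$ by at least $t$ only if at least $k/2$ of the estimators are bad; hence $\{|Er|\ge t\}\subseteq\{\#\{\text{bad }j\}\ge k/2\}$. Treating the $k$ lines as independent (they use independent hash families $(h_j,\xi_j)$, exactly as in the fast-AGMS median analysis), the number of bad estimators is stochastically dominated by a $\mathrm{Binomial}(k,1/8)$ variable $X$, and a Chernoff/Hoeffding bound yields $\Pr[X\ge k/2]\le e^{-ck}$ for an absolute constant $c>1/4$ (e.g. Hoeffding gives $e^{-2k(3/8)^2}=e^{-9k/32}$). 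Plugging in $k=4\log\frac1\delta$ makes this at most $\delta$, which is the claimed bound.

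The step I expect to need the most care is the independence used in the Chernoff step: since every user samples a single line index $J^{(i)}$ and contributes only to that line, the sketch lines share the per-user sampling and perturbation randomness and are not literally independent. I would handle this by conditioning on all per-user randomness $\{J^{(i)},L^{(i)},B^{(i)}\}$; given that, each line depends only on its own independent pair $(h_j,\xi_j)$, so the lines are conditionally independent and the ``bad'' events are conditionally independent, the Chebyshev estimate $1/8$ being re-derivable (or upper-boundable) in the conditional model as well. The Chernoff bound then applies conditionally, and taking expectation over the conditioning preserves the final $\le\delta$ guarantee. Alternatively, one may simply adopt the same independence convention as the original fast-AGMS median argument, which the theorem statement already implicitly relies on. Everything else — the arithmetic $t^2=8\sigma^2$, the containment of events, and the choice of the constant $4$ in $k=4\log\frac1\delta$ — is routine.
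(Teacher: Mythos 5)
Your proposal follows essentially the same route as the paper's proof: Chebyshev's inequality with $w=\sqrt{8\,\mathrm{Var}[Est_j]}$ to bound each line's failure probability by $1/8$, followed by the median trick with a Chernoff bound and $k=4\log\frac{1}{\delta}$. You additionally spell out the Chernoff/median step and flag the cross-line independence subtlety arising from the shared per-user line sampling, both of which the paper passes over in a single sentence; the argument is correct.
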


\begin{proof}
According to Chebyshev's Inequality, given a random variable $X$, $Pr[|X-E[X]|\ge w]\le\frac{\mathrm{Var}[X]}{w^2}$.
We can compute the error bound for each estimator $Est_j=M_A[j]M_B[j]$,
\begin{equation}
  \Pr[|Est_j-|A\Join B||\ge w]\le\frac{Var[M_A[j]M_B[j]]}{w^2}
\end{equation}
Let $w=\sqrt{8Var[M_A[j]M_B[j]]}$, we get
\begin{equation}
  \Pr[|Est_j-|A\Join B||\ge \sqrt{8\mathrm{Var}[M_A[j]M_B[j]]}]\le\frac{1}{8}
\end{equation}
Using the application of Chernoff Bounds, let $k=4\log\frac{1}{\delta}$, we can use $Est=\mathop{median}\limits_{j\in[k]}(Est_j)$ 
to reduce the failure probability to $\delta$.
$\Pr[|Er|\ge\frac{4}{\sqrt{m}}|F_1(A)+\frac{kc_{\epsilon}^2-1}{2}||F_1(B)+\frac{kc_{\epsilon}^2-1}{2}|]\le \delta$.
\end{proof}

Based on Theorem~\ref{theorem:error_bound}, the error of the LDPJoinSketch-based join size estimation is limited. LDP introduces an additional error to the error bound, but the influence caused by LDP does not destroy the utility of the sketch, since $\frac{kc_{\epsilon}^2-1}{2}$ is much smaller than $F_1(A)$ and $F_1(B)$ in reality.


\section{LDPJoinSketch+}\label{sec:LDPJoinSketchPlus}

To reduce the hash-collision while preserving privacy, we present a novel two-phase framework known as LDPJoinSketch+. This framework builds upon and enhances the accuracy of LDPJoinSketch by specifically addressing hash collisions (Section~\ref{sec:framework_LDPJoinSketchPlus}). In LDPJoinSketch+, We design a Frequency-Aware Perturbation (FAP) mechanism for the client-side (Section~\ref{sec:Frequency_Aware_Perturbation}). Finally, we show how to estimate the join size based on the LDPJoinSketch+ (Section~\ref{sec:server_LDPJoinSketchPlus}).




\subsection{The framework of LDPJoinSketch+}\label{sec:framework_LDPJoinSketchPlus}
LDPJoinSketch+ aims to reduce error caused by hash-collisions under LDP. It has a two-phase framework.
In phase 1, we identify candidate frequent join values from each table and communicate these candidates to the clients. In phase 2, each client distinguishes whether its value is high-frequency or not, and encodes high-frequency and low-frequency values in distinct manners, all while ensuring compliance with LDP.

We use the pseudo-code in Algorithm~\ref{algorithm:framework-LDPJoinSketch+} to show the framework of LDPJoinSketch+. LDPJoinSketch+ has two phases.
In phase 1, we find the frequent item set $FI$ based on the LDPJoinSketches $M_A$ and $M_B$ constructed from sample users for attributes $A$ and $B$ (Phase 1, line 1-4).
In phase 2, to avoid allocating privacy budget, we divide the users into two groups (Phase 2, line 1), one for the join size estimation of low-frequency values, another for high-frequency values. We can use the whole privacy budget for each group according to the composition theorem of LDP. For ease of description, we encapsulate the process of perturbing each data through FAP and aggregating all perturbed data to construct a sketch as a function called $sk$. Based on $sk$, we can construct sketches $ML_A$ and $ML_B$ summarizing the low-frequency items of attribute $A$ and $B$ with the data from $A_1$ and $B_1$ (Phase 2, line 2). $MH_A$ and $MH_B$ summarizing the high-frequency items can be constructed similarly (Phase 2, line 3).
It then computes the join size for low-frequency values as $LEst$ and for high-frequency values as $HEst$ (line 4-5). The final result is the sum of the scaled $HEst$ and $LEst$ (Phase 2, line 6). Because $LEst$ is the estimated join size of data from $A_1$ and $B_1$, we estimate the join size $|A\Join B|$ by multiplying a scale $\frac{|A||B|}{|A1||B1|}$.

\begin{algorithm}
  \caption{LDPJoinSketch+}\label{algorithm:framework-LDPJoinSketch+}
  \hspace*{0.02in} {\bf Phase 1: Find frequent join values}
  \begin{algorithmic}[1]
  \State $S_A, S_B\leftarrow$ sample clients from $A$ and $B$ respectively.
  \State Clients: Perturb data from $S_A,S_B$ with Algorithm 1.
  \State Server: Construct LDPJoinSketch $M_A$ and $M_B$.
  \State Frequent item set $FI\leftarrow FreqItems(MA,MB)$.
\end{algorithmic}
  \hspace*{0.02in} {\bf Phase 2: Join size estimation}
  \begin{algorithmic}[1]
  \State Groups $A_1$, $A_2\leftarrow A$; Groups $B_1$, $B_2\leftarrow B$.
  \State $ML_A\leftarrow sk(A_1,L,\epsilon,FI)$; $ML_B\leftarrow sk(B_1,L,\epsilon,FI)$.
  \State $MH_A\leftarrow sk(A_2,H,\epsilon,FI)$; $MH_B\leftarrow sk(B_2,H,\epsilon,FI)$.
  \State $LEst\leftarrow JoinEst(ML_A,ML_B, mode=L)$.
  \State $HEst\leftarrow JoinEst(MH_A,MH_B, mode=H)$.
  \State \textbf{return: $\frac{|A||B|}{|A_1||B_1|}LEst+\frac{|A||B|}{|A_2||B_2|}HEst$}
  \end{algorithmic}
  \hspace*{0.02in} {\bf Func $sk(D, mode, \epsilon, FI)$}
  \begin{algorithmic}[1]
    \For {$d^{(i)}\in D$}
      \State $(y^{(i)}, j^{(i)}, l^{(i)})\leftarrow FAP(d^{(i)}, mode,\epsilon,FI)$ 
    \EndFor
    \State $M\leftarrow PriSk((y^{(1)}, j^{(1)}, l^{(1)}),\dots,(y^{(n)}, j^{(n)}, l^{(n)}),\epsilon,k,m)$
    \State return $M$
  \end{algorithmic}
\end{algorithm}

We will introduce FAP mechanism and how to estimate the join size based on the LDPJoinSketch+ in subsequent sections.

\subsection{Frequency-Aware Perturbation}\label{sec:Frequency_Aware_Perturbation}
Phase 2 of LDPJoinSketch+ aims to improve accuracy by estimating the join size of high-frequency and low-frequency items separately. To achieve this without privacy leaks, we propose a frequency-aware perturbation (FAP) mechanism.

As the phase 1 of LDPJoinSketch+ gets the frequent items, this knowledge enables each client to distinguish whether its value is high-frequency or not, and differently handle high-frequency and low-frequency items. For ease of explanation, we refer to the value that possesses the same property as the estimation target as the ``target value'', while the value that does not possess the same property as the estimation target will be referred to as ``non-target values''. For instance, in constructing sketches for estimating the join size of high-frequency items, the high-frequency values would be considered target values, while the low-frequency values would be considered non-target values. FAP has two goals: (1) the server cannot infer whether the true value is high-frequency or not from the perturbed value, (2) the contribution of non-target values to the sketches can be removed.

For the first goal, FAP encodes the target values in the same way as LDPJoinSketch does, while encoding non-target values independently from their true values. Both the encoded values of target and non-target values are perturbed according to random response as LDPJoinSketch does before being sent to the server. Thus, the privacy still can be preserved.

For the second goal, non-target values are encoded randomly, resulting in their impact spreading uniformly across each cell of the sketch. Consequently, the influence of non-target values can be effectively removed from the target sketch, given that we know the total number of non-target values. This improves the accuracy of target estimation.

\begin{algorithm}
  \caption{Frequency-Aware Perturbation (FAP)}\label{Algorithm:FAP}
  \hspace*{0.02in} {\textbf{Input}: $d\in D$, mode, $\epsilon$,  $FI$}\\
  \hspace*{0.02in} {\textbf{Output}: perturbed value $y$, index $j,l$}
  \begin{algorithmic}[1]
  \If  { (mode==H) == ($d\notin FI$)} \Comment{Non-target value}
    \State Sample $j$ uniformly at random from [k]
    \State Sample $l,r$ uniformly at random from [m].
    \State Initialize a vector $v\leftarrow \{0\}^{1\times m}$
    \State Set $v[r]\leftarrow 1$
    \State Transform $w\leftarrow v\times H_m$
    \State Sample $b\in\{-1,+1\}$, where $\Pr[b=-1]=\frac{1}{e^{\epsilon}+1}$.
    \State $y\leftarrow b\cdot w[l]$.
  \Else \Comment{Target value}
    \State $y,j,l \leftarrow$ LDPJoinSketch-client($d$, $\epsilon$, $m$, $k$)
  \EndIf
  \State \textbf{return: $y,j,l$}
  \end{algorithmic}
\end{algorithm}

The pseudo-code for the FAP algorithm is presented in Algorithm~\ref{Algorithm:FAP}. The parameter $mode$ signifies the encoding model for data. When $mode=H$, it indicates that the target values are high-frequency ones. In this case, low-frequency items are encoded randomly. Conversely, when $mode=L$, the target values are low-frequency ones. Therefore, when both $mode==H$ and $d\notin FI$ are either true or false, the algorithm employs random encoding for data $d$. Otherwise, it encodes and perturbs data $d$ using the client-side of LDPJoinSketch.

The sole distinction between encode methods for target and  non-target values lies in line 5: FAP encodes a non-target value $d$ randomly with $v[r]\leftarrow 1$ ($r$ is chosen uniform at random from $[m]$), but encodes a target value $d$ by $v[h_j(d)]\leftarrow \xi_j(d)$ according to the client side of LDPJoinSketch (line 10).

\begin{example}\label{example2}
  We also use an example shown in Fig.~\ref{Fig:Example_of_LDPJoinSketchPlus} to show the process of FAP. Since a target value is encoded and perturbed based on LDPJoinSketch, we only show how to handle a non-target value. The parameters in this example are the same with those in Example~\ref{example1}.
  A significant distinction from LDPJoinSketch is that FAP employs a randomly chosen variable $r\in [m]$ to replace the index $h_j(d)$. Consequently, the encoding of a non-target value $d$ becomes independent of its true value, ensuring that the impact of non-target values is evenly distributed across each cell of the sketch.
\end{example}

\begin{figure}[htbp]
  \centering
  \vspace{-0.5cm}
  \includegraphics[scale=0.55]{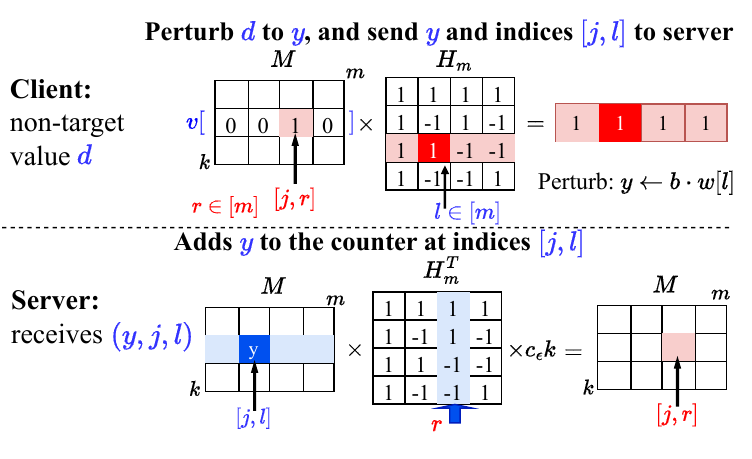}
  \caption{Example of LDPJoinSketch+}
  \label{Fig:Example_of_LDPJoinSketchPlus}
\end{figure}


\begin{theorem}
  Algorithm FAP satisfies $\epsilon$-LDP.
\end{theorem}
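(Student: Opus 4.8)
The plan is to show that FAP's output distribution satisfies the $\epsilon$-LDP inequality~\eqref{LDP} by reducing the analysis to the already-established privacy of LDPJoinSketch (Theorem~1). The key observation is that FAP produces an output $(y,j,l)$ by one of two encoding paths — the ``target'' path (identical to Algorithm~\ref{Algorithm:Client-Side}) or the ``non-target'' path (random encoding $v[r]\leftarrow 1$) — and crucially, which path is taken depends on the \emph{pair} (mode, $d\in FI$), while the output alphabet $\{-1,+1\}\times[k]\times[m]$ is the same for both paths. So the server, seeing only $(y,j,l)$, cannot even tell which path was used; this is exactly what makes the frequency property hidden.

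First I would fix $\epsilon$, mode, and $FI$, and take two arbitrary inputs $d,d'\in D$. I would split into cases according to whether each of $d,d'$ is a target or non-target value under the given mode. In \textbf{Case 1} (both target), FAP runs exactly Algorithm~\ref{Algorithm:Client-Side} on both, so the ratio $\Pr[\mathcal{A}(d)=(y,j,l)]/\Pr[\mathcal{A}(d')=(y,j,l)]$ is bounded in $[e^{-\epsilon},e^{\epsilon}]$ directly by Theorem~1. In \textbf{Case 2} (both non-target), the encoding is $v[r]\leftarrow 1$ with $r\sim U[m]$ independent of the actual value, so the output distribution of FAP is \emph{identical} for $d$ and $d'$ — the ratio is exactly $1$, trivially within $[e^{-\epsilon},e^{\epsilon}]$. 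The substantive case is \textbf{Case 3} (one target, one non-target, say $d$ target and $d'$ non-target). Here I need to compare the target distribution on input $d$ with the non-target distribution. The point is that in both paths the final step is $y\leftarrow b\cdot w[l]$ with $b\in\{-1,+1\}$, $\Pr[b=-1]=\frac{1}{e^{\epsilon}+1}$, and $j\sim U[k]$, $l\sim U[m]$, where $w=v\times H_m$ has all entries in $\{-1,+1\}$ regardless of whether $v$ has its single nonzero entry at $h_j(d)$ with sign $\xi_j(d)$ or at $r$ with sign $+1$. So for any fixed $(j,l)$, the conditional law of $y$ is: $+1$ with probability $\frac{e^{\epsilon}}{1+e^{\epsilon}}$ or $\frac{1}{1+e^{\epsilon}}$ (depending on the sign of the relevant Hadamard entry) — i.e.\ a $\{-1,+1\}$-valued random variable whose two probabilities are $\frac{e^{\epsilon}}{1+e^{\epsilon}}$ and $\frac{1}{1+e^{\epsilon}}$ in \emph{some} order. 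Averaging over $j$ and $r$ keeps every atom of the output distribution in the interval $[\frac{1}{k m}\cdot\frac{1}{1+e^{\epsilon}},\ \frac{1}{km}\cdot\frac{e^{\epsilon}}{1+e^{\epsilon}}]$, so any two such distributions — the $d$-target one and the $d'$-non-target one — have pointwise ratio at most $e^{\epsilon}$. This handles Case 3 (and its symmetric variant).

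The main obstacle is making Case 3 rigorous without hand-waving: one must check that \emph{both} the target and non-target output distributions assign, to every $(y,j,l)$, a probability lying between $\frac{1}{km(1+e^{\epsilon})}$ and $\frac{e^{\epsilon}}{km(1+e^{\epsilon})}$. For the non-target side this is immediate since, conditioned on $(j,l)$ and the random $r$, $w[l]=H_m[r,l]\in\{-1,+1\}$ and then $y=b\cdot w[l]$ flips sign with probability $\frac{1}{1+e^{\epsilon}}$; marginalizing over $r$ (uniform) and $(j,l)$ (uniform) gives a value in that band. For the target side the same bound follows from the computation already carried out in the proof of Theorem~1 (the quantity in~\eqref{Equation:LDP-Proof1}). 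Once all atoms of both distributions lie in a common interval $[p_{\min},p_{\max}]$ with $p_{\max}/p_{\min}=e^{\epsilon}$, the $\epsilon$-LDP inequality holds for every pair of inputs across all cases, which completes the proof.

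\begin{proof}
Fix the privacy budget $\epsilon$, the mode, and the frequent-item set $FI$. Let $\mathcal{A}$ denote Algorithm~\ref{Algorithm:FAP}. Its output space is $\mathcal{Y}=\{-1,+1\}\times[k]\times[m]$, the same regardless of the branch taken in line~1. We show that for all $d,d'\in D$ and all $(y,j,l)\in\mathcal{Y}$, $\Pr[\mathcal{A}(d)=(y,j,l)]\le e^{\epsilon}\Pr[\mathcal{A}(d')=(y,j,l)]$.

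Call $d$ a \emph{target} value if the condition in line~1 is false (so lines 9--11 execute) and a \emph{non-target} value otherwise. For a target value, $\mathcal{A}$ runs exactly the client side of LDPJoinSketch (Algorithm~\ref{Algorithm:Client-Side}); for a non-target value, $\mathcal{A}$ forms $v$ with a single nonzero entry $v[r]=1$ for $r\sim U[m]$, sets $w=v\times H_m\in\{-1,+1\}^{1\times m}$, draws $j\sim U[k]$, $l\sim U[m]$, and outputs $y=b\cdot w[l]$ with $\Pr[b=-1]=\frac{1}{e^{\epsilon}+1}$.

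We first record that, whether $d$ is target or non-target, every atom of the output distribution satisfies
\begin{equation}\label{Eq:FAP-band}
\frac{1}{km}\cdot\frac{1}{1+e^{\epsilon}}\ \le\ \Pr[\mathcal{A}(d)=(y,j,l)]\ \le\ \frac{1}{km}\cdot\frac{e^{\epsilon}}{1+e^{\epsilon}}.
\end{equation}
Indeed, conditioning on the uniform choices of $j$ (and of $r$ in the non-target case), the value $w[l]\in\{-1,+1\}$ is determined, and then $y=b\cdot w[l]$ equals any fixed sign with probability either $\frac{e^{\epsilon}}{1+e^{\epsilon}}$ or $\frac{1}{1+e^{\epsilon}}$; multiplying by $\Pr[J=j]\Pr[L=l]=\frac{1}{km}$ and averaging over $r$ (in the non-target case) yields~\eqref{Eq:FAP-band}. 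For the target case, \eqref{Eq:FAP-band} is exactly the content of the computation in~\eqref{Equation:LDP-Proof1}--\eqref{Equantion:LDP-Proof2} in the proof of Theorem~1.

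Now take any $d,d'\in D$. If both are target values, the bound is Theorem~1. If both are non-target values, the law of $\mathcal{A}(d)$ does not depend on the actual value (the encoding uses only the random $r$), so $\Pr[\mathcal{A}(d)=(y,j,l)]=\Pr[\mathcal{A}(d')=(y,j,l)]$ and the ratio is $1$. If one is target and the other non-target, then by~\eqref{Eq:FAP-band} both $\Pr[\mathcal{A}(d)=(y,j,l)]$ and $\Pr[\mathcal{A}(d')=(y,j,l)]$ lie in the interval $[\,p_{\min},p_{\max}\,]$ with $p_{\min}=\frac{1}{km(1+e^{\epsilon})}$ and $p_{\max}=\frac{e^{\epsilon}}{km(1+e^{\epsilon})}$, hence their ratio is at most $p_{\max}/p_{\min}=e^{\epsilon}$. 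In all cases $\Pr[\mathcal{A}(d)=(y,j,l)]\le e^{\epsilon}\Pr[\mathcal{A}(d')=(y,j,l)]$, so FAP satisfies $\epsilon$-LDP.
\end{proof}
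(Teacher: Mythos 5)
Your proof is correct and rests on the same core observation as the paper's: for every output atom $(y,j,l)$, both the target path and the non-target path assign a probability that is (an average of) $\frac{1}{km}\cdot\frac{e^{\epsilon}}{1+e^{\epsilon}}$ or $\frac{1}{km}\cdot\frac{1}{1+e^{\epsilon}}$, so any two output distributions differ pointwise by at most a factor $e^{\epsilon}$. The paper only writes out the mixed target/non-target comparison (deferring the target--target case to Theorem~1 and leaving the non-target--non-target case implicit), whereas you make the three-case split and the marginalization over the random index $r$ explicit; this is a more careful write-up of the same argument rather than a different route.
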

\begin{proof}
  We have proved that LDPJoinSketch satisfies $\epsilon$-LDP. 
  We only need to prove that the server still cannot distinguish the outputs of a non-target value $d$ and a target value $d'$.

  Suppose the encode of $d$ and $d'$ are $v$ and $v'$, respectively, the differences between $v$ and $v'$ are on two bits, i.e., $v[r]=1$ and $v'[h_j(d')]=\xi_j(d')$. Let $J$ be the random variable that selects uniformly from  $[k]$, $L$ be the random variable that selects uniformly from $[m]$, and $B$ be the random variable for the random bit $b$. Here, $\Pr[B=1]=\frac{e^\epsilon}{1+e^\epsilon}$, $\Pr[B=-1]=\frac{1}{1+e^\epsilon}$.
  \begin{equation*}
     \frac{\Pr[\mathcal{A}(d)=(y,j,l)]}{\Pr[\mathcal{A}(d')=(y,j,l)]} \\
      =\frac{\Pr[B H_m[l,r]=y|J=j]}{\Pr[B H_m[l,h_j(d')]\xi_j(d')=y|J=j]}
  \end{equation*}
  Since both $H_m[l,r], H_m[l,h_j(d')]\xi_j(d')\in \{-1,1\}$, the probability of obtaining the same output with different inputs $d$ and $d'$ is similar, as demonstrated in the following equation.
  \begin{equation}
    e^{-\epsilon}\le\frac{\Pr[\mathcal{A}(d)=(y,j,l)]}{\Pr[\mathcal{A}(d')=(y,j,l)]}\le e^{\epsilon}
  \end{equation}
  Thus, the Algorithm FAP, denoted as $\mathcal{A}$, satisfies $\epsilon$-LDP.
\end{proof}

Each client perturbs its value based on FAP. The server receives the perturbed values from two groups and construct sketches for join size estimation of high-frequency and low-frequency values respectively. We will show how to compute the join size based on LDPJoinSketch+ in the next part.

\subsection{The server-side of LDPJoinSketch+}\label{sec:server_LDPJoinSketchPlus}
The server-side of LDPJoinSketch+ has different aims in the two phases. Phase 1 aims to find the frequent items. Phase 2 aims to construct sketches summarizing high-frequency and low-frequency values separately and estimate the join size. 

\noindent\textbf{Phase 1: Frequency estimation based on LDPJoinSketch.}\\
We use the following Theorem~\ref{Theorem:FrequencyEstimation} to prove that the LDPJoinSketch can provide unbiased frequency estimations.
\begin{theorem}\label{Theorem:FrequencyEstimation}
  Given an LDPJoinSketch $M$ summarizing values of attribute $A$. The frequency of a value $d\in A$  can be estimated as $\tilde{f}(d)=\mathop{mean}\limits_{j\in [1,k]} M[j,h_j(d)]\xi_j(d)$.
Here, $\tilde{f}(d)$ is the unbiased estimation of $f(d)$, i.e., $\mathbb{E}[\tilde{f}(d)]=f(d)$.
\end{theorem}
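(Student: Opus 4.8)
The goal is to show that $\tilde{f}(d) = \mathop{mean}_{j \in [1,k]} M[j, h_j(d)] \xi_j(d)$ is unbiased for $f(d)$. The natural strategy is to expand $M[j, h_j(d)]$ as the sum of the contributions of all private records, i.e. $M[j, h_j(d)] = \sum_{d^{(i)} \in D} M(j, h_j(d))^{(i)}$, and then apply linearity of expectation together with Theorem~\ref{Lemma:expectation_of_one_entry}, which already gives us the expected contribution of a single record $d^{(i)}$ to the counter at $[j, h_j(d)]$.

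\textbf{Key steps.} First I would fix a line index $j$ and write
\begin{equation}
\mathbb{E}[M[j,h_j(d)]\xi_j(d)] = \xi_j(d)\sum_{d^{(i)}\in D}\mathbb{E}[M(j,h_j(d))^{(i)}].\nonumber
\end{equation}
Next, I would invoke Theorem~\ref{Lemma:expectation_of_one_entry} to substitute $\mathbb{E}[M(j,h_j(d))^{(i)}] = \xi_j(d)\mathbbm{1}\{d^{(i)}=d\} + \frac{\xi_j(d^{(i)})}{m}\mathbbm{1}\{d^{(i)}\neq d\}$. Splitting the sum over records into those with $d^{(i)}=d$ (there are $f(d)$ of them, contributing $\xi_j(d)^2 f(d) = f(d)$ after multiplying by the outer $\xi_j(d)$) and those with $d^{(i)}\neq d$, I get
\begin{equation}
\mathbb{E}[M[j,h_j(d)]\xi_j(d)] = f(d) + \frac{\xi_j(d)}{m}\sum_{d^{(i)}\neq d}\xi_j(d^{(i)}).\nonumber
\end{equation}
Then I would take expectation over the random hash $\xi_j$ (or argue via its distributional symmetry): since $\xi_j$ maps each distinct value to $\{-1,+1\}$ uniformly and pairwise-independently, $\mathbb{E}_{\xi_j}[\xi_j(d)\xi_j(d^{(i)})] = 0$ for every $d^{(i)}\neq d$, so the residual cross term vanishes in expectation, leaving $\mathbb{E}[M[j,h_j(d)]\xi_j(d)] = f(d)$. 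Finally, averaging over $j\in[k]$ preserves unbiasedness by linearity, giving $\mathbb{E}[\tilde{f}(d)] = \frac{1}{k}\sum_{j=1}^{k} f(d) = f(d)$.

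\textbf{Main obstacle.} The one subtlety to handle carefully is the treatment of the cross term $\frac{\xi_j(d)}{m}\sum_{d^{(i)}\neq d}\xi_j(d^{(i)})$: Theorem~\ref{Lemma:expectation_of_one_entry} as stated fixes $\xi_j$ and only averages over the other randomness ($J$, $L$, $B$), so strictly speaking its right-hand side still contains $\xi_j(d^{(i)})$ terms. To conclude unbiasedness in the full sense one must additionally average over the choice of the hash family $\xi_j$, using its uniform $\pm 1$ marginals and (at least pairwise) independence to kill these terms. If instead the paper intends the estimator to be conditioned on a fixed $\xi_j$, then what is really being claimed is that the bias term has zero expectation over $\xi_j$ — a point I would state explicitly rather than gloss over. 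Everything else is routine linearity-of-expectation bookkeeping.
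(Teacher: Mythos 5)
Your proposal is correct and follows essentially the same route as the paper's proof: expand $M[j,h_j(d)]$ into per-record contributions, apply Theorem~\ref{Lemma:expectation_of_one_entry} and linearity of expectation, and observe that the colliding cross terms $\xi_j(d)\xi_j(d')$ with $d'\neq d$ have zero expectation by the uniform $\pm 1$, pairwise-independent property of $\xi_j$. Your explicit remark that the cross term only vanishes after averaging over the hash family $\xi_j$ is a point the paper glosses over, but it does not change the argument.
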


\begin{proof}
  According to lemma~\ref{Lemma:expectation_of_one_entry}, we have $M(j,h_j(d))^{(i)}=\xi_j(d)$ for $d^{(i)}=d$, and $M(j,h_j(d))^{(i)}=\frac{\xi_j(d^{(i)})}{m}$ for $d^{(i)}\ne d$.
  \begin{align*}
    &\mathbb{E}[\tilde{f}(d)]= \frac{1}{k}\sum_{j=1}^{k}\sum_{i=1}^{n}\mathbb{E}[M(j,h_j(d))^{(i)}\xi_j(d)]\nonumber\\
    &=\frac{1}{k}\sum_{j=1}^{k}[f(d)\xi_j(d)\xi_j(d)+\!\!\!\!\!\!\!\!\!\sum_{\substack{d'\ne d\\h_j(d)=h_j(d')}}\!\!\!\!\!\!\!\!\!f(d')\xi_j(d')\xi_j(d)]\!=\!f(d)
    \end{align*}
   LDPJoinSketch provides unbiased frequency estimations.
\end{proof}

As LDPJoinSketch can provide unbiased frequency estimation, given a proper threshold $\theta$ to separate the high-frequency and low-frequency values, we can find the frequent item set $FI_A=\{d_i\in A|\tilde{f}(d_i)>\theta |A|\}$, and $FI_B=\{d_i\in B|\tilde{f}(d_i)>\theta|B|\}$ for the attributes $A$ and $B$, where $|A|$ and $|B|$ represent the total frequency of items in attributes $A$ and $B$, respectively. We let the final frequent item set be the union of $FI_A$ and $FI_B$, i.e., $FI=FI_A\cup FI_B$.

\noindent\textbf{Phase 2: Join size estimation.}

The server-side of LDPJoinSketch+ constructs the sketches $MH_A, ML_A$ for attribute $A$ and $MH_B, ML_B$ for attribute $B$ in the same way as LDPJoinSketch. So we only discuss how to estimate the join size based on LDPJoinSketch+.

LDPJoinSketch+ (Algorithm 3) calls $JoinEst$ to separately compute the join size estimation for high-frequency and low-frequency values.
The pseudo-code for the $JoinEst$ algorithm is presented in Algorithm~\ref{Algorithm:estimate}. The algorithm initially calculates the total frequencies of elements belonging to the high-frequency sets of attributes $A$ and $B$ (lines 1-3). Different parameters $mode$ represent different target values for the sketch. When $mode=L$, it indicates that the sketches $M_A$ and $M_B$ aim to summarize low-frequency values. In this case, the algorithm removes the contribution of high-frequency values from the sketches (lines 5-8). Conversely, when $mode=H$, it removes the contribution of low-frequency values from the sketches (lines 9-12).

  \begin{algorithm}
    \caption{JoinEst}\label{Algorithm:estimate}
    \hspace*{0.02in}{\textbf{Input}: $M_A$, $M_B$, mode}\\
    \hspace*{0.02in}{\textbf{Output}: Join size estimation $Est$}
    \begin{algorithmic}[1]
      \For {$d \in FI$} \Comment{$FI$ is the frequent item set in phase 1.}
        \State $HighFreq_A+=\tilde{f}_A(d)\cdot \frac{|A|}{|SA|}$
        \State $HighFreq_B+=\tilde{f}_B(d)\cdot \frac{|B|}{|SB|}$
      \EndFor
      \If {mode==L}
        \State $M_A\leftarrow M_A- \{\frac{HighFreq_A}{m}\}^{k\times m}$
        \State $M_B\leftarrow M_B- \{\frac{HighFreq_B}{m}\}^{k\times m}$
        \State $Est = M_A \cdot M_B$.
      \Else { mode==H}
        \State $M_A\leftarrow M_A- \{\frac{|A|-HighFreq_A}{m}\}^{k\times m}$
        \State $M_B\leftarrow M_B- \{\frac{|B|-HighFreq_B}{m}\}^{k\times m}$
        \State $Est = M_A \cdot M_B$.
      \EndIf
      \State \textbf{return: $Est$}
    \end{algorithmic}
  \end{algorithm}

  The following theorem computes the contribution of non-target values to sketch $M$.
  \begin{theorem}
    The contribution of non-target values to $M[j,x]$:
    $\mathbb{E}[\sum_{d^{(i)}\in [NT]}M(j,x)^{(i)}]=|NT|/m$,
  where $|NT|$ is the total frequency of all the non-target values.
  \end{theorem}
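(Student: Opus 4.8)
The plan is to first pin down the exact random quantity $M(j,x)^{(i)}$ that a single non-target entry $d^{(i)}$ contributes to cell $M[j,x]$, by tracing it through FAP (Algorithm~\ref{Algorithm:FAP}) and the server reconstruction (Algorithm~\ref{Algorithm:Server-Side}). For a non-target value, FAP sets $v[r]\leftarrow 1$ with $r\sim U[m]$, forms $w=v\times H_m$ so that $w[l]=H_m[r,l]$, samples the perturbation bit $B$, and reports $y=B\cdot H_m[r,L]$ together with the uniformly random indices $J\sim U[k]$ and $L\sim U[m]$. On the server side this $y$ is scaled by $kc_\epsilon$, added to entry $[J,L]$, and the sketch is finally multiplied by $H_m^T$; pushing the single reported bit through that inverse transform yields
\begin{equation*}
  M(j,x)^{(i)} = c_\epsilon k\, B\, H_m[r,L]\, H_m[L,x]\, \mathbbm{1}\{J=j\},
\end{equation*}
with $B$, $J$, $r$, $L$ mutually independent and $\Pr[B=1]=\frac{e^\epsilon}{1+e^\epsilon}$. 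This is exactly the expression for an LDPJoinSketch contribution with $h_j(d^{(i)})$ replaced by the random index $r$ and $\xi_j(d^{(i)})$ replaced by $1$.

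Second, I would take expectations, factoring over the independent sources of randomness. Here $\mathbb{E}[B]=\frac{e^\epsilon-1}{e^\epsilon+1}=1/c_\epsilon$ and $\mathbb{E}[\mathbbm{1}\{J=j\}]=1/k$, which cancel the scaling factor $c_\epsilon k$. For the Hadamard part I would use orthogonality of the rows of $H_m$, i.e.\ $H_mH_m^T=mI$: averaging over the uniform $L$ gives $\mathbb{E}_L\big[H_m[r,L]H_m[L,x]\big]=\mathbbm{1}\{r=x\}$, and then averaging over the uniform $r$ gives $\mathbb{E}_{r,L}\big[H_m[r,L]H_m[L,x]\big]=1/m$. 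Combining these, $\mathbb{E}[M(j,x)^{(i)}]=\frac1m$ for every column $x$: the contribution of a randomly encoded value is spread uniformly over all $m$ cells of a line, which is precisely what makes it removable.

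Third, by linearity of expectation I would sum over all non-target data entries; since there are $|NT|$ of them (counted with multiplicity, so that $|NT|$ is their total frequency),
\begin{equation*}
  \mathbb{E}\Big[\sum_{d^{(i)}\in [NT]} M(j,x)^{(i)}\Big]=\sum_{d^{(i)}\in [NT]}\mathbb{E}[M(j,x)^{(i)}]=\frac{|NT|}{m}.
\end{equation*}

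The only real obstacle is the bookkeeping of the forward Hadamard transform in FAP against the inverse transform on the server, so as to be certain that the single sampled and reported bit re-expands into the product $H_m[r,L]H_m[L,x]$ in cell $x$; once that step is correct, the key identity $\mathbb{E}_{r,L}[H_m[r,L]H_m[L,x]]=1/m$, which holds for \emph{every} $x$ exactly because $r$ is uniformly random, does all the remaining work. This also justifies why $JoinEst$ (Algorithm~\ref{Algorithm:estimate}) subtracts the flat term equal to the non-target total frequency divided by $m$ from each cell in order to cancel the non-target contribution.
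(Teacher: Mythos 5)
Your proposal is correct and follows essentially the same route as the paper's proof: write the non-target contribution as $c_\epsilon k\,B\,H_m[r,L]\,H_m[L,x]\,\mathbbm{1}\{J=j\}$, cancel the debiasing factor via $\mathbb{E}[B]=1/c_\epsilon$ and $\mathbb{E}[\mathbbm{1}\{J=j\}]=1/k$, reduce the Hadamard product to $\mathbbm{1}\{r=x\}$ by orthogonality, average over the uniform $r$ to get $1/m$, and sum by linearity. Your write-up is merely more explicit about tracing the forward/inverse Hadamard transforms, which the paper leaves implicit in its definition of $M(j,x)^{(i)}$.
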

  \begin{proof}
    The contribution of a non-target value $d^{(i)}$ to $M[j,x]$:
    \begin{align}
       &\mathbb{E}[M(j,x)^{(i)}] = \frac{1}{k}\mathbb{E}[M(j,x)^{(i)}|J=j]  \nonumber \\
       & = \mathbb{E}[c_\epsilon H_m[x,L]H_m[L,R]B]
       = \mathbb{E}[\mathbbm{1}\{x=R\}]
        = \frac{1}{m},
    \end{align}
    where $L,R$ are random variables that select uniformly from $[m]$. Therefore, $\mathbb{E}[\sum_{d^{(i)}\in [NT]}M(j,x)^{(i)}]=|NT|/m$.
  \end{proof}

With the above theorem, we can remove the contribution of non-target values from the sketches. For example, $|NT|$ is $HighFreq_A$ for sketch $M_A$ when $mode=L$ (line 6). In this way, we separately estimate the join size of high-frequency and low-frequency items.

\section{Extension to multi-way Joins}\label{sec:Extension}
Taking inspiration from COMPASS~\cite{DBLP:conf/sigmod/IzenovDRS21}, which utilizes multi-dimensional fast-AGMS sketches to facilitate multi-way joins, we illustrate in Fig.~\ref{Fig:multijoin} how our LDPJoinSketch can be extended to support multi-way joins.

\begin{figure}
  \centering
  \includegraphics[width=0.485\textwidth]{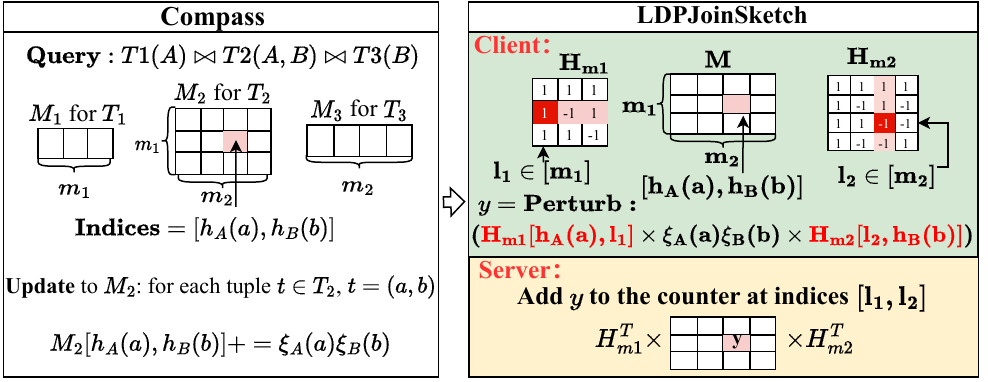}
  \caption{LDPJoinSketch for multi-way join.}
  \label{Fig:multijoin}
\end{figure}

Taking $Q=T_1(A)\mathop{\Join}T_2(A,B) \mathop{\Join}T_3(B)$ as a multi-way join example. In this scenario, each join attribute is equipped with a unique pair of hash functions denoted as $h$ and $\xi$. COMPASS constructs fast-AGMS sketches $M_1$ and $M_3$ for attributes $T_1.A$ and $T_3.B$ respectively. As for $T_2$, it constructs a 2-dim matrix $M_2$ in shape of ($m_1\times m_2$). For every tuple $t\in T_2$ with $t.A=a$ and $t.B=b$, the counter at indices $[h_A(a),h_B(b)]$ is incremented by $\xi_A(a)\xi_B(b)$. The size of $Q$ can be estimated as:
$\sum\limits_{\substack{l_1\in [m_1],l_2\in [m_2]}}M_{1}[l_1]\cdot M_{2}[l_1,l_2]\cdot M_{3}[l_2]$.
The accuracy can be improved by computing the median of $k$ independent estimators like the one in this example.
Our LDPJoinSketch can support multi-way joins in a similar way.
We can directly construct LDPJoinSketches $\tilde{M}_{1}$ and $\tilde{M}_{3}$ for $T_1$ and $T_3$ which contain only one join attribute in each table. Therefore, we only discuss how to construct a sketch $M_{2}$ for $T_2$ with two join attributes. 
Assuming each attribute has only one-pair hash functions $h$ and $\xi$, for a tuple $t\in T_2$ with $t.A=a$ and $t.B=b$, we encode $t$ using hadamard matrixes $H_{m_1}$ and $H_{m_2}$. We extract the $h_A(a)$th line (pink cells) of $H_{m_1}$, the $h_B(b)$th column (pink cells) of $H_{m_2}$, and randomly select two indices  $l_1\in[m_1]$ and $l_2\in[m_2]$.
The client-side encodes $t$ by $H_{m_1}[h_A(a),l_1]\times\xi_A(a)\xi_B(b)\times H_{m_2}[l_2,h_B(b)]$, and then perturbs the encoded value by multiplying it by (-1) with probability $\frac{1}{e^\epsilon+1}$. The perturbed value $y$ is then sent to the server to update the counter with indices $[l_1,l_2]$. The server constructs the sketch with the indices and perturbed values, and restores the sketch by $\tilde{M}=H_{m_1}^T\cdot M \cdot H_{m_2}^T$. A scale factor $c_\epsilon$ is applied to debias the sketch, giving $\tilde{M}\leftarrow c_\epsilon \tilde{M}$. Finally, the server computes the estimation:
\begin{equation}
 Est=\sum\limits_{\substack{l_1\in [m_1],l_2\in [m_2]}}\tilde{M}_{1}[l_1]\cdot \tilde{M}_{2}[l_1,l_2]\cdot \tilde{M}_{3}[l_2]
\end{equation}

\textbf{Discussion.}
We give a solution for simple chain multi-join operations. For chain multi-join queries on tables with a maximum of two join attributes in each table, the computational complexity\footnote{The computational complexity of matrix multiplication is $O(m^3)$, which can be further reduced to $ O(m^{2.371552})$~\cite{williams2024new}.} of our method is $O(m^3)$, if each two-dimensional sketch has a shape of $(m\times m)$. The complexity increases for scenarios involving star join and cyclic join operations. While our method adeptly handles uncomplicated cyclic joins, such as $ T_1(A,B) \Join T_2(B,C) \Join T_3(C,A) $, addressing general multi-joins poses challenges due to intricate join graphs and the need for privacy budget decomposition among multiple join attributes. These challenges exceed the scope of this paper, and addressing them within a single work is a formidable task. Therefore, we defer this exploration to future research.

\section{Experiments}\label{sec:Experiments}

In this section we design experiments to evaluate the performance of our methods on synthetic and real datasets. 

\subsection{Experimental Setup}

\noindent \underline{\textbf{Hardware.}}
We implement all the algorithms on a machine of 256 GB RAM running Ubuntu(20.04.1) with Python 3.9.

\noindent \underline{\textbf{Queries.}}
We consider the following form of join queries:\\
    Q: Select Count(*) from $T_1$ join $T_2$ on $T_1.A=T_2.B$,
where $A$ and $B$ are the private join attributes of tables $T_1$ and $T_2$.

\noindent \underline{\textbf{Datasets.}}
We generate one-dimensional synthetic datasets following zipf and gaussian distribution, respectively. Regarding these synthetic data as the join attribute values is the common setting in previous works~\cite{Ganguly2004ProcessingDJ, Wang2023JoinSketchAS}. We also conduct the experiments on four real-world datasets.

(1) \textbf{Zipf datasets}: We generate several datasets of size $N$ following Zipf distribution with different skewness parameters, whose probability mass function is $f(x|\alpha,N) = \frac{1/x^\alpha}{\sum_{n=1}^{N}(1/n^\alpha)}$, where $\alpha$ is the skewness parameter and $x$ is the rank of item.

(2) \textbf{Gaussian dataset}: We also generate a dataset that follows Gaussian distribution, whose probability density function for a given value $x$ is $f(x) = \frac{1}{\sigma \sqrt{2\pi}} e^{-\frac{(x-\mu)^2}{2\sigma^2}}$, where $\mu$ is the mean of distribution and $\sigma$ is the standard deviation. 

(3) \textbf{TPC-DS dataset}\footnote{https://www.tpc.org/tpcds/}: The TPC Benchmark™ DS is a benchmark for measuring the performance of decision support systems. We extract the store sales data for experiments. 

(4) \textbf{MovieLens dataset}\footnote{https://grouplens.org/datasets/movielens/}. The MovieLens dataset is commonly used in the field of recommender systems and collaborative filtering, containing movie ratings and user information. 

(5) \textbf{Twitter ego-network dataset}\footnote{https://snap.stanford.edu/data/ego-Twitter.html}: This real-world dataset consists data of circles from Twitter.

(6) \textbf{Facebook ego-network}\footnote{https://snap.stanford.edu/data/ego-Facebook.html}: Facebook data is collected from survey participants using Facebook app.

Table~\ref{Table:datasets_info} shows the domain and data size of all datasets.

\begin{table}[hptb]
    \centering
    \caption{Information of Datasets.}
    \begin{tabular}{|c|c|c|}
      \hline
      \textbf{Dataset Name} & \textbf{Domain} & \textbf{Size} \\
      \hline
      Zipf & 4,377--2,816,390 & 40,000,000 \\
      Gaussian & 75,949 & 40,000,000 \\
      MovieLens & 83,239 & 67,664,324 \\
      TPC-DS & 18,000 & 5,760,808 \\
      Twitter & 77,072 & 4,841,532 \\
      Facebook & 4,039 & 352,936 \\
      \hline
    \end{tabular}
    \label{Table:datasets_info}
\end{table}


\noindent \underline{\textbf{Competitors}}

(1) \textbf{k-RR}~\cite{Wang2017LocallyDP}. We adopt it to perturb the join values and compute the join size using the calibrated frequency vectors.

(2) \textbf{Apple-HCMS}~\cite{2017LearningWP}. Apple's private hadamard count mean sketch, which also forms a $k\times m$ sketch.

(3) \textbf{FLH}~\cite{Cormode2021FrequencyEU}. The heuristic fast variant of Optimal Local Hashing (OLH).

(4) \textbf{Fast-AGMS}~\cite{Cormode2005SketchingST}. It estimates the join size \textbf{without privacy-preserving}. We refer to this method as \textbf{FAGMS} for short in the figures.



\noindent \underline{\textbf{Error Metrics}}



(1) \textbf{Absolute Error (AE)}. 
$\frac{1}{t} \sum \vert \mathcal{J} - \mathcal{\hat{J}} \vert$, where $\mathcal{J}$, $\mathcal{\hat{J}}$ denote the true and estimated  join size, and $t$ is the testing rounds.

(2) \textbf{Relative Error (RE)}. 
$\frac{1}{t} \sum \vert \mathcal{J} - \mathcal{\hat{J}} \vert /\mathcal{J}$. The parameters are the same as those defined in AE.

(3) \textbf{Mean Squared Error (MSE)}. Metric for frequency estimation. $MSE=\frac{1}{n} \sum_{d \in D}(f(d)-\tilde{f}(d))^2$, where $f(d)$ and $\tilde{f}(d)$ are true and estimated frequencies of $d$, and $n$ is the number of distinct values in $D$.

\noindent \underline{\textbf{Parameters}}

$\epsilon$: The privacy budget of differential privacy. 

$\alpha$: The skewness parameter of synthetic Zipf datasets. 

$(k, m)$: The number of lines and columns of a sketch.

$r$: Sampling rate of LDPJoinSketch+ used in phase 1.

$\theta$: The threshold for separating high-frequency items and low-frequency items.

\begin{figure}[htbp]
  \centering
  \includegraphics[scale=0.4, trim={0 0 0 0}, clip]{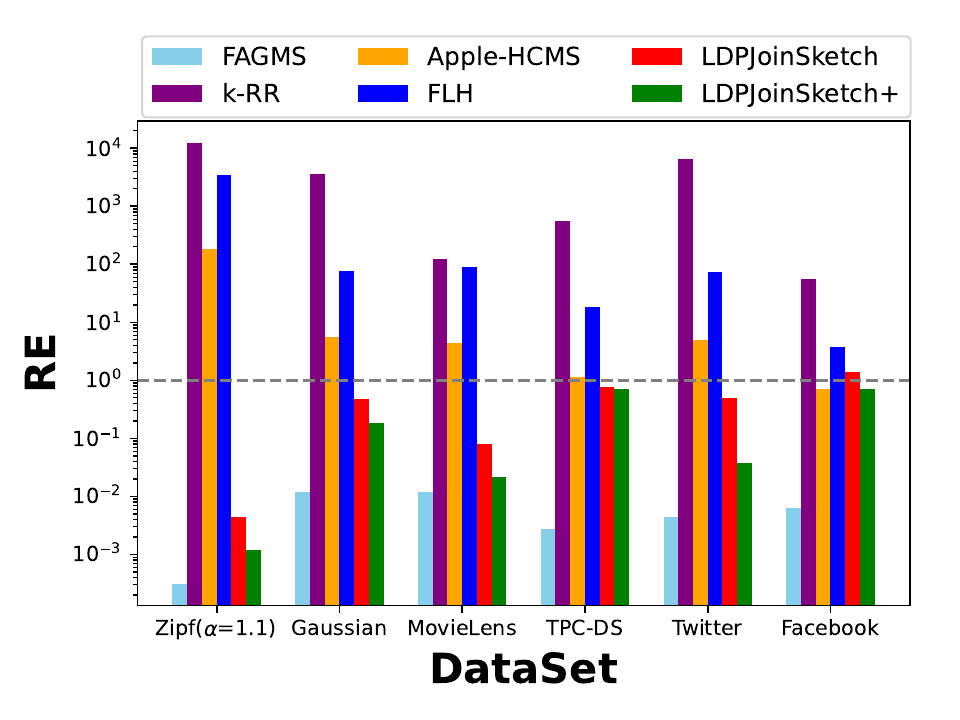}
  \caption{The Accuracy of Join Size Estimation.}
  \label{Fig:Accuracy}
\end{figure}

\subsection{Accuracy}
We test the accuracy of different methods on all the six datasets, and the experimental results are illustrated in Fig.~\ref{Fig:Accuracy}. We fix the privacy parameter $\epsilon=4$ and set sketch parameters ($k=18$, $m=1024$) for sketch-based methods. 
Our proposed method outperforms others in most cases, and the relative error (RE) of our methods are sufficiently small, akin to those of FAGMS(Non-private), which indicates that it provides rigorous privacy protection while maintaining excellent accuracy.  Because methods like k-RR and FLH introduce significant noises to the join value, consequently leading to extremely poor results. Apple-HCMS does not consider hash-collisions. Besides, we can observe that LDPJoinSketch+ achieves better utility to a certain extent on different datasets. The reason is that the FAP further reduces the hash-collision error. Our method demonstrates clear advantages in situations with large data volumes and large domains, but it does not exhibit significant advantages in small datasets with small domain, cause most LDP mechanisms are not suitable for small data. LDP requires a large amount of data to ensure that the noise introduced by perturbation satisfies the desired distribution.

\subsection{Space and Communication cost}
\textbf{Space Cost}. We test the accuracy of sketch-based methods with similar space cost on Zipf($\alpha=2.0$) dataset and the results are shown in Fig.~\ref{Fig:SpaceCost}. We set $\epsilon=10$, $r=0.1$, and $\theta=0.001$. The space cost of Apple-HCMS and LDPJoinSketch only contains the size of one sketch for each table. The space cost of LDPJoinSketch+ includes the size of sketches used in two phases. 
For simplicity, we set same size of sketches used in two phases in LDPJoinSketch+, which means that the space cost in phase 2 is nearly twice that of phase 1. We take a variety of settings to make different methods have similar space costs.
We can observe that our LDPJoinSketch+ attains greater accuracy compared to Apple-HCMS at a comparable space cost level. 


\begin{figure}[hptb]
  \centering
  \begin{minipage}{0.24\textwidth}
    \centering
  \includegraphics[width=\linewidth, trim={0 0 1cm 1cm}, clip]{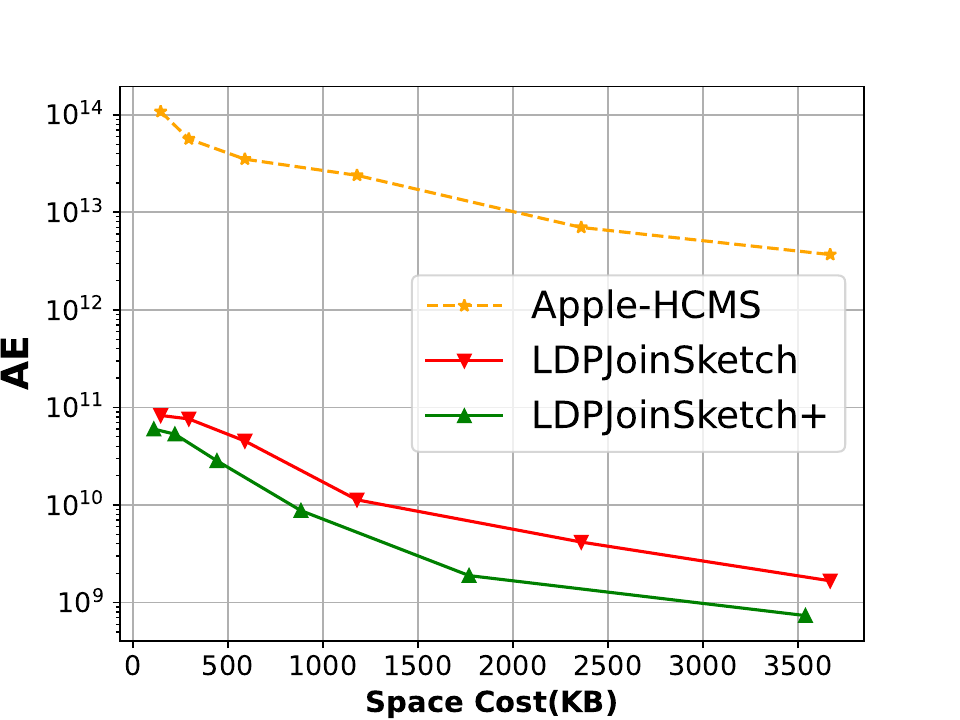}
  \caption{Impact of space cost.}
  \label{Fig:SpaceCost}
  \end{minipage}
  \hfill
  \begin{minipage}{0.24\textwidth}
    \centering
  \includegraphics[width=\linewidth, trim={0 0 1cm 1cm}, clip]{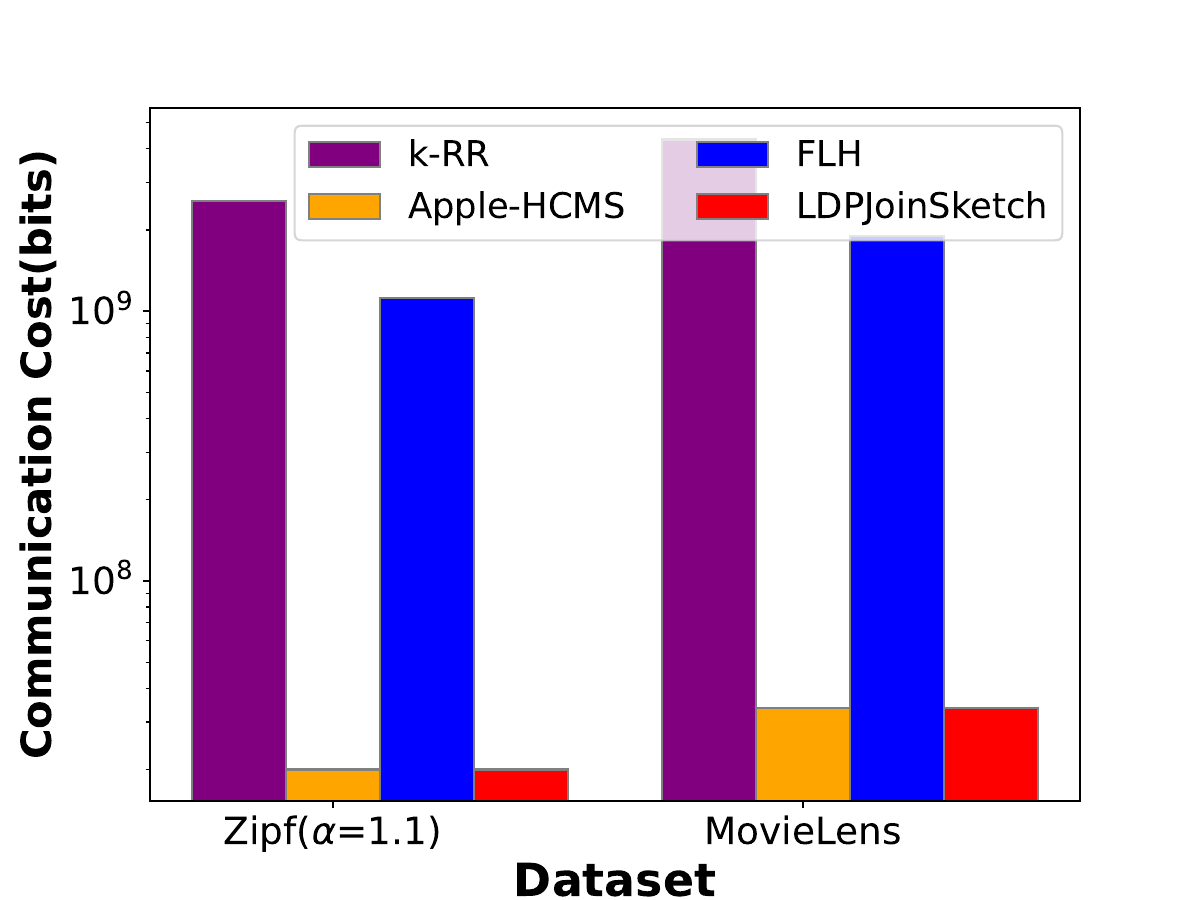}
  \caption{Communication cost.}
  \label{Fig:Communication}
  \end{minipage}
\end{figure}


\textbf{Communication cost.} Fig.~\ref{Fig:Communication} shows the communication costs of different methods on Zipf($\alpha=1.1$) and MovieLens datasets. We set ($k=18$, $m=1024$) and $\epsilon=4$.  The y-axis represents the cumulative number of bits transmitted from all clients to the server. Since LDPJoinSketch and Apple-HCMS both encode each value by the hadamard matrix and send only one bit to the server, they can significantly reduce the communication cost. Instead, k-RR sends the complete encoded value to the server, and FLH also sends an encoded vector with optimal length to the server.


\begin{figure*}[htbp]
    \centering
    \subfigure[Zipf($\alpha$=1.5)]{
        \centering
        \begin{minipage}[b]{0.22\textwidth}
	    \includegraphics[width=1\textwidth, trim={0.2cm 0 0.61cm 0}, clip]{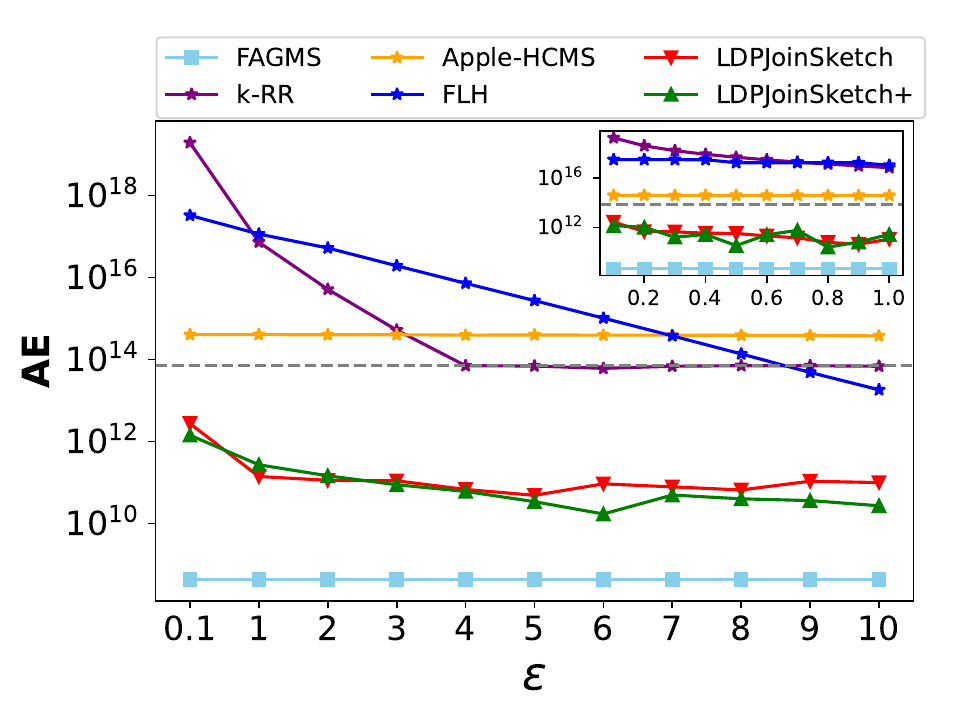}
	    \end{minipage}
	    \label{fig:eps_zipf}
    }
    \subfigure[Gaussian]{
        \centering
        \begin{minipage}[b]{0.22\textwidth}
	    \includegraphics[width=1\textwidth, trim={0.2cm 0 0.61cm 0}, clip]{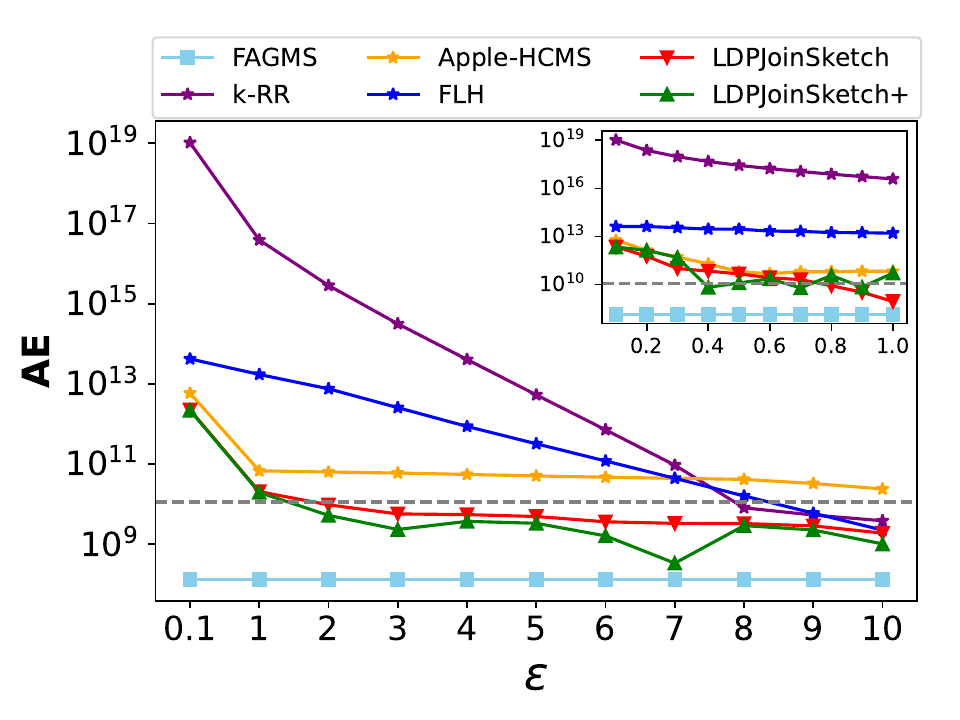}
	    \end{minipage}
	    \label{fig:eps_gaussian}
    }
    \subfigure[MovieLens]{
        \centering
        \begin{minipage}[b]{0.22\textwidth}
	    \includegraphics[width=1\textwidth, trim={0.2cm 0 0.61cm 0}, clip]{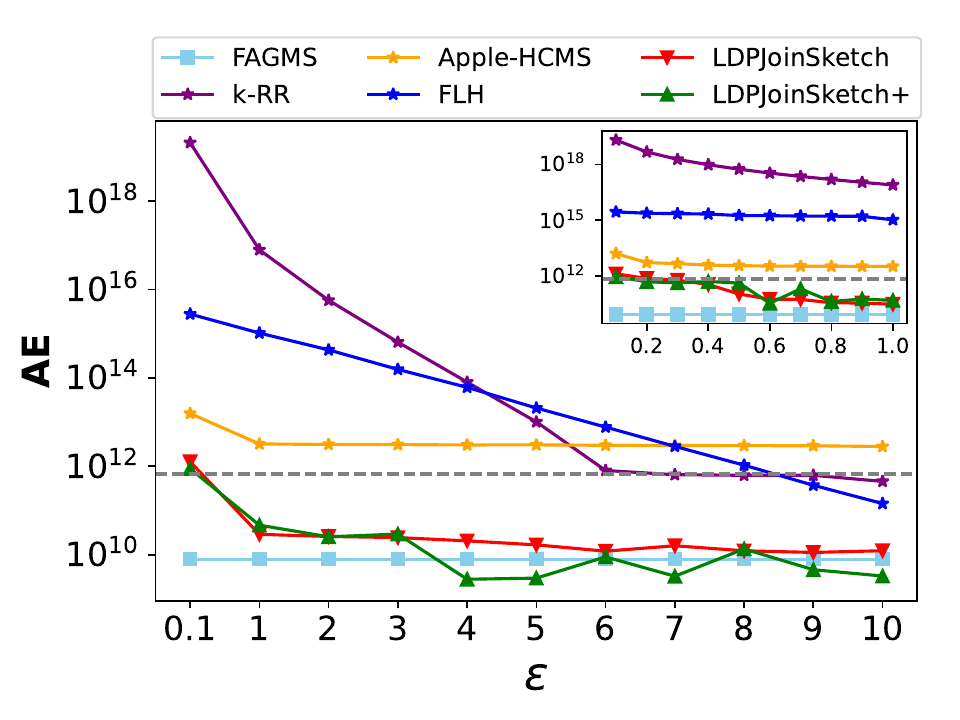}
	    \end{minipage}
	    \label{fig:eps_movielen}
    }
    \subfigure[Twitter]{
        \centering
        \begin{minipage}[b]{0.22\textwidth}
	    \includegraphics[width=1\textwidth, trim={0.2cm 0 0.61cm 0}, clip]{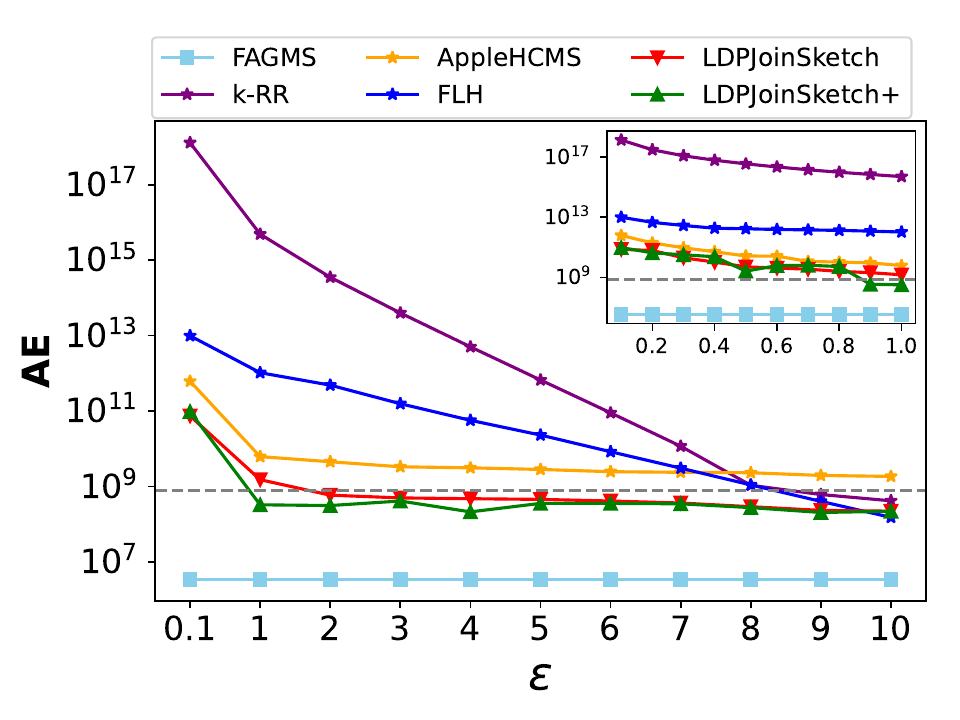}
	    \end{minipage}
	    \label{fig:eps_twitter}
    }
       \caption{Impact of $\epsilon$.}
	\label{fig:epsilon_impact}
\end{figure*}


\begin{figure*}[htbp]
  \centering
  \subfigure[Zipf($\alpha$=1.1)]{
        \centering
        \begin{minipage}[b]{0.22\textwidth}
      \includegraphics[width=1\textwidth, trim={0 0 1.2cm 1cm}, clip]{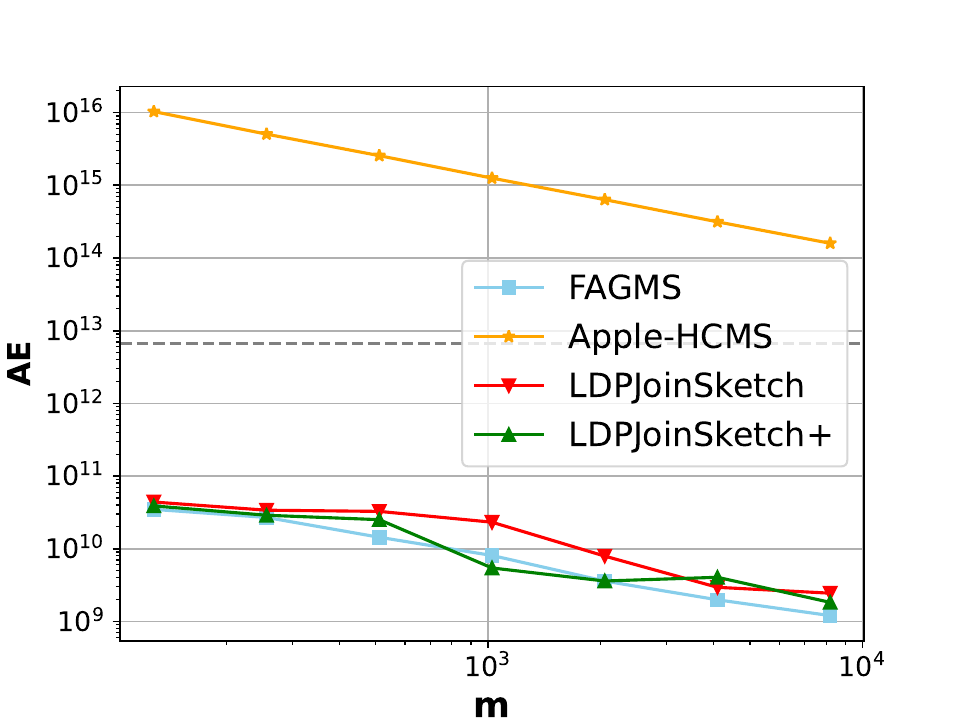}
      \end{minipage}
      \label{fig:m_zipf1_1}
    }
    \subfigure[Zipf($\alpha$=2.0)]{
        \centering
        \begin{minipage}[b]{0.22\textwidth}
      \includegraphics[width=1\textwidth, trim={0 0 1.2cm 1cm}, clip]{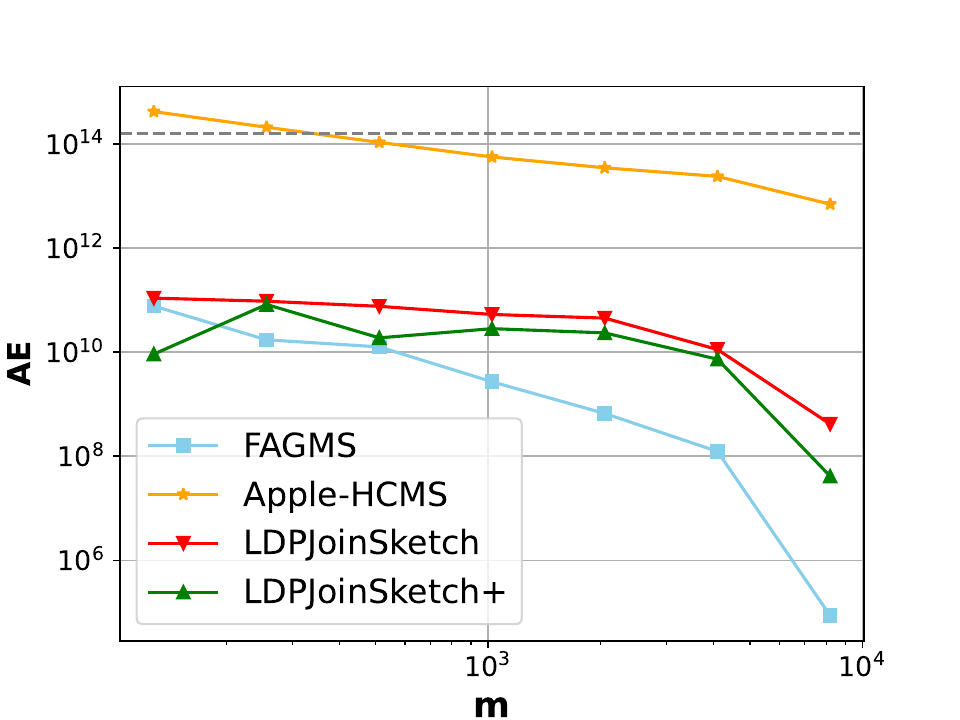}
      \end{minipage}
      \label{fig:m_zipf1_5}
    }
    \subfigure[MovieLens]{
        \centering
        \begin{minipage}[b]{0.22\textwidth}
      \includegraphics[width=1\textwidth, trim={0 0 1.2cm 1cm}, clip]{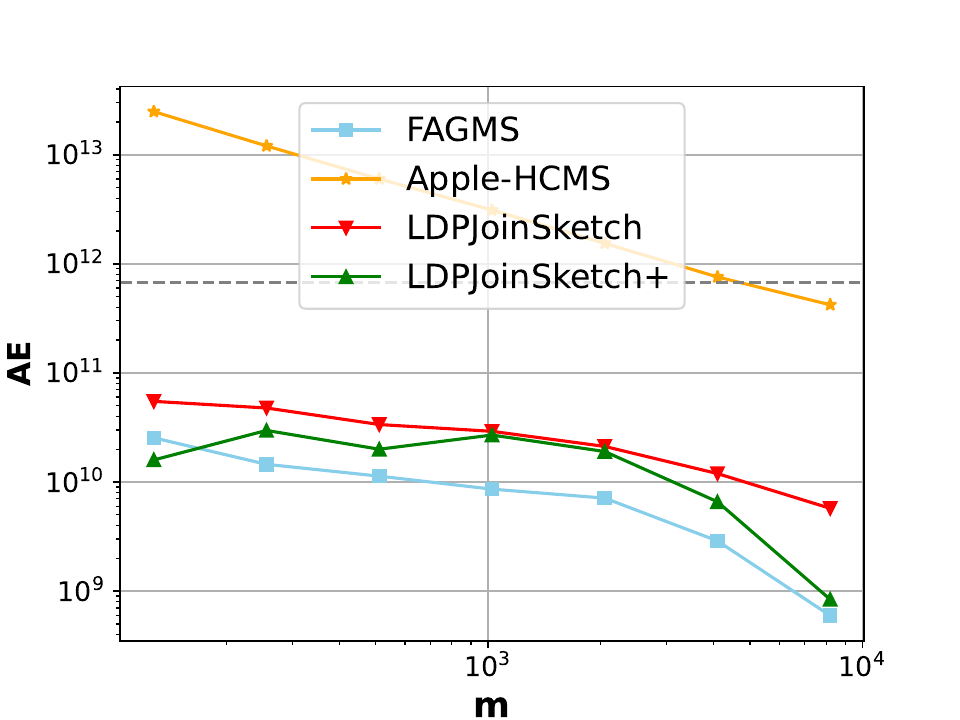}
      \end{minipage}
      \label{fig:m_uniform}
    }
    \subfigure[Twitter]{
        \centering
        \begin{minipage}[b]{0.22\textwidth}
      \includegraphics[width=1\textwidth, trim={0 0 1.2cm 1cm}, clip]{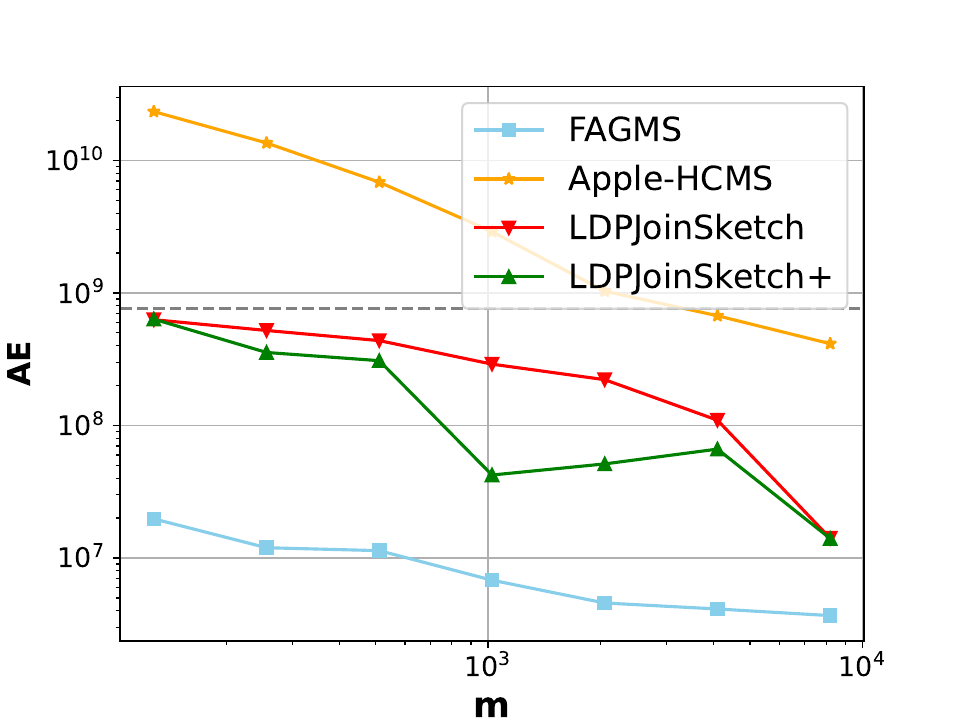}
      \end{minipage}
      \label{fig:m_real}
    }
  \subfigure[Zipf($\alpha$=1.1)]{
      \centering
      \begin{minipage}[b]{0.22\textwidth}
    \includegraphics[width=1\textwidth, trim={0 0 1.5cm 1cm}, clip]{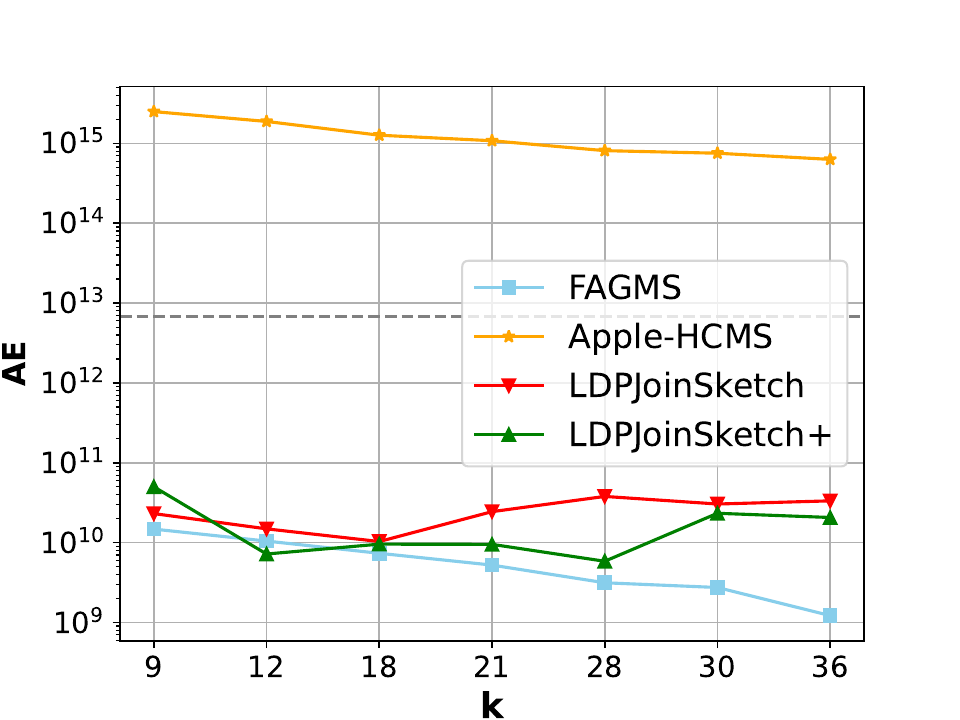}
    \end{minipage}
    \label{fig:k_zipf1_1}
  }
  \subfigure[Zipf($\alpha$=2.0)]{
      \centering
      \begin{minipage}[b]{0.22\textwidth}
    \includegraphics[width=1\textwidth, trim={0 0 1.5cm 1cm}, clip]{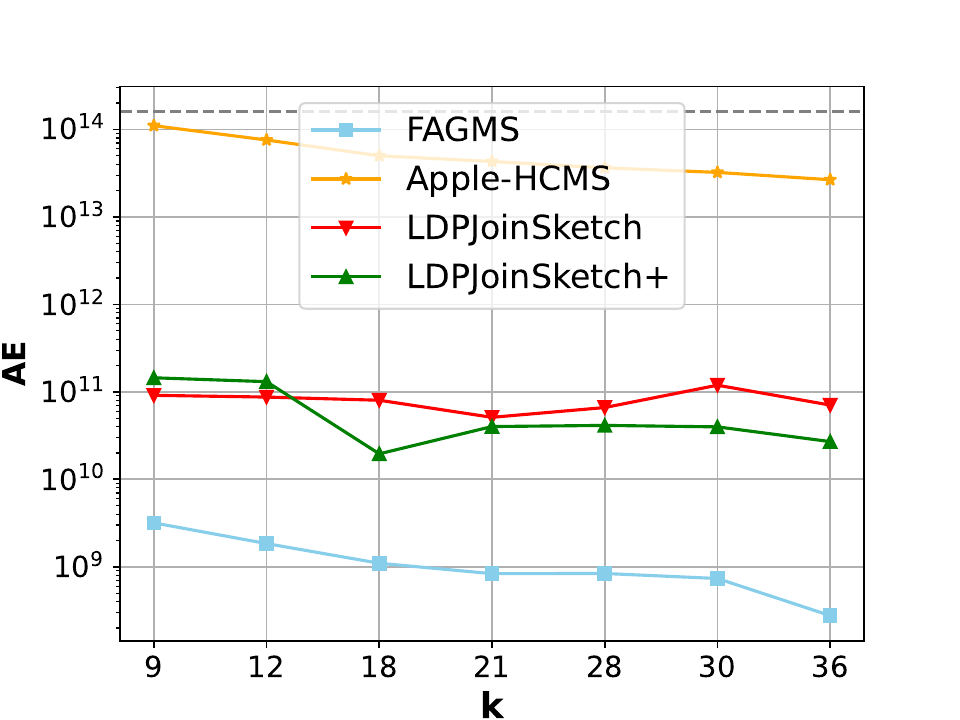}
    \end{minipage}
    \label{fig:k_zipf2_0}
  }
  \subfigure[MovieLens]{
      \centering
      \begin{minipage}[b]{0.22\textwidth}
    \includegraphics[width=1\textwidth, trim={0 0 1.5cm 1cm}, clip]{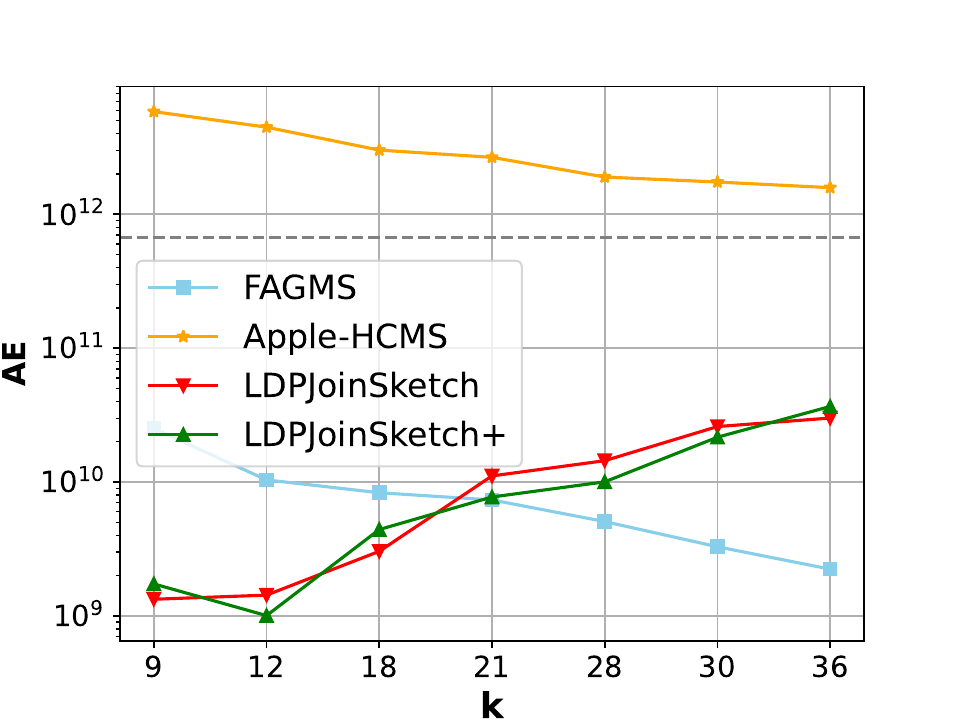}
    \end{minipage}
    \label{fig:k_movielens}
  }
  \subfigure[Twitter]{
      \centering
      \begin{minipage}[b]{0.22\textwidth}
    \includegraphics[width=1\textwidth, trim={0 0 1.5cm 1cm}, clip]{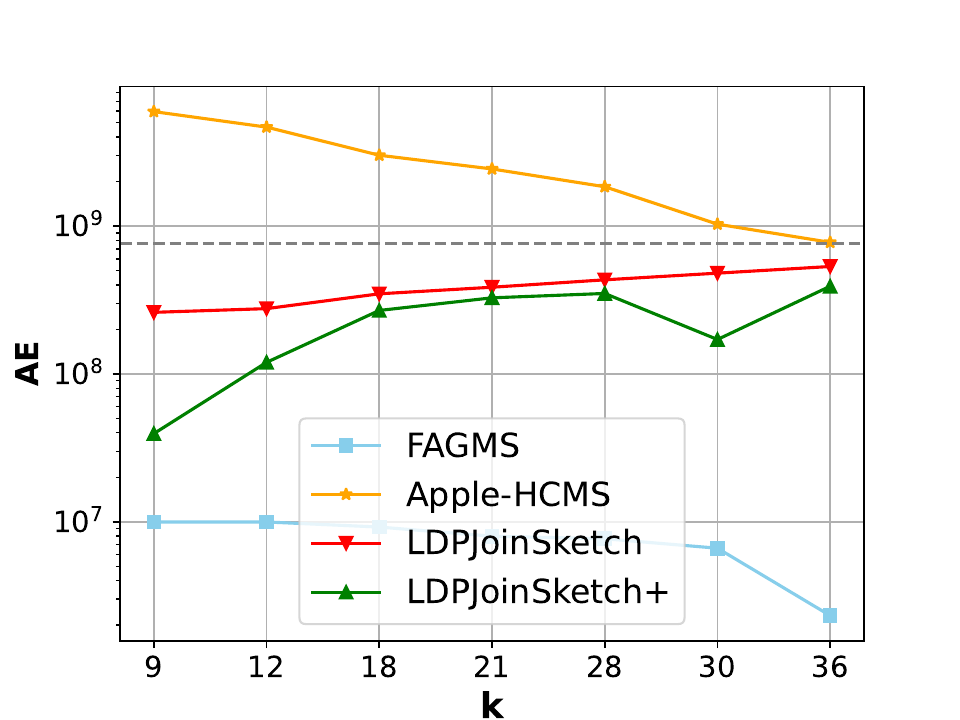}
    \end{minipage}
    \label{fig:k_twitter}
  }
  \caption{Impact of $(m,k)$. (Figures (a)-(d) show impact of $m$, figures (e)-(h) show impact of $k$.)}
\label{fig:k_and_m_impact}
\end{figure*}

\subsection{Impact of parameters}
\noindent \textbf{The impact of privacy budget $\epsilon$}. We report the accuracy of different methods with privacy budget $\epsilon$ varying in $\{0.1,1,...,10\}$ in Fig.~\ref{fig:epsilon_impact}. We set ($k=18$, $m=1024$). Overall, the accuracy improves as $\epsilon$ increases because less noises are introduced.  Our methods perform better when $\epsilon$ is relatively small. 
And LDPJoinSketch+ shows a notable improvement in accuracy on skewed data. We can also find that the error of sketch-based algorithms do not varies much when $\epsilon$ is large.

\noindent \textbf{The impact of sketch parameters $(k,m)$}. Fig.~\ref{fig:k_and_m_impact} show the impact of sketch size on estimation results. First, Fig.~\ref{fig:k_and_m_impact}.(a)-(d) show the impact of hash domain size $m$ on accuracy with a fixed $k=18$. We set $\epsilon=10$ and $r=0.1$. We find that LDPJoinSketch+ has optimal utility on both synthetic and real-world datasets under different $m$.  The error of all methods decreases as $m$ increases, because of less hash collisions.
Second, Fig.~\ref{fig:k_and_m_impact}.(e)-(h)  illustrate the impact of the number of hash functions $k$ with a fixed $m=1024$. As mentioned in theorem~\ref{theorem:error_bound}, the parameter $k$ in our methods indicates the failure probability $\delta$, thus we set $\delta \in \{0.1,0.05,0.01, ..., 0.0001\}$ and correspondingly the number of hash functions is $k \in \{9,12,18,21,28,30,36\}$. Our methods performs best in most cases. For both FAGMS and Apple-HCMS, the errors decrease with a larger $k$. But for our methods, the error does not change much or just slightly increases along with the increase of $k$. The reason is that we only update one counter of the $j$th line in the sketch for each entry, and $j$ is randomly chosen in $[k]$. As $k$ increases, the error caused by sampling increases.

\begin{figure}
  \centering
  \begin{minipage}{0.24\textwidth}
    \centering
  \includegraphics[width=\linewidth, trim={0 0 0 0cm}, clip]{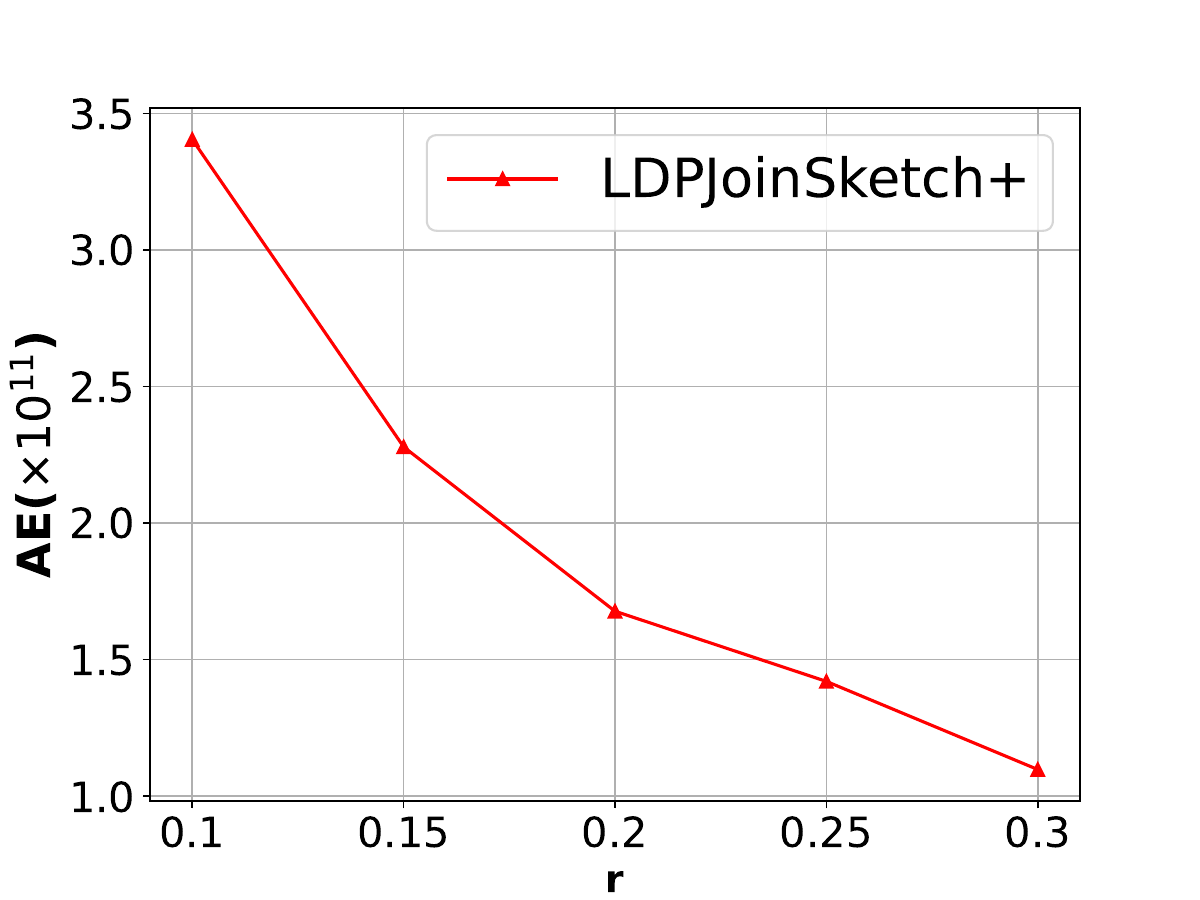}
  \caption{Impact of $r$.}
  \label{fig:sample_rate_impact}
  \end{minipage}
  \hfill
  \begin{minipage}{0.24\textwidth}
  \centering
  \includegraphics[width=\linewidth, trim={0 0 0 1.25cm}, clip]{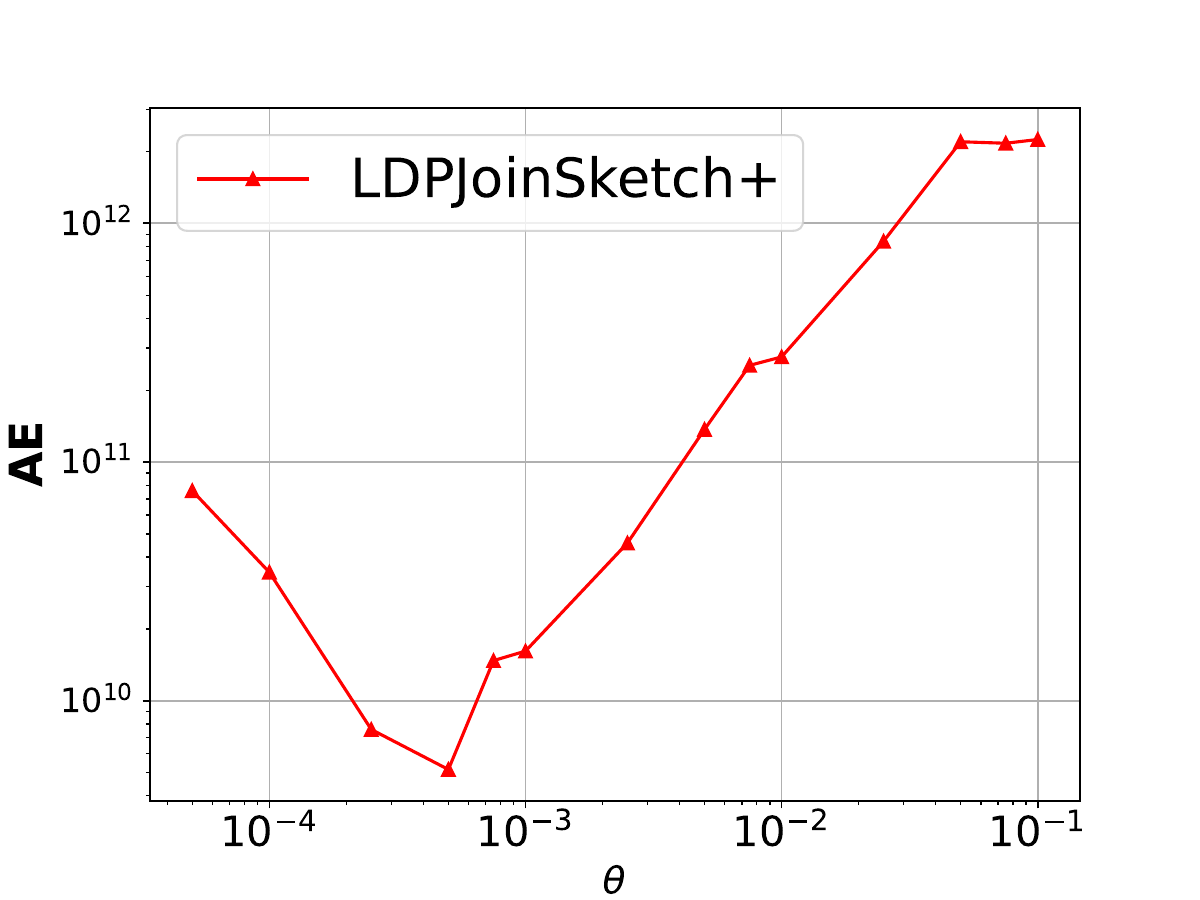}
  \caption{Impact of $\theta$.}
  \label{Fig:threshold}
  \end{minipage}
\end{figure}

\noindent \textbf{The impact of sampling rate $r$}. Fig.~\ref{fig:sample_rate_impact} illustrates the impact of sampling rate $r$ of LDPJoinSketch+ on the accuracy. As the sampling rate in phase 1 affects the frequency estimation accuracy, it consequently influences the FAP in phase 2. We set ($k=18$, $m=1024$), $\epsilon=4$, and vary $r\in \{0.1,0.15,...,0.30\}$ to test the accuracy with Zipf($\alpha$=1.1) dataset. We can observe that the accuracy increases with larger sampling  rate. 

\noindent \textbf{The impact of threshold $\theta$}.  Fig.~\ref{Fig:threshold} shows the results of LDPJoinSketch+ on Zipf($\alpha$=1.1) dataset with threshold $\theta$ ranging from $5\times 10^{-5}$ to 0.1. LDPJoinSketch+ uses $\theta$ to separate the high-frequency and low-frequency items. Here ($k=18$, $m=1024$) and $\epsilon=4$.
On one hand, a smaller $\theta$ causes some low-frequency items will be considered as high-frequency ones, thereby reducing the accuracy of LDPJoinSketch+. On the other hand, a larger  $\theta$ results in fewer items in frequent item set, thus making it less effective in mitigating hash-collision errors. Therefore, it is essential to select an appropriate threshold $\theta$ tailored to the data distribution.

\begin{figure}[hptb]
  \centering
  \begin{minipage}{0.24\textwidth}
    \centering
  \includegraphics[width=\linewidth, trim={0 0 1cm 1cm}, clip]{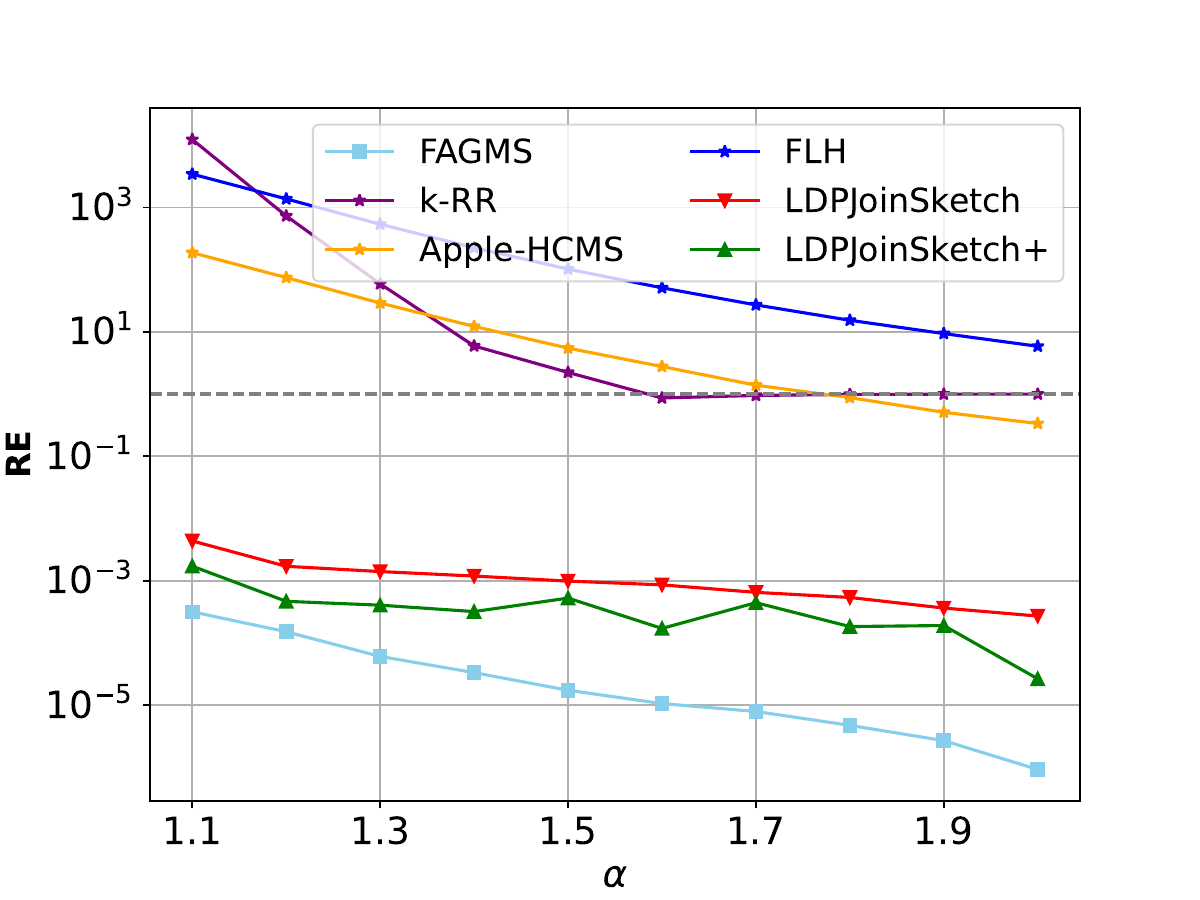}
  \caption{Impact of skewness.}
\label{fig:skewness_impact}
  \end{minipage}
  \hfill
  \begin{minipage}{0.24\textwidth}
    \centering
  \includegraphics[width=\linewidth, trim={0 0 1cm 1cm}, clip]{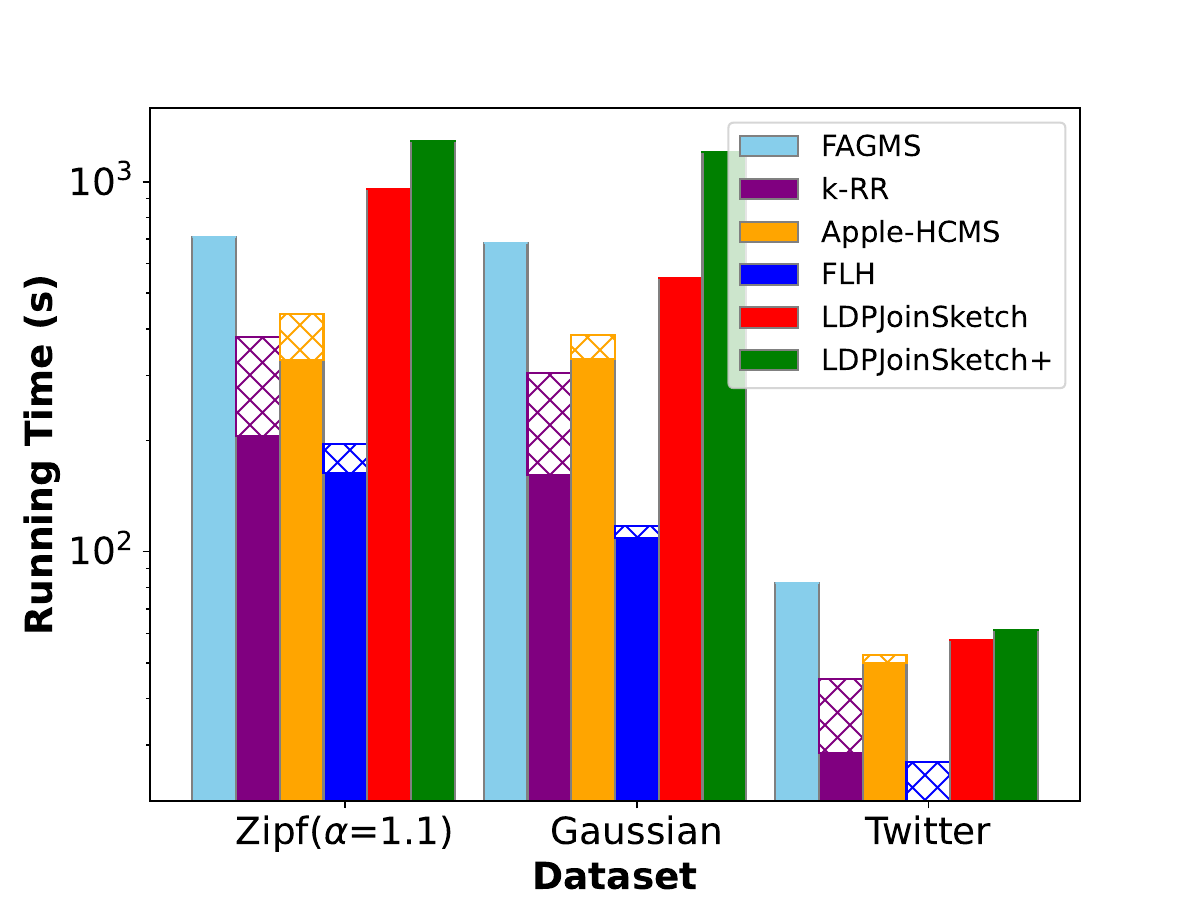}
  \caption{Efficiency}
  \label{Fig:Efficiency}
  \end{minipage}
\end{figure}

\noindent \textbf{The impact of skewness $\alpha$}. Fig.~\ref{fig:skewness_impact} shows the result on Zipf datasets with different level of skewness. We set ($k=18$, $m=1024$) and $\epsilon=4$. 
Our methods achieves the optimal performance with different skewness levels. As the skewness increases, the error of all methods decreases because the true join size increases considerably with larger $\alpha$, meanwhile, the number of distinct items decreases resulting in smaller errors. 




\subsection{Efficiency}
Fig.~\ref{Fig:Efficiency} shows the efficiency of different methods on three datasets. The solid areas in the figure represent off-line running time including collecting perturbed values and constructing sketches, while the grid areas represent online running time of join size estimation. From the figure, we can observe that the online time cost for all sketches-based method is nearly negligible, indicating that the response can be generated quickly after the sketches constructed, which is valuable in interactive settings. Compared with other LDP mechanisms, our algorithms require a little more off-line time, but considering the huge improvement in the accuracy of estimation and online querying, the extra time is well spent.


\begin{figure}[htbp]
  \centering
  \subfigure[Zipf($\alpha$=1.5)]{
      \centering
      \begin{minipage}[b]{0.22\textwidth}
    \includegraphics[width=1\textwidth, trim={0 0 1cm 1cm}, clip]{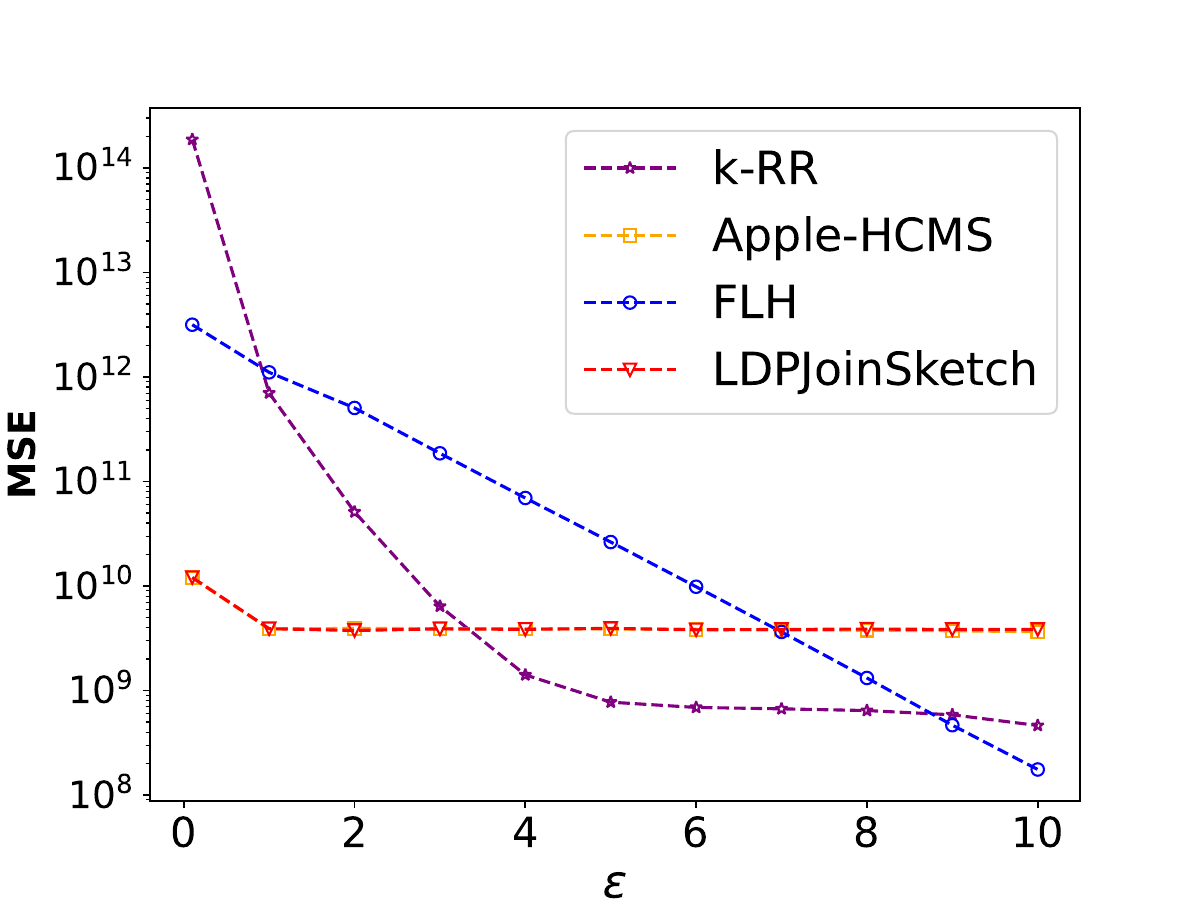}
    \end{minipage}
    \label{fig:mse_zipf}
  }
  \subfigure[MovieLens]{
      \centering
      \begin{minipage}[b]{0.22\textwidth}
    \includegraphics[width=1\textwidth, trim={0 0 1cm 1cm}, clip]{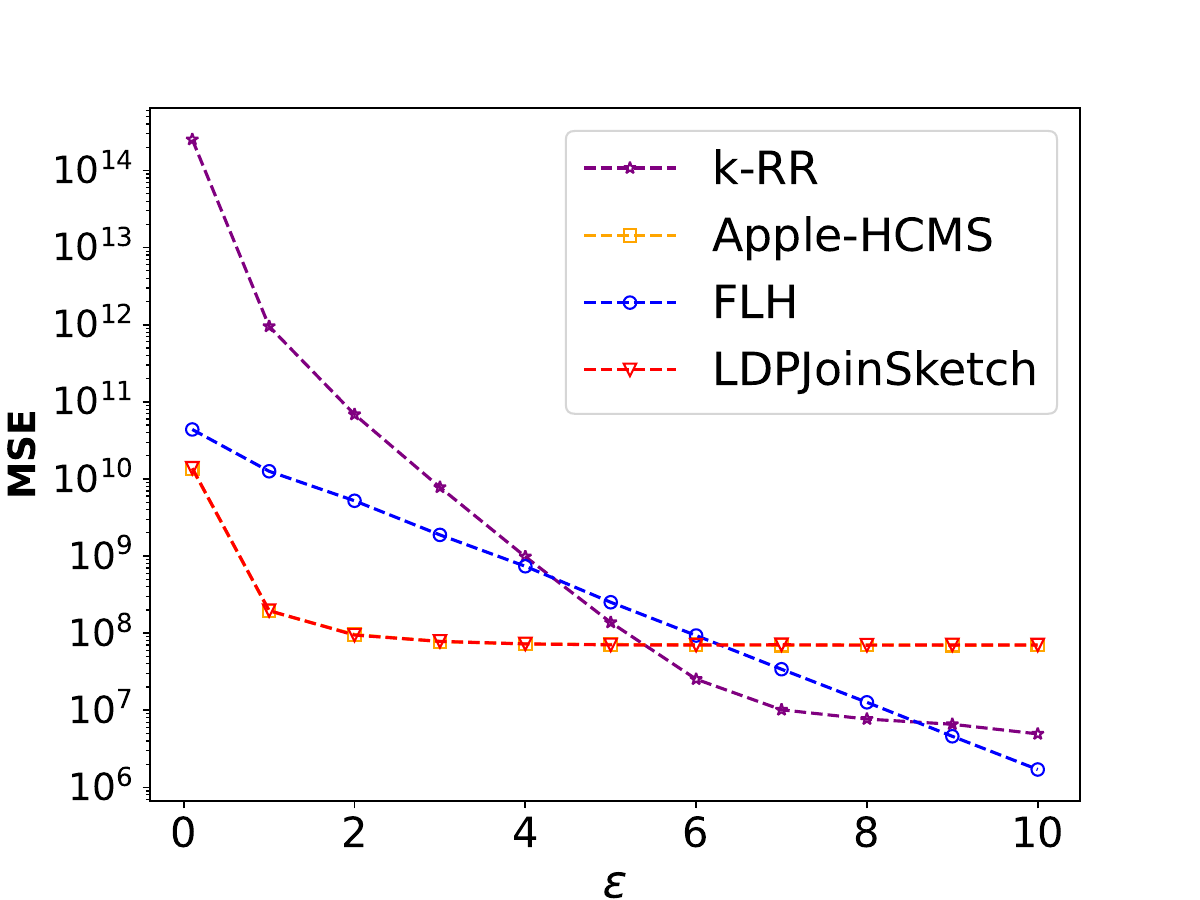}
    \end{minipage}
    \label{fig:mse_movielens}
  }
  \caption{Frequency Estimation.}
\label{Fig:frequency_estimation}
\end{figure}
\subsection{Frequency Estimation}
Fig.~\ref{Fig:frequency_estimation} shows the performance of LDPJoinSketch on frequency estimation compared with other LDP mechanisms. We can learn that LDPJoinSketch has the same accuracy level as Apple-HCMS since both two algorithms have almost identical sketch structures except the additional hash function $\xi$ in LDPJoinSketch. Moreover, LDPJoinSketch is more accurate when $\epsilon$ is small. The error of LDPJoinSketch and Apple-HCMS do not change much when $\epsilon$ reaches a certain value, because the error is mainly introduced by the sketch rather than the privacy budget given a large $\epsilon$. 

\begin{figure}[htpb]
  \centering
  \includegraphics[scale=0.35]{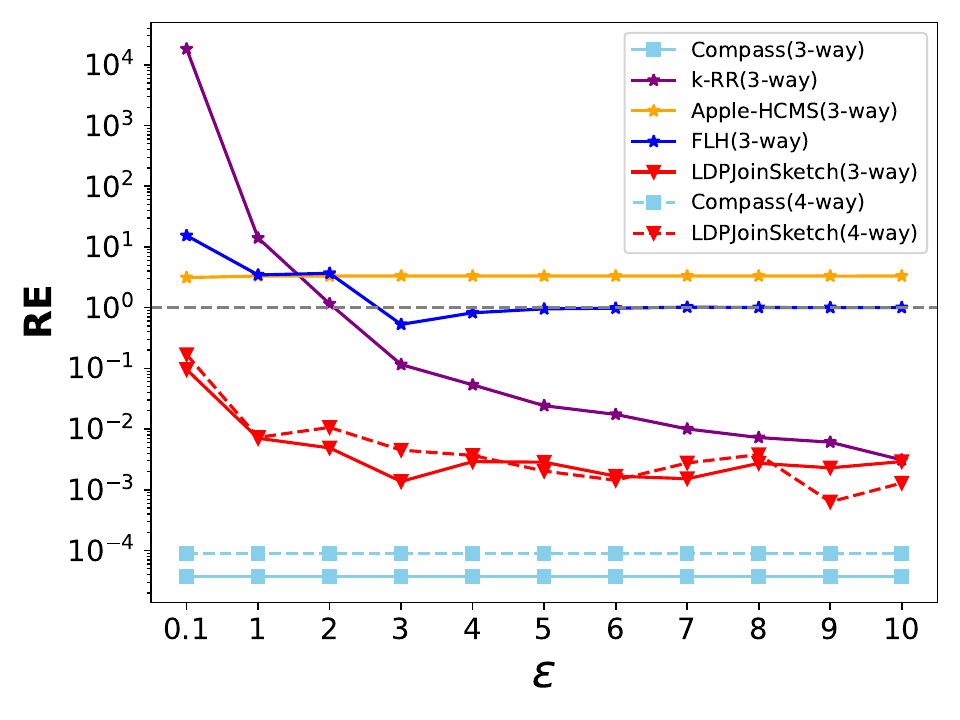}
  \caption{varying $\epsilon$ on Multi-Way Join.}
  \label{Fig:Fig:multi-way eps}
\end{figure}
\subsection{Experiments on Multi-way join}
Fig.~\ref{Fig:Fig:multi-way eps} illustrates the performance of various methods for multi-way chain join on Zipf($\alpha=1.5$) dataset. Here we use 3-way and 4-way chain join queries for evaluation. Due to the extremely high time cost of frequency-based methods such as k-RR, we simply show the performance of Compass and LDPJoinSketch in 4-way join cases. An example of a 3-way join is like $T_1(A)\Join T_2(A,B)\Join T_3(B)$.
 We set ($k=18$, $m=1024$). The dashed lines in the figure represent the 4-way join cases. We can observe that our LDPJoinSketch is still effective and performs the best. Besides, the estimation error decreases continuously as $\epsilon$ increases and eventually becomes stable, because the noise at this point mainly comes from the sampling operation of sketch.


\subsection{Summaries of experimental results.}
\begin{itemize}
  \item [$\bullet$] LDPJoinSketch is more accurate than k-RR, FLH, and Apple-HCMS for join size estimation on sensitive data. 

  \item [$\bullet$] Our improved method LDPJoinSketch+ further decreases the error introduced by hash collisions through frequency-aware perturbation mechanism.

  \item [$\bullet$] Both LDPJoinSketch and LDPJoinSketch+ we proposed are better suited for large datasets whose join values are sensitive and have large domains. 
\end{itemize}

\section{Conclusion}\label{sec:Conclusion}
This paper proposes two sketch-based methods, namely LDPJoinSketch and LDPJoinSketch+, for join size estimation under LDP. LDPJoinSketch is tailored to handle private join values with large domains, while LDPJoinSketch+ enhances accuracy by mitigating hash collisions within the LDP framework. Join size estimation, exemplified by the aggregation function ``COUNT'', serves as an initial focal point. Looking ahead, our future research will focus on addressing generate join aggregation queries under local differential privacy, demanding additional diligence and investigation.

\ifCLASSOPTIONcompsoc
  \section*{Acknowledgments}
\else
  \section*{Acknowledgment}
\fi

This research was partially supported by the NSFC grant 62202113; GuangDong Basic and Applied Basic Research Foundation SL2022A04J01306; Open Project of Jiangsu Province Big Data Intelligent Engineering Laboratory SDGC2229; the Guangzhou Science and Technology Plan Project (No. 2023A03J0119); National Key Research and Development Program of China (No. 2022YFB3104100).

\bibliographystyle{IEEEtran}
\bibliography{mybibfile}

\end{document}